\newtheorem{thm}{Theorem}[section]
\newtheorem{lem}[thm]{Lemma}
\newtheorem{assum}[thm]{Assumption}
\newtheorem{proposition}[thm]{Proposition}
\newtheorem{remark}[thm]{Remark}
\newtheorem{definition}[thm]{Definition}
\newcommand\reallywidehat[1]{%
\savestack{\tmpbox}{\stretchto{%
  \scaleto{%
    \scalerel*[\widthof{\ensuremath{#1}}]{\kern-.6pt\bigwedge\kern-.6pt}%
    {\rule[-\textheight/2]{1ex}{\textheight}}
  }{\textheight}%
}{0.5ex}}%
\stackon[1pt]{#1}{\tmpbox}%
}
\renewcommand{\paragraph}{%
  \@startsection{paragraph}{4}%
  {\z@}{1.25ex \@plus 1ex \@minus .2ex}{-1em}%
  {\normalfont\normalsize\bfseries}%
}
\newcommand*{\rom}[1]{\expandafter\@slowromancap\romannumeral #1@}
\newcommand{\abs}[1]{\left|#1\right|}
\newcommand{\EfN}{EfN }
\DeclareMathOperator*{\argmax}{arg\,max}
\newcommand{\calD}{{\cal D}}
\newcommand{\calF}{{\cal F}}
\newcommand{\calL}{{\cal L}}
\newcommand{\calN}{{\cal N}}
\newcommand{\calP}{{\cal P}}
\newcommand{\be}{\begin{equation}}
\newcommand{\ee}{\end{equation}}
\newcommand{\beqna}{\begin{eqnarray}}
\newcommand{\eeqna}{\end{eqnarray}}
\newcommand{\p}[1]{\left(#1\right)}
\newcommand{\pp}[1]{\left[#1\right]}
\newcommand{\ppp}[1]{\left\{#1\right\}}
\newcommand{\norm}[1]{\left\|#1\right\|}
\newcommand{\s}[1]{\mathsf{#1}}
\numberwithin{equation}{section}
\begin{document}

\title{Einstein from Noise: Statistical Analysis}
\author[1]{Amnon Balanov\thanks{Corresponding author: \url{amnonba15@gmail.com}}}
\author[1]{Wasim Huleihel}
\author[1]{Tamir Bendory}

\affil[1]{\normalsize School of Electrical and Computer Engineering, Tel Aviv University, Tel Aviv 69978, Israel}

\maketitle

\begin{abstract}
``Einstein from noise" (EfN) is a prominent example of the model bias phenomenon, where systematic errors in the statistical model lead to spurious but consistent estimates. In the \EfN experiment, one falsely believes that a set of observations contains noisy, shifted copies of a template signal (e.g., an Einstein image), whereas in reality, it contains only pure noise observations. To estimate the signal, the observations are first aligned with the template using cross-correlation and then averaged. Although the observations contain nothing but noise, it was recognized early on that this process produces a signal that resembles the template signal! This model bias pitfall was at the heart of a central scientific controversy about validation techniques in structural biology.

This paper provides a comprehensive statistical analysis of the \EfN phenomenon above. We show that the Fourier phases of the \EfN estimator (namely, the average of the aligned noise observations) converge to the Fourier phases of the template signal, thereby explaining the observed structural similarity. Additionally, we prove that the convergence rate of Fourier phases is inversely proportional to the number of noise observations and, in the high-dimensional regime, to the Fourier magnitudes of the template signal. Moreover, in the high-dimensional regime, the \EfN estimator converges to a scaled version of the template signal. 
This work not only deepens the theoretical understanding of the EfN phenomenon but also highlights potential pitfalls in template matching techniques and emphasizes the need for careful interpretation of noisy observations across disciplines in engineering, statistics, physics, and biology.
\end{abstract}

\newpage
\tableofcontents
\newpage

\section{Introduction}
Model bias is a fundamental pitfall arising across a broad range of statistical problems, leading to consistent but inaccurate estimations due to systematic errors in the model. 
This paper focuses on the \textit{Einstein from Noise} (EfN) experiment: a prototype example of model bias that appears in template matching techniques. Consider a scenario where scientists acquire observational data and genuinely believe their observations contain noisy, shifted copies of a known template signal. However, in reality, their data consists of pure noise with no actual signal present.

To estimate the (absent) signal, the scientists align each observation by cross-correlating it with the template and then average the aligned observations. Remarkably, empirical evidence has shown, multiple times, that the reconstructed structure from this process is structurally similar to the template, even when all the measurements are pure noise~\cite{henderson2013avoiding,shatsky2009method,sigworth1998maximum}. This phenomenon stands in striking contrast to the prediction of the unbiased model, that averaging pure noise signals would converge towards a signal of zeros, as the number of noisy observations diverges. Thus, the above \EfN estimation procedure is biased towards the template signal.

While the \EfN phenomenon has been analyzed in prior work (see Section~\ref{sec:cryoEM} for more details), a comprehensive theoretical understanding of the \EfN model remains limited. This work contributes to filling that gap by rigorously analyzing the relationship between the reconstructed signal and the underlying template.
The term 'Einstein from Noise' was popularized in~\cite{shatsky2009method}, where the authors illustrated the phenomenon using an image of Einstein as the template signal. However, the underlying effect had been observed earlier in the cryogenic electron microscope literature (see, for instance,~\cite{sigworth1998maximum}, and further details in Section \ref{sec:cryoEM}). In this work, we refer to the average of the aligned pure noise observations as the \emph{\EfN estimator}. A detailed formulation of the problem is provided in Section~\ref{sec:problem_formulation}.

Figure~\ref{fig:1} illustrates the \EfN process, which consists of two key stages. First, the observations are aligned with the template signal to achieve optimal alignment. Then, the aligned observations are averaged. The result is the \EfN estimator, which shares a structural resemblance to the template image, though it is not an identical reproduction.
 
\begin{figure}[t!]
    \centering
    \includegraphics[width=0.9\linewidth]{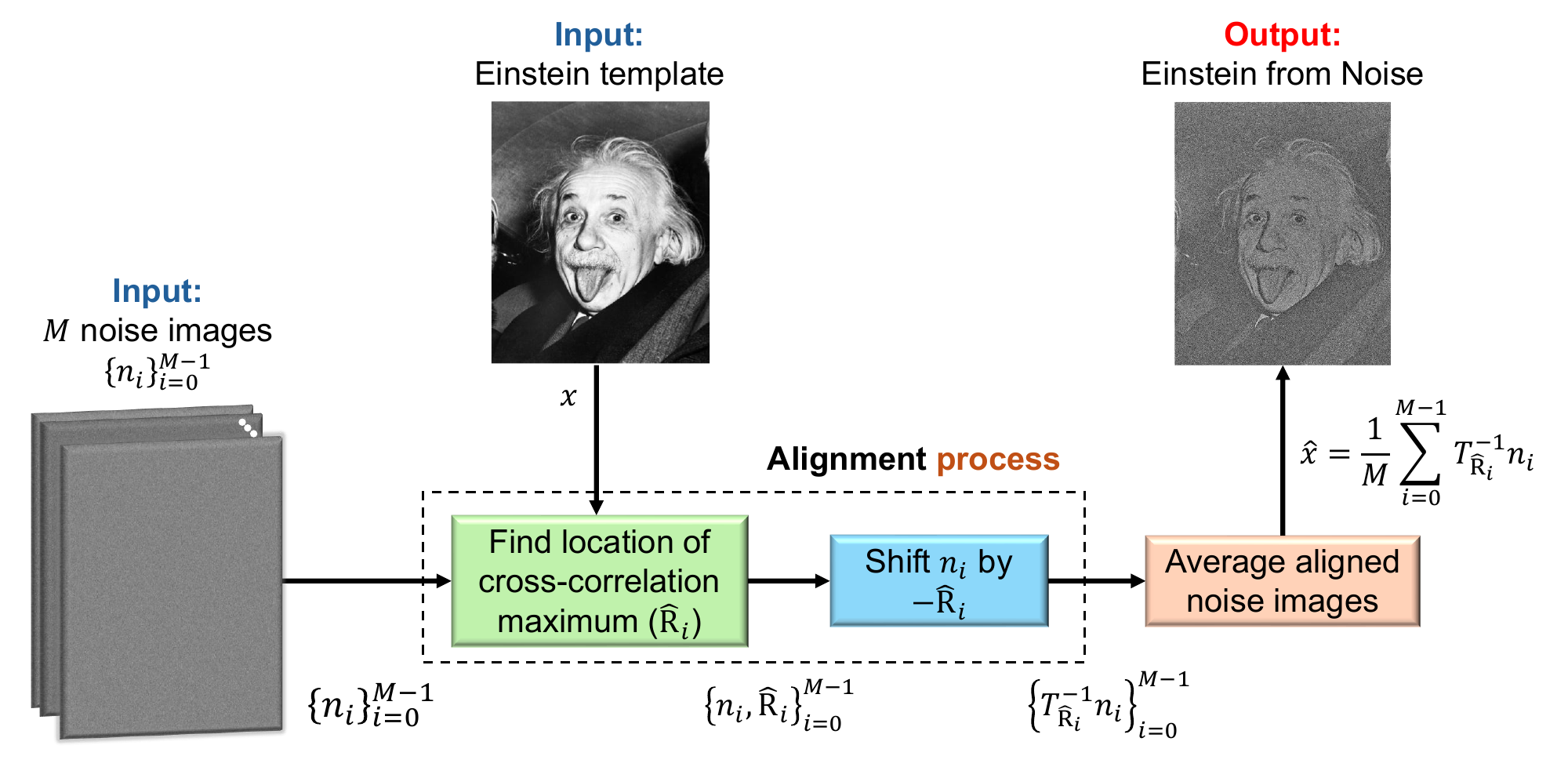}
    \caption{\textbf{Einstein from Noise.} The \EfN estimator consists of three stages: (1) finding the index of the maximum of the cross-correlation ($\hat{\s{R}}_i$) between the $i$-th noise signal ($n_i$) and the template signal (e.g., Einstein's image); (2) cyclically shifting the noise signal by $-\hat{\s{R}}_i$; (3) averaging the shifted noise signals. In this paper, we characterize the relationship between the output of this process---\emph{the \EfN estimator}---and the template signal.}
    \label{fig:1}
\end{figure}

\paragraph{Main results.} 

The central results of this work are as follows. Our first result, stated in Theorem~\ref{thm:1}, shows that the Fourier phases of the \EfN estimator converge to the Fourier phases of the template signal, as the number of noisy observations (denoted by $M$) converges to infinity. However, it is important to note that the \EfN estimator's \textit{Fourier magnitudes} do not necessarily converge to those of the template signal. We also show that the Fourier phases’ mean squared error (MSE) decays to zero with
a rate of $1/M$. Since the Fourier phases are responsible for the formation of geometrical image elements, such as contours and edges~\cite{oppenheim1981importance, shechtman2015phase}, this clarifies why the resulting \EfN estimator image exhibits a structural similarity to the template, but not necessarily a full recovery. 
Our second result, stated in Theorem~\ref{thm:2}, proves that in the high-dimensional regime, where the dimension of the signal diverges, the convergence rate of the Fourier phases is inversely proportional to the square of the Fourier magnitudes of the template signal. In this case, the Fourier magnitudes of the \EfN estimator converge to a scaled version of the template's Fourier magnitudes.

While Theorems~\ref{thm:1} and~\ref{thm:2} are proved under the assumption of white Gaussian noise, we also extend our analysis to more general noise models. 
In particular, we show that, although the convergence results in Theorems~\ref{thm:1} and~\ref{thm:2} do not necessarily hold under arbitrary noise statistics, several structural properties of the \EfN estimator persist.
First, in Proposition~\ref{prop:positiveCorrelation}, we show that the \EfN estimator remains positively correlated with the template for arbitrary noise statistics, even when the Fourier phases do not converge. Since the correlation between images often implies visual resemblance, this explains why the \EfN estimator still exhibits structural similarity to the template.
Second, in Theorem~\ref{thm:highDimentionalNoiseExtention}, we show that in the high-dimensional limit, if the noise signal is independent and identically distributed (i.i.d) (not necessarily Gaussian), then the same phase convergence behavior observed in the white Gaussian case still holds.
Finally, in Proposition~\ref{prop:circulantGauusianNoise}, we demonstrate that if the noise signal is Gaussian with circular symmetry, then the conclusions of Theorem~\ref{thm:1} remain valid, even though the noise is not white.

\paragraph{Organization.} 

The remainder of this paper is organized as follows. Section~\ref{sec:problem_formulation} provides a detailed formulation of the problem. Section \ref{sec:cryoEM} discusses the connection between the \EfN problem and single-particle cryo-electron microscopy (cryo-EM), the primary motivation for this work, and presents supporting empirical demonstrations. Our main theoretical results for white Gaussian noise observations, Theorems~\ref{thm:1} and~\ref{thm:2}, are stated in Section~\ref{sec:main_results}.
Extensions of these results to noise models beyond white Gaussian noise are presented in Section~\ref{sec:extenstionToOtherNoise}. Finally, we conclude with a discussion and outlook in Section~\ref{sec:outlook}.

\section{Problem Formulation and Notation}
\label{sec:problem_formulation}

This section outlines the probabilistic model behind the \EfN experiment and delineates our main mathematical objectives. 
Although the \EfN phenomenon is described typically for images, we will formulate and analyze it for one-dimensional signals, bearing in mind that the extension to two-dimensional images is straightforward (see Section~\ref{sec:outlook} for more details). 

\paragraph{Notations.} Throughout the rest of this paper, we use $\xrightarrow[]{\calD}$, $\xrightarrow[]{\calP}$, $\xrightarrow[]{\s{a.s.}}$, and $\xrightarrow[]{\calL^p}$, to denote the convergence of sequences of random variables in distribution, in probability, almost surely, and in $\calL^p$ norm, respectively.
Inner products in the Euclidean space between vectors $a$ and $b$ are  written as either $a^\top b$ or $\langle a, b \rangle$.

\paragraph{Problem formulation.}
Consider a scenario where scientists collect a series of observations under the belief that each observation is a noisy, randomly shifted version of a known \emph{template signal} $x \in \mathbb{R}^d$ (for example, an image of Einstein). Formally, the assumed postulated data model is given by:
\begin{align} \label{eqn:calQModel}
    \text{(Postulated model)} \quad y_i = \mathcal{T}_{\ell_i} \cdot x + n_i,
\end{align}
where $\mathcal{T}_\ell : \mathbb{R}^d \to \mathbb{R}^d$ is the cyclic shift operator defined by $[\mathcal{T}_\ell z]_r \triangleq z_{(r - \ell) \bmod d}$ for all $z \in \mathbb{R}^d$ and indices $0 \leq r \leq d-1$, and $n_i \sim \mathcal{N}(0, \sigma^2 I_{d \times d})$ are i.i.d. Gaussian noise vectors.
In reality, however, there is no underlying signal: the observations consist entirely of white Gaussian noise. That is, the true data-generating process follows the underlying model:
\begin{align} \label{eqn:calPModel}
    \text{(Underlying model)} \quad y_0, y_1, \ldots, y_{M-1} \stackrel{\text{i.i.d.}}{\sim} \mathcal{N}(\mathbf{0}, \sigma^2 I_{d \times d}),
\end{align}
where $M$ denotes the number of observations. Since the data consists purely of white Gaussian noise, we will explicitly write $y_i = n_i$ to emphasize this fact.

To estimate the (nonexistent) signal, the scientists align each observation to the template $x$ using cross-correlation, and then average the aligned observations. Specifically, for each $i = 0, \ldots, M-1$, they compute the shift that maximizes the inner product with the template:
\begin{align} \label{eqn:OptShiftRealSpace}
    \hat{\s{R}}_i \triangleq \underset{0 \leq \ell < d}{\arg\max} \, \langle n_i, \mathcal{T}_\ell x \rangle,
\end{align}
where $n_i$ is the $i$-th noise observation, and $\s{\hat{R}}_i$ defines the optimal cyclic shift that aligns the template signal $x$ with the noise observation $n_i$ in terms of cross-correlation.

Then, the \EfN estimator is given by the average of the noise observations, but each is first aligned according to the above maximal shifts, i.e.,
\begin{align}
    \hat{x}\triangleq \frac{1}{M} \sum_{i=0}^{M-1} {\mathcal{T}_{\s{-\hat{R}}_i}} n_i, \label{eqn:efnEstimatorRealSpace}
\end{align}
where $\mathcal{T}_{\s{-\hat{R}}_i} n_i$ represents the noise observation $n_i$ aligned by applying the inverse cyclic shift $-\s{\hat{R}}_i$ to best match the template signal. Throughout the text, we refer to $\hat{x}$ as the \EfN estimator.

The \EfN phenomenon states that, at least empirically, $\hat{x}$ and $x$ appear ``close" in some sense; our goal is to understand this phenomenon mathematically. To that end, we will consider the two asymptotic regimes: the first corresponds to the classical setting where the number of observations $M \to \infty$ while the dimension $d$ is fixed (i.e., the number of observations diverges to infinity and the template vector dimension is fixed); the second is the high-dimensional regime, where $d \to \infty$ after $M \to \infty$. (i.e., both the number of observations and the dimension of the template signal diverge).


\paragraph{Fourier space notation.}
As will become clear in the next sections, it is convenient to work in the Fourier domain. Let $\phi_{\s{Z}}\triangleq\sphericalangle\s{Z}$ denote the phase of a complex number $\s{Z}\in\mathbb{C}$, and recall that the discrete Fourier transform (DFT) of a $d$-length signal $y\in\mathbb{R}^d$ is given by,
\begin{align}
    \s{Y}[k] \triangleq \calF\ppp{y} = \frac{1}{\sqrt{d}}\sum_{\ell=0}^{d-1}y_\ell e^{-j\frac{2\pi}{d}k\ell},
    \label{eqn:def-DFT}
\end{align}
where $j\triangleq\sqrt{-1}$, and $0\leq k\leq d-1$. Accordingly, we let $\s{X}$, $\hat{\s{X}}$, and $\s{N}_i$, denote the DFTs of $x$, $\hat{x}$, and $n_i$, respectively, for $0\leq i\leq M-1$.  These DFT sequences can be equivalently represented in the magnitude-phase domain as follows, 
\begin{equation}\label{FourierSpace}
		\s{X} = \{\abs{\s{X}[k]}e^{j\phi_{\s{X}}[k]}\}_{k=0}^{d-1}, \quad 
        \hat{\s{X}}  = \{|\hat{\s{X}}[k]|e^{j\phi_{\s{\hat{X}}}[k]}\}_{k=0}^{d-1}, \quad 
        \s{N}_i = \{\abs{\s{N}_i[k]} e^{j\phi_{\s{N}_i}[k]}\}_{k=0}^{d-1},
\end{equation}
for $0\leq i\leq M-1$, where $\abs{\s{X}[k]}$, $\abs{\s{\hat{X}}[k]}$, and $\abs{\s{N}_i[k]}$ are the $k$-th Fourier component magnitudes of the template signal, the \EfN\ estimator, and the $i$-th noise observation, respectively. Similarly, $\phi_{\s{X}}[k]$, $\phi_{\s{\hat{X}}}[k]$, and $\phi_{\s{N}_i}[k]$ represent the corresponding $k$-th Fourier phases. Note that the random variables $\ppp{|\s{N}_i[k]|}_{k=0}^{d/2}$ and $\ppp{\phi_{\s{N}_i}[k]}_{k=0}^{d/2}$ are two independent sequences of i.i.d. random variables, such that, $|\s{N}_i[k]| \sim \s{Rayleigh} \left({\sigma^2}\right)$ has Rayleigh distribution, and the phase $\phi_{\s{N}_i}[k] \sim \s{Unif}[-\pi,\pi)$ is uniformly distributed over $[-\pi,\pi)$. 

With the definitions above, we can express the estimation process in the Fourier domain. Since a shift in real-space corresponds to a linear phase shift in the Fourier space, it follows that,
    \begin{align}
        \hat{\s{X}}[k] & =  \frac{1}{M} \sum_{i=0}^{M-1} \abs{\s{N}_i[k]} e^{j\phi_{\s{N}_i}[k]} e^{j\frac{2\pi k}{d}\s{\hat{R}}_i},
        \label{eqn:estimatorFourierRepresentation_pre}
    \end{align}
for $k=0,1,\ldots,d-1$, where $ \abs{\s{N}_i[k]} $ and $ \phi_{\s{N}_i}[k] $ are defined in \eqref{FourierSpace}. It is important to note that the expression above will converge to zero without the last term that captures the dependency in $\s{\hat{R}}_i$---the location of the maximum correlation. This term reflects the fundamental properties of the \EfN process and its dependency on the template signal, as well as the connections between the different spectral components.
We denote by $\mathbb{E} {\abs{\phi_{\s{\hat{X}}}[k] - \phi_{\s{X}}[k]}^2}$ the MSE of the Fourier phases of the $k$-th spectral component.

\paragraph{Assumptions.}
Throughout this paper, we assume that the template signal $x$ is normalized, i.e., $\norm{x}_2^2 = 1$, where $\norm{\cdot}_2$ is the Euclidean norm, and further assume that its Fourier transform in non-vanishing, except possibly at the DC (zero-frequency) component. The first assumption is used for convenience and does not alter (up to a normalization factor) our main results in Theorems \ref{thm:1} and \ref{thm:2}. The second assumption is essential for the theoretical analysis of the \EfN process and is expected to hold in many applications, including cryo-EM. A similar assumption is frequently taken in related work, e.g., \cite{bandeira2023estimation,perry2019sample,bendory2017bispectrum}.
It is worth noting that since the Fourier transform of $x$ is assumed to be non-vanishing, the maximizing shift $\hat{\s{R}}_i$ in~\eqref{eqn:OptShiftRealSpace} is almost surely unique. In addition, without loss of generality, we assume that the signal length $d$ is even.

\section{Cryo-EM and Empirical Demonstration} \label{sec:cryoEM}
Cryo-EM is a powerful tool of modern structural biology, offering advanced methods to visualize complex biological macromolecules with ever-increasing precision. One of its central advantages lies in its capability to resolve the structures of proteins that are hard to crystallize in traditional methods, especially in a near-physiological environment, see e.g.,~\cite{nogales2016development,singer2020computational}. This advantage enables researchers to delve into the dynamic behaviors of proteins and their complexes, shedding light on fundamental biological processes. 

Single-particle cryo-EM uses electron microscopy to reconstruct 3D structures from 2D tomographic projection images~\cite{bendory2020single}. Typically, the 3D reconstruction involves two main steps: detecting and extracting single particle images using a particle picking algorithm,~\cite{scheres2015semi,heimowitz2018apple,bepler2019positive,eldar2024object}, and then reconstructing the 3D density map~\cite{scheres2012relion,punjani2017cryosparc}. Most detection algorithms use template-matching techniques, which can introduce bias if improper templates are chosen, especially in low signal-to-noise ratio (SNR) conditions, which is the standard scenario in cryo-EM. 

\paragraph{The \EfN controversy.} 
A publication of the 3D structure of an HIV molecule in PNAS in 2013~\cite{mao2013molecular} initiated a fundamental controversy about validation techniques within the cryo-EM community, published as four follow-up PNAS publications~\cite{henderson2013avoiding, van2013finding, subramaniam2013structure, mao2013reply}. 
The \EfN pitfall played a central role in this discussion.
 The primary question of the discussion was whether the collected datasets contained informative biological data or merely pure noise images. The core of the debate emphasized the importance of exercising caution and implementing cross-validation techniques when fitting data to a predefined model.
 This precautionary approach aims to mitigate the risk of erroneous fittings, which could ultimately lead to inaccuracies in 3D density map reconstruction. 
 Model bias is still a fundamental problem in cryo-EM, as highlighted by an ongoing debate concerning validation tools, see for example,~\cite{stewart2004noise, shatsky2009method, henderson2012outcome,cohen2013high, cossio2020need, heymann2015validation,kleywegt2024community,sorzano2022image}.

\paragraph{Empirical demonstration.}
As introduced in the previous section, the \EfN phenomenon depends on several key parameters: (1) the number of observations, denoted by $M$; (2) the dimension of the signal, denoted as $d$ (for example, the number of pixels in Einstein's image); and (3) the statistical properties of the template signal, and in particular its power spectral density (PSD). To demonstrate the dependency on these parameters and provide insight into our main results, Figures \ref{fig:2} and \ref{fig:3} show the convergence of the \EfN estimator. 
Both figures were generated according to the procedure outlined in Section \ref{sec:problem_formulation}. When referring to Monte Carlo trials, it means that the \EfN estimator procedure, as specified in \eqref{eqn:efnEstimatorRealSpace}, was executed multiple times (the number of Monte-Carlo trials), each trial with fresh data.

Figure \ref{fig:2} shows the \EfN estimator as a function of $M$. Figure~\ref{fig:2}(a) shows that as $M$ increases, the \EfN estimator becomes more structurally similar to the template Einstein image. Indeed, Figure~\ref{fig:2}(b) shows that as the number of observations $M$ grows, the MSE between the Fourier phases of the template image and the corresponding Fourier phases of \EfN estimator decreases. Figure~\ref{fig:2}(c) highlights that the convergence rate is proportional to $1/M$, with a faster convergence rate for stronger spectral components. 

Figure \ref{fig:3} illustrates the impact of the template signal's PSD on the cross-correlation between the template signal and the \EfN estimator. Notably, a flatter PSD (i.e., a faster decay of the auto-correlation) leads to a higher correlation between the template and the estimator signals. These empirical results are proved theoretically in Theorems~\ref{thm:1} and \ref{thm:2}.

\paragraph{More applications.}
The \EfN phenomenon extends to various applications employing template matching, whether through a feature-based or direct template-based approach. For instance, template matching holds significance in computational anatomy, where it aids in discovering unknown diffeomorphism to align a template image with a target image \cite{christensen1996deformable}. Other areas include medical imaging processing \cite{aberneithy2007automatic}, manufacturing quality control \cite{aksoy2004industrial}, and navigation systems for mobile robots \cite{kyriacou2005vision}. 
This pitfall may also arise in the feature-based approach, which relies on extracting image features like shapes, textures, and colors to match a target image by neural networks and deep-learning classifiers \cite{zhang2018unreasonable, moscovich2022cross, talmi2017template, li2005fast}.

\begin{figure}[!t]
    \centering
    \includegraphics[width=0.9\linewidth]{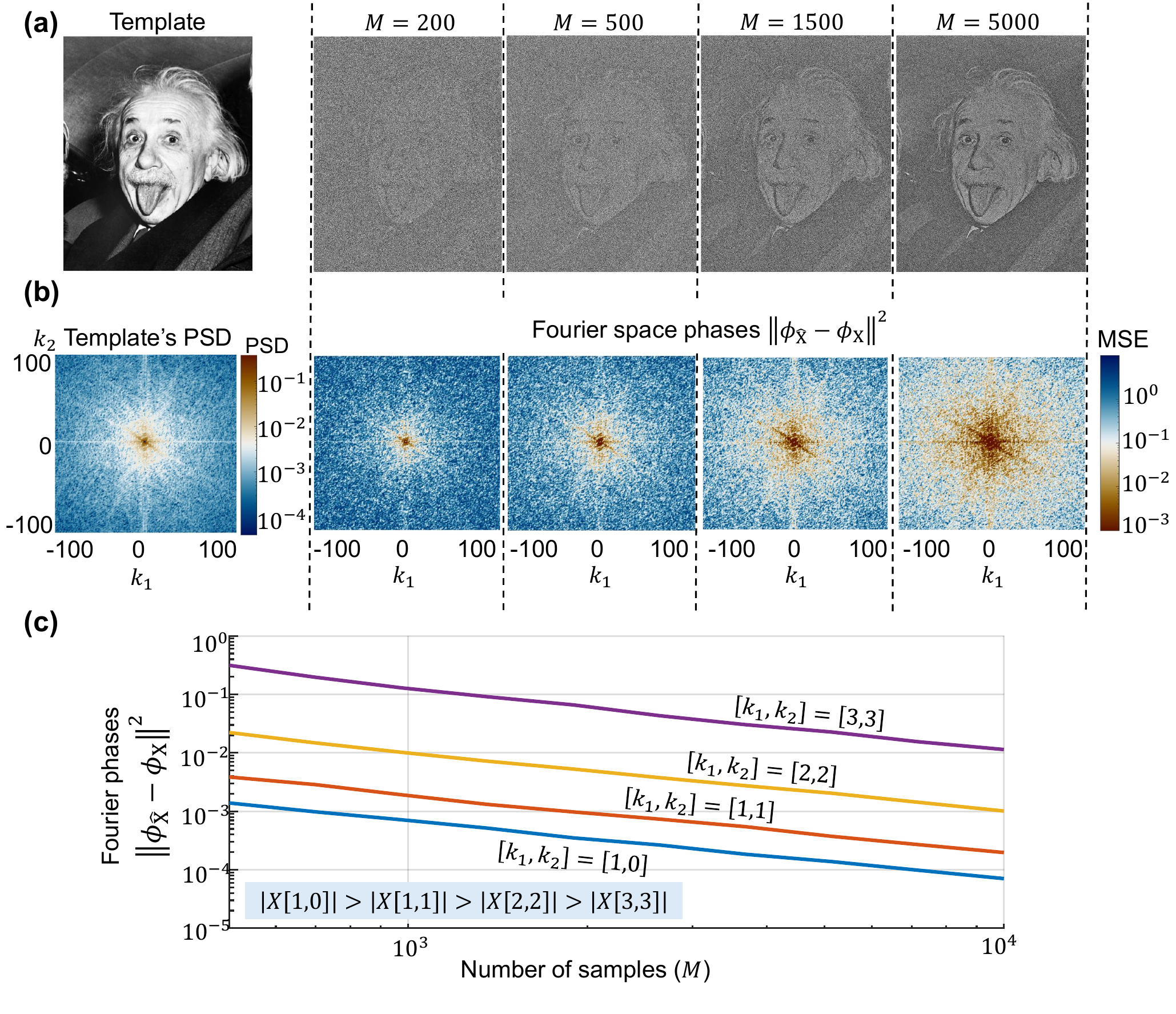}
    \caption{\textbf{The impact of the number of noise observations on the \EfN estimator.} The \EfN estimator is defined in real space by \eqref{eqn:efnEstimatorRealSpace} and in Fourier space by \eqref{eqn:estimatorFourierRepresentation_pre}. \textbf{(a)} The structural similarity between the \EfN estimator and the template image increases as a function of the number of noise observations ($M$). 
    \textbf{(b)} The mean-square-error (MSE) between the Fourier phases of the template image $\s{X}[k_1, k_2]$ and the \EfN estimator $\hat{\s{X}}[k_1, k_2]$ for $-100 \leq k_1, k_2 \leq 100$, where $k_1, k_2$ are the indices of the 2D DFT. The colors in the left panel in (b) represent the power spectral density (PSD) of the Einstein image, while the colors in the four right panels represent the MSE between the Fourier phases of the Einstein image and the \EfN estimator, for each spectral component,  with a varying number of observations ($M = 200, 500, 1500, 5000$). An increase in the number of observations leads to a lower MSE of the Fourier phases between the \EfN estimator and the template signal. A similar trend can be seen with respect to the strength of the spectral components, i.e., stronger spectral components lead to lower Fourier phases MSE.
    \textbf{(c)} The convergence rate of the MSE between the Fourier phases of the \EfN estimator and the Fourier phases of the template signal as a function of the number of observations across different frequencies. 
    The MSE decays as $1/M$.
    In addition, stronger spectral components lead to lower MSE. Figures \textbf{(b)} and \textbf{(c)} were generated through 200 Monte-Carlo trials of the \EfN process defined in \eqref{eqn:efnEstimatorRealSpace}.}
    \label{fig:2}
\end{figure}

\begin{figure}[t!]
    \centering
    \includegraphics[width=0.9\linewidth]{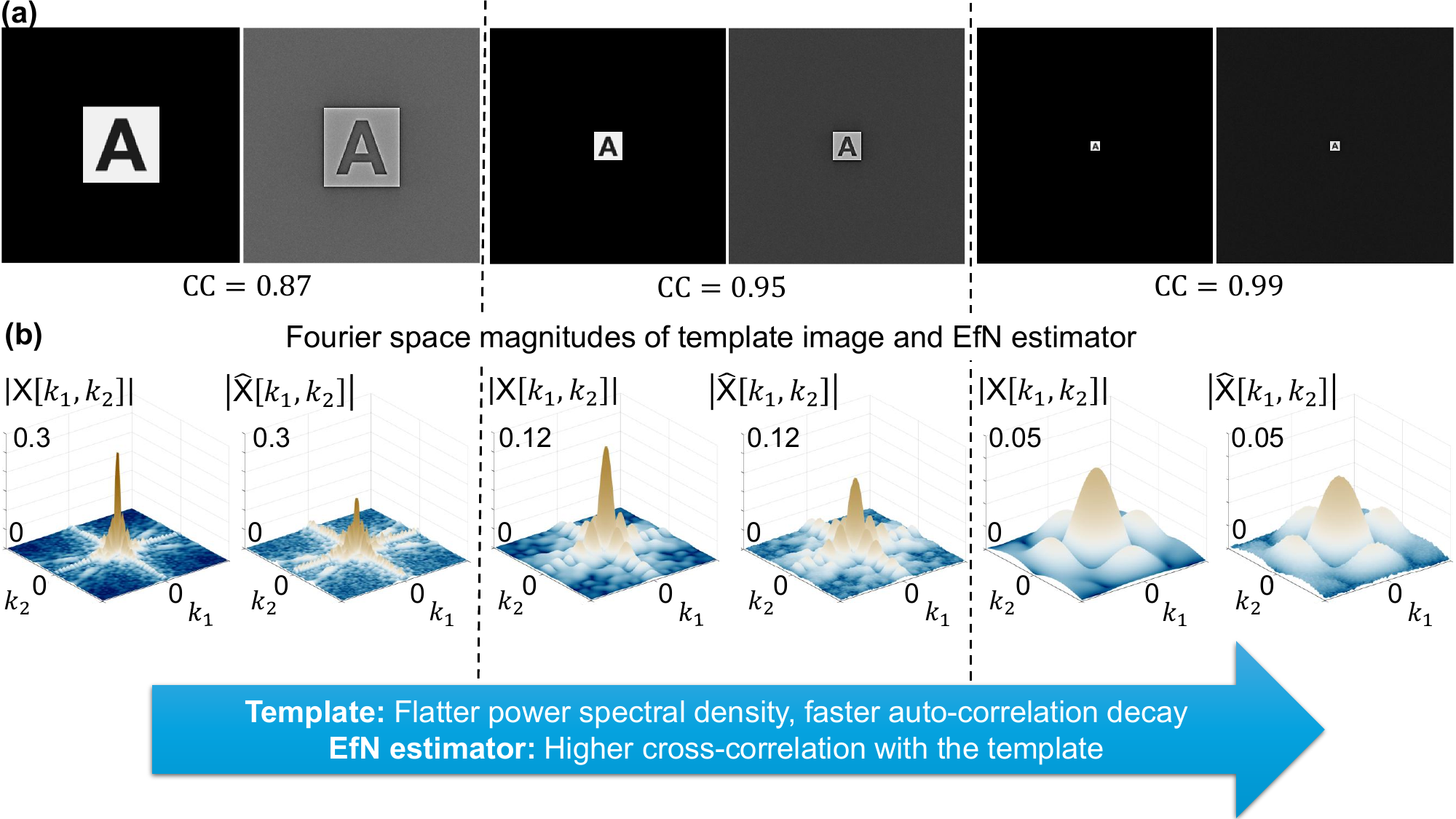}
    \caption{\textbf{The influence of the power-spectral-density (PSD) of the template signal on the correlation between the template and the \EfN estimator.}  \textbf{(a)} Three images of the letter A are shown, with an increasing zero-padding ratio. As the zero-padding ratio increases, the PSD flattens, and the cross-correlation (CC) between the template and the \EfN estimator increases. 
    This higher cross-correlation is evident in both the image background and the colors of the letter A. \textbf{(b)} Flatter PSDs lead to \EfN estimators whose Fourier magnitudes are closer to those of the template image. The \EfN estimators in these experiments were generated using $M=10^5$ observations. 
    }
    \label{fig:3}
\end{figure}

\paragraph{Previous work.}
The \EfN phenomenon has been investigated in earlier studies. In particular, it was shown that the ratios between the expected values of the Fourier coefficients of the \EfN estimator and those of the template are real-valued~\cite[Chapter 5]{Zhu_thesis}. In this work, we build upon and significantly extend these results. Specifically, we establish the convergence of the \EfN estimator to a non-vanishing signal, derive its convergence rate, analyze its behavior in the high-dimensional regime, and generalize the analysis to encompass a broader class of noise models beyond white Gaussian noise.

A closely related work is that of Wang et al.~\cite{wang2021quantification}, who conducted a rigorous statistical analysis of model bias in a different but complementary setting. 
They analyze the effects of selectively averaging only samples that exhibit the highest cross-correlation with a fixed reference signal (e.g., Einstein's image). This selection mechanism introduces a bias toward the reference, and their analysis reveals a phase transition in the resulting reconstruction, governed by the number of samples, the signal dimension, and the size of the selected subset. Notably, their results show that a structured image can emerge even when averaging purely noisy data. A related but statistically distinct selection-based mechanism is studied in~\cite{balanov2025structure} in the context of template matching for particle picking and extraction: candidate (pure-noise) observations are filtered by thresholding their cross-correlation with a bank of templates, and downstream averages inherit a strong template imprint.

In contrast, our work investigates the \EfN estimator in the absence of any selection step: all pure-noise observations are first aligned to a fixed template and then averaged. This difference in mechanism leads to different statistical behavior. Whereas template-based selection can yield averages that closely resemble the templates up to a global scale factor, we show that the non-selective \EfN procedure exhibits a more structured form of alignment-induced bias, manifested most prominently through phase locking: the Fourier phases of the estimator converge to those of the template, even though the resulting average is generally not identical to the template. Closely related alignment-induced artifacts and their connection to Fourier-phase behavior have also been discussed in the multireference alignment literature; see, e.g.,~\cite{shahverdi2024moment}.

\section{Main Results} \label{sec:main_results}

We begin by analyzing the regime where $M\to\infty$ and the dimension of the signal $d$ is fixed. In this setting, we show that the Fourier phases of the \EfN estimator converge almost surely to those of the underlying template signal, and we characterize the convergence rate. We also analyze the behavior of the Fourier magnitudes. Then, we turn to the high-dimensional regime, where $d\to\infty$. Under additional assumptions, we derive refined asymptotic guarantees for both the phases and magnitudes. Throughout, we assume that the template signal $x \in \mathbb{R}^d$ has a unit norm and that its spectrum is non-vanishing, i.e., $\s{X}[k] \neq 0$ for every $0 < k \leq d-1$, as discussed in the previous section.

\subsection{Finite-dimensional signal} 
We begin with the case where the template signal has a fixed dimension $d$, as captured in the following result, whose proof is provided in Appendix~\ref{thm:proofs1}.

\begin{thm} [Fourier phases convergence for finite-dimensional signal]\label{thm:1}
Fix $d\geq 2$ and assume that $\s{X}[k] \neq 0$, for all $0< k \leq d-1$.
\begin{enumerate}
  \item For any $0\leq k\leq d-1$, we have,
    \begin{align}
        \phi_{\s{\hat{X}}}[k] \xrightarrow[]{\s{a.s.}}  \phi_{{\s{X}}}[k],\label{eqn:FirstRes}
    \end{align}
    as $M\to\infty$. Furthermore, 
    \begin{align}
       \lim_{M\to\infty} \frac{\mathbb{E} |\phi_{\s{\hat{X}}}[k] - \phi_{\s{X}}[k]|^2}{1/M} = C_k,
        \label{eqn:asymptoticComnvergenceOfPhases}
    \end{align}
    for a finite constant $C_k<\infty$.
    \item For any $0\leq k\leq d-1$, we have,
        \begin{align}
            |\s{\hat{X}}[k]| \xrightarrow[]{\s{a.s.}} 
            \mathbb{E} \left[ \abs{\s{N}[k]} \cos\left(\frac{2\pi k}{d}\hat{\s{R}}_1 + \phi_{\s{N}}[k] - \phi_{\s{X}}[k]\right) \right] > 0,            \label{eqn:magnitudeConvergenceAsymptoticM}
        \end{align}
        as $M\to\infty$, where $\hat{\s{R}}_1$ is defined in \eqref{eqn:OptShiftRealSpace}.
\end{enumerate}
\end{thm}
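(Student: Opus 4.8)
The plan is to exploit that, for each fixed frequency $k$, the estimator \eqref{eqn:estimatorFourierRepresentation_pre} is an empirical average of i.i.d.\ complex summands $\s{Z}_i[k]\triangleq\abs{\s{N}_i[k]}e^{j\phi_{\s{N}_i}[k]}e^{j\frac{2\pi k}{d}\hat{\s{R}}_i}$, so both statements follow once the population mean $\mu_k\triangleq\mathbb{E}[\s{Z}_1[k]]$ is understood. Since $\mathbb{E}\abs{\s{Z}_1[k]}=\mathbb{E}\abs{\s{N}_1[k]}<\infty$ (a Rayleigh moment), the strong law of large numbers gives $\hat{\s{X}}[k]\xrightarrow[]{\s{a.s.}}\mu_k$. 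Writing $\mu_k$ in magnitude--phase form, Part~2 reduces to identifying $\abs{\mu_k}$, while the almost-sure statement of Part~1 reduces to showing that $\arg\mu_k=\phi_{\s{X}}[k]$ together with continuity of $\arg(\cdot)$ at the nonzero point $\mu_k$.

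The conceptual heart is a reflection symmetry pinning down the phase of $\mu_k$. I would introduce the auxiliary noise vector $m$ defined in the Fourier domain by $\s{M}[k]\triangleq\overline{\s{N}_1[k]}\,e^{2j\phi_{\s{X}}[k]}$, and check three facts: (i) $m$ is a real signal, because $x$ is real, whence $\phi_{\s{X}}[d-k]=-\phi_{\s{X}}[k]$ restores the conjugate symmetry $\s{M}[d-k]=\overline{\s{M}[k]}$; (ii) $m$ has the same law as $n_1$, since the reflection leaves the Rayleigh magnitudes untouched and maps the independent uniform phases to uniform phases; and (iii) the alignment reverses, $\hat{\s{R}}(m)\equiv-\hat{\s{R}}(n_1)\pmod d$, which follows from $\langle m,\mathcal{T}_\ell x\rangle=\langle n_1,\mathcal{T}_{-\ell}x\rangle$ after expanding the correlation by Parseval into $\tfrac1d\sum_k\abs{\s{N}_1[k]}\abs{\s{X}[k]}\cos(\phi_{\s{N}_1}[k]-\phi_{\s{X}}[k]+\tfrac{2\pi k}{d}\ell)$. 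Substituting (i)--(iii) shows that the transverse component $\abs{\s{N}_1[k]}\sin(\phi_{\s{N}_1}[k]+\tfrac{2\pi k}{d}\hat{\s{R}}_1-\phi_{\s{X}}[k])$ changes sign under $n_1\mapsto m$; as $m$ and $n_1$ are equidistributed, its expectation must vanish. Hence $\mu_k e^{-j\phi_{\s{X}}[k]}$ is real and equals $\mathbb{E}[\abs{\s{N}[k]}\cos(\tfrac{2\pi k}{d}\hat{\s{R}}_1+\phi_{\s{N}}[k]-\phi_{\s{X}}[k])]$, which is exactly the right-hand side of \eqref{eqn:magnitudeConvergenceAsymptoticM}, proving Part~2; moreover $\arg\mu_k=\phi_{\s{X}}[k]$.

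With $\arg\mu_k=\phi_{\s{X}}[k]$ in hand, the almost-sure phase convergence \eqref{eqn:FirstRes} follows by composing $\hat{\s{X}}[k]\xrightarrow[]{\s{a.s.}}\mu_k$ with continuity of $\arg$, provided $\mu_k\neq0$; I would therefore record a short lemma that $\abs{\mu_k}>0$ for the frequencies with $\s{X}[k]\neq0$, which is precisely the statement that the alignment positively correlates the rotated noise phase with $\phi_{\s{X}}[k]$ (the $k=0$ component is real and treated directly). For the rate \eqref{eqn:asymptoticComnvergenceOfPhases} the plan is a delta-method computation: the central limit theorem (available since $\mathbb{E}\abs{\s{Z}_1[k]}^2=\mathbb{E}\abs{\s{N}_1[k]}^2<\infty$) yields $\sqrt{M}\,(\hat{\s{X}}[k]-\mu_k)$ converging to a complex Gaussian, and since $z\mapsto\arg z$ is smooth at $\mu_k\neq0$ with transverse derivative $1/\abs{\mu_k}$, one obtains $\sqrt{M}\,(\phi_{\hat{\s{X}}}[k]-\phi_{\s{X}}[k])\xrightarrow[]{\calD}\calN(0,C_k)$ with $C_k=\operatorname{Var}\!\big(\operatorname{Im}\!\big(\s{Z}_1[k]e^{-j\phi_{\s{X}}[k]}\big)\big)/\abs{\mu_k}^2<\infty$.

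The main obstacle is upgrading this distributional statement to the claimed convergence of the normalized second moment. Because $\phi_{\hat{\s{X}}}[k]-\phi_{\s{X}}[k]$ is only defined modulo $2\pi$ and is bounded, $M\,\abs{\phi_{\hat{\s{X}}}[k]-\phi_{\s{X}}[k]}^2$ can be as large as $O(M)$ on the rare event that the empirical average strays far from $\mu_k$, so weak convergence alone does not give convergence of expectations. I would close this gap with a uniform-integrability argument: on the event $\{\abs{\hat{\s{X}}[k]-\mu_k}\le\abs{\mu_k}/2\}$ the quadratic Taylor expansion of $\arg$ is valid and recovers the delta-method variance, while on the complement I would bound $\abs{\phi_{\hat{\s{X}}}[k]-\phi_{\s{X}}[k]}^2\le\pi^2$ and invoke a Bernstein/Hoeffding-type concentration bound for the i.i.d.\ average $\hat{\s{X}}[k]$ (whose summands have sub-Gaussian magnitude and bounded phase factors) to force $M\cdot\pi^2\cdot\mathbb{P}(\text{bad event})\to0$ super-polynomially. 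This last step, together with establishing $\abs{\mu_k}>0$, is where the real work lies; the symmetry computation, though the crux, is short.
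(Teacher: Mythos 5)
Your overall architecture runs parallel to the paper's proof: the SLLN applied to the i.i.d.\ complex summands (the paper's \eqref{eqn:strongLLN}), a symmetry argument to annihilate the transverse sine component, and a Taylor-plus-truncation argument to upgrade the delta-method/CLT limit to convergence of the normalized second moment. Indeed, your reflection map $\s{M}[k]\triangleq\overline{\s{N}_1[k]}\,e^{2j\phi_{\s{X}}[k]}$ is a clean, correct repackaging of the paper's anti-symmetry property \eqref{eqn:antiSymmetryPropoeryOfR}, and your good-event/bad-event split with sub-Gaussian concentration is the same device as the paper's Lemma~\ref{lemma:0} (which splits on $\ppp{\abs{\s{B}_M}>\epsilon}$ and handles the bad event with a fourth-moment Chebyshev bound).

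There is, however, one genuine gap: you never prove that $\mu_k\neq 0$, i.e., that $\mathbb{E}\pp{\abs{\s{N}_1[k]}\cos(\phi_{e,1}[k])}>0$. You flag it as ``a short lemma'' whose content ``is precisely the statement that the alignment positively correlates the rotated noise phase with $\phi_{\s{X}}[k]$''---but that is a restatement of the claim, not an argument. This positivity is exactly the paper's Proposition~\ref{prop:2}, and it is the most substantial ingredient of the entire proof: it is established via Lemma~\ref{lemma:1}, which compares $\mathbb{P}[\hat{\s{R}}^{(1)}=\ell]$ against $\mathbb{P}[\hat{\s{R}}^{(2)}=\ell]$ for the conditional Gaussian process with mean vectors $\pm\mu$, exploiting cyclo-stationarity, time-reversal invariance of the law, and the identity $\mu_{\ell-m}+\mu_{\ell+m}=2\mu_\ell\cos\p{\tfrac{2\pi k}{d}m}$, and then integrates the resulting inequality over phase pairs $(\varphi,\varphi+\pi)$. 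Nothing in your reflection computation can substitute for this: the symmetry only shows that $\mu_k e^{-j\phi_{\s{X}}[k]}$ is real, which is equally consistent with $\mu_k=0$ or with $\arg\mu_k=\phi_{\s{X}}[k]+\pi$. Without positivity every step of your plan collapses: $\arg(\cdot)$ is not continuous at $0$, so the a.s.\ convergence \eqref{eqn:FirstRes} is unjustified (and if the expectation were negative, the phases would converge to $\phi_{\s{X}}[k]+\pi$); Part~2 as stated would only yield the limit $\abs{\mathbb{E}\pp{\abs{\s{N}}\cos(\cdot)}}$ rather than the signed expectation; and your constant $C_k$ involves division by $\abs{\mu_k}^2$. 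Note also that the tempting global shortcut---Parseval together with $\mathbb{E}\max_\ell\langle n_1,\mathcal{T}_\ell x\rangle>0$---only shows positivity of the weighted sum $\sum_k\abs{\s{X}[k]}\,\mathbb{E}\pp{\abs{\s{N}_1[k]}\cos(\phi_{e,1}[k])}$, not of each frequency separately, so this gap cannot be closed by a one-line observation; it genuinely requires an argument of the type the paper develops.
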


Theorem~\ref{thm:1} captures two central properties. The first addresses the convergence of the \EfN estimator's phases to those of the template signal. In addition, the corresponding convergence rate in MSE is proportional to $1/M$. The second result captures the convergence of the \EfN estimator's magnitudes to the term given in the right-hand-side (r.h.s.) of~\eqref{eqn:magnitudeConvergenceAsymptoticM}, which is strictly greater than zero. Thus, the \EfN estimator converges to a non-vanishing signal. Interestingly, this term is not necessarily proportional to the magnitudes $\abs{\s{X}[k]}$ of the template signal and, thus, the \EfN estimator reproduces (asymptotically) only the phases of the template signal but not the magnitudes.

A central component of the proof of Theorem~\ref{thm:1} is the circulant structure inherent in the alignment of the noise, which arises from the cyclic shift operations. This symmetry implies that the covariance matrix of the noise-aligned sum is circulant, corresponding to a cyclo-stationary Gaussian process.
In particular, we apply the central limit theorem (CLT) and the strong law of large numbers (SLLN) for this setting, which yields 
\[
    \phi_{\s{\hat{X}}}[k] - \phi_{\s{X}}[k] \xrightarrow[]{\mathcal{D}} \arctan(\s{Q}_k),
\]
as $M \to \infty$, where $\s{Q}_k$ is a zero-mean Gaussian random variable with variance $\sigma_{\s{Q}}^2[k] = C_k / M$, and the constant $C_k$ admits a closed-form expression. By leveraging properties of cyclo-stationary Gaussian processes, which is justified by the circulant structure of the problem, we establish that $C_k < \infty$ for all $0 \leq k \leq d - 1$. This directly leads to the results stated in \eqref{eqn:FirstRes}--\eqref{eqn:asymptoticComnvergenceOfPhases}.

\subsection{High-dimensional regime}
We now turn to the high-dimensional setting where $d \to \infty$, taken after the limit $M \to \infty$. In this regime, we impose additional technical conditions on the template signal, formalized in Assumption~\ref{assump:1}. Intuitively, these conditions reflect the empirical phenomenon illustrated in Figure~\ref{fig:3}, where a flatter PSD, which corresponds to a more rapidly decaying autocorrelation function, results in an improved alignment between the template and the estimator.

More precisely, Assumption~\ref{assump:1} below requires control over the decay of both the autocorrelation function and the spectral magnitudes as functions of $d$. Specifically, for a length-$d$ signal $x\in\mathbb{R}^d$, we define the (circular) autocorrelation in the time domain by
\begin{align}
    \label{eqn:auto-correlation-def}
    \s{R}_{\s{X}\s{X}}[\tau] \triangleq \frac{1}{\sqrt{d}}\sum_{n=0}^{d-1}{x}[n]\;{x}[n+\tau \!\!\!\pmod d],
\end{align}
for $\tau\in\{0,1,\ldots,d-1\}$.
By the discrete Wiener-Khinchin theorem, this is equivalent to taking the inverse discrete Fourier transform of the PSD, namely, 
\begin{align}
    \s{R}_{\s{X}\s{X}}[\tau] = \mathcal{F}^{-1}\!\left\{\,|\s{X}|^2\,\right\}[\tau] = \frac{1}{\sqrt{d}}\sum_{k=0}^{d-1}|\s{X}[k]|^2\,e^{j\frac{2\pi}{d}k\tau},
\end{align}
where $\s{X}=\mathcal{F}\{\s{x}\}$ is the DFT of $x$. Assumption~\ref{assump:1} requires that the autocorrelation $\s{R}_{\s{X}\s{X}}[\tau]$ decay faster than $1/\log d$, and that the maximum magnitude among nonzero Fourier components $\max_{k\neq 0}|\s{X}[k]|$ decay faster than $1/\sqrt{\log d}$. In addition, we assume the DC component is vanishing, i.e., $|\s{X}[0]|=0$, to avoid degeneracies in alignment.

\begin{assum}[High-dimensional regularity of the template]
\label{assump:1}
Consider a sequence of template signals $\{x^{(d)}\}_{d \in \mathbb{N}}$ with $x^{(d)}\in\mathbb{R}^d$, and let $\s{X}^{(d)}=\mathcal{F}\{x^{(d)}\}$ denote their DFT.
Let $\s{R}^{(d)}_{\s{X}\s{X}}$ denote the autocorrelation of $x^{(d)}$, as defined in~\eqref{eqn:auto-correlation-def}.
When taking the limit $d\to\infty$, we assume the signals are normalized, namely $\|x^{(d)}\|_2 = 1$ for all $d \in \mathbb{N}$.
We say that the template sequence $\{x^{(d)}\}_{d \in \mathbb{N}}$ satisfies Assumption~\ref{assump:1} if the following hold:
\begin{enumerate}
    \item \emph{Autocorrelation decay.} The autocorrelation away from the zero lag satisfies
    \begin{align}
        \label{eq:assump_autocorr_decay} \lim_{d\to\infty}\left(\max_{1\le \tau \le d-1}\big|\s{R}^{(d)}_{\s{X}\s{X}}[\tau]\big|\right)\cdot \log d = 0.
    \end{align}
    \item \emph{Spectral magnitude decay.} The non-DC Fourier magnitudes satisfy
    \begin{align}
        \label{eq:assump_specmag_decay}  \lim_{d\to\infty}\left(\max_{1\le k \le d-1}\big|\s{X}^{(d)}[k]\big|\right)\cdot \sqrt{\log d} = 0.
    \end{align}
    \item \emph{Vanishing DC component.} The signal's DC component is zero, i.e., ${\abs{\s{X}^{(d)}[0]}} = 0$.
\end{enumerate}
\end{assum}

Although the conditions in Assumption~\ref{assump:1} may seem technical, they are essential for establishing Theorem~\ref{thm:2}, which relies on classical limit theorems for the maxima of stationary Gaussian processes, most notably, convergence to the Gumbel distribution~\cite{leadbetter2012extremes, berman1964limit, adler2009random, azais2009level}. Each part of the assumption plays a specific role: Part (1) ensures that the noise process lacks long-range dependencies, which corresponds to a sufficiently flat PSD; Part (2) guarantees that no individual Fourier component dominates the behavior of the \EfN estimator. The final condition, requiring the DC component to vanish (i.e., $|\s{X}^{(d)}[0]| = 0$), is not strictly necessary from an empirical standpoint but is introduced to streamline the theoretical analysis.

\begin{thm}[Fourier phases convergence for high-dimensional signal]\label{thm:2}
Assume that $\s{X}^{(d)}[k] \neq 0$, for all $0< k \leq d-1$, and that $x$ satisfies Assumption~\ref{assump:1}. Then,
\begin{enumerate}
\item For any $0\leq k\leq d-1$, we have,
     \begin{align}        \lim_{d\to\infty}\lim_{M\to\infty}\frac{\mathbb{E} |\phi_{\s{\hat{X}}^{(d)}}[k] - \phi_{\s{X}^{(d)}}[k]|^2}{1/(M\log d)} \frac{1}{1/(4\abs{\s{X}^{(d)}[k]}^2)}  = 1.        \label{eqn:phaseConvergeneRateForAsymptoticD}
    \end{align}
\item For any $0\leq k\leq d-1$, we have,
    \begin{align}
        \frac{1}{\sigma\sqrt{2\log d}}\frac{|\s{\hat{X}}^{(d)}[k]|}{\abs{\s{X}^{(d)}[k]}} \xrightarrow[]{\s{a.s.}} 1,  \label{eqn:magnitudeConvergenceAsymptoticMandD}
    \end{align}
    as $M,d\to\infty$. 
\end{enumerate}
\end{thm}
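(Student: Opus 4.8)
The plan is to exploit the fact that the limit in Theorem~\ref{thm:2} is iterated: we first send $M\to\infty$ for fixed $d$, and only then $d\to\infty$. By Theorem~\ref{thm:1}, for fixed $d$ the estimator's magnitude converges almost surely to the deterministic quantity $a_k(d)\triangleq\mathbb{E}[|\s{N}[k]|\cos\theta_k]$, where $\theta_k\triangleq\frac{2\pi k}{d}\hat{\s{R}}_1+\phi_{\s{N}}[k]-\phi_{\s{X}}[k]$ is the residual phase produced by a \emph{single} noise vector. Likewise, writing the summands of $\hat{\s{X}}[k]$ as $\s{N}_i[k]e^{j\frac{2\pi k}{d}\hat{\s{R}}_i}=e^{j\phi_{\s{X}}[k]}|\s{N}_i[k]|e^{j\theta_{k,i}}$, the central-limit/delta-method analysis underlying Theorem~\ref{thm:1} identifies the constant $C_k$ of \eqref{eqn:asymptoticComnvergenceOfPhases} as $C_k(d)=\mathbb{E}[|\s{N}[k]|^2\sin^2\theta_k]/a_k(d)^2$, the transverse fluctuation of the same single-vector statistic divided by the squared in-phase bias. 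Thus both parts of Theorem~\ref{thm:2} reduce to computing the $d\to\infty$ asymptotics of the in-phase mean $a_k(d)$ and the transverse second moment $\mathbb{E}[|\s{N}[k]|^2\sin^2\theta_k]$.

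Next I would analyze the object carrying all the dependence on $x$, namely the cross-correlation sequence $c_\ell\triangleq\langle n,\mathcal{T}_\ell x\rangle$, whose maximizer is $\hat{\s{R}}$. This is a stationary Gaussian sequence with $c_\ell\sim\calN(0,\sigma^2)$ and lag covariance proportional to the auto-correlation $\s{R}_{\s{X}\s{X}}$. Two facts drive the proof. First, Assumption~\ref{assump:1}(1) is exactly a Berman-type mixing condition ($\s{R}_{\s{X}\s{X}}$ decaying faster than $1/\log d$), under which the extremes of $\{c_\ell\}_{\ell=0}^{d-1}$ behave like those of $d$ i.i.d.\ Gaussians; in particular the peak value $\s{V}\triangleq c_{\hat{\s{R}}}=\max_\ell c_\ell$ concentrates at $\sigma\sqrt{2\log d}$. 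Second, I decompose $c_\ell=\s{Y}^{(k)}_\ell+\tilde{c}^{(k)}_\ell$, where $\s{Y}^{(k)}_\ell\triangleq\frac{2}{d}|\s{N}[k]||\s{X}[k]|\cos(\frac{2\pi k}{d}\ell+\phi_{\s{N}}[k]-\phi_{\s{X}}[k])$ is the contribution of the $k$-th bin (and its conjugate) and $\tilde{c}^{(k)}$ is the remainder. Because distinct DFT bins are independent, $\s{Y}^{(k)}$ is independent of $\tilde{c}^{(k)}$, and Assumption~\ref{assump:1}(2) makes the per-bin contribution negligible relative to the $\Theta(1/\sqrt{\log d})$ spacing of the near-maximal values of $\tilde c^{(k)}$. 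Hence the winning shift $\hat{\s{R}}$ is asymptotically decoupled from the phase $\phi_{\s{N}}[k]$, and the residual phase $\theta_k$ carries only a weak alignment (its resultant length $a_k$ will be seen to tend to $0$).

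The heart of the computation is then a Gaussian projection. Conditioned on $\hat{\s{R}}=r$ and $c_r=v$, the bin contributions $\{\s{Y}^{(k)}_r\}_k$ are independent Gaussians constrained to sum to $v$, so the conditional mean obeys $\mathbb{E}[\s{Y}^{(k)}_r\mid c_r=v]=(\Var \s{Y}^{(k)}/\sigma^2)\,v$, proportional to $|\s{X}[k]|^2 v$; unwinding the definition of $\s{Y}^{(k)}$ gives $\mathbb{E}[|\s{N}[k]|\cos\theta_k\mid c_{\hat{\s{R}}}=v]=|\s{X}[k]|\,v$. Averaging over the peak value and using $\mathbb{E}[\s{V}]\sim\sigma\sqrt{2\log d}$ yields $a_k(d)\sim|\s{X}[k]|\sqrt{2\log d}$ after normalizing the noise scale, which together with the almost-sure statement of Theorem~\ref{thm:1} proves \eqref{eqn:magnitudeConvergenceAsymptoticMandD}. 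The same decoupling shows the transverse direction receives no bias from the alignment—the constraint acts only on the in-phase part $\s{Y}^{(k)}_r$—so $\mathbb{E}[|\s{N}[k]|\sin\theta_k]\to0$, and since $\theta_k$ is asymptotically uniform in its transverse component, $\mathbb{E}[|\s{N}[k]|^2\sin^2\theta_k]\to\tfrac12\mathbb{E}[|\s{N}[k]|^2]$. Combining with $a_k(d)^2\sim|\s{X}[k]|^2\,2\log d$ gives $C_k(d)\sim\frac{1}{4|\s{X}[k]|^2\log d}$, which is \eqref{eqn:phaseConvergeneRateForAsymptoticD}.

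The main obstacle is making the decoupling and projection rigorous and uniform. The quantity $a_k(d)$ is an expectation over the argmax $\hat{\s{R}}$, which genuinely depends on bin $k$; the subtle claim is that selecting $\hat{\s{R}}$ as the global maximizer nonetheless reproduces exactly the \emph{fixed-shift} Gaussian conditional mean for bin $k$, while leaving the orthogonal direction unbiased. Establishing this requires (i) Slepian/Berman comparison inequalities to transfer the extreme-value asymptotics of $\s{V}$ from the i.i.d.\ case to the dependent sequence $\{c_\ell\}$ under Assumption~\ref{assump:1}(1), and (ii) a stability analysis of the argmax under deletion of a single frequency, quantifying that the $o(1/\sqrt{\log d})$ per-bin perturbation permitted by Assumption~\ref{assump:1}(2) alters the selected shift only on an event of vanishing weight and contributes only lower-order bias. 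Controlling these errors uniformly enough to interchange the conditional expectation with the $d\to\infty$ limit—and to promote the magnitude statement to almost-sure convergence—is the crux; Assumption~\ref{assump:1}(3), $|\s{X}[0]|=0$, is used only to discard the deterministic DC term, which does not participate in the alignment.
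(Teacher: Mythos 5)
Your reduction coincides with the paper's: via Theorem~\ref{thm:1}, both claims become statements about the $d\to\infty$ asymptotics of the single-observation quantities $\mathbb{E}[\abs{\s{N}[k]}\cos\theta_k]$ and $\mathbb{E}[\abs{\s{N}[k]}^2\sin^2\theta_k]$ (the paper's $\mathbb{E}[f_1(\hat{\s{R}})]$ and $\mathbb{E}[f_2(\hat{\s{R}})]$ in Lemma~\ref{lemma:3}), and your heuristic constants agree with \eqref{eqn:targetFunctionAsymptotic}--\eqref{eqn:targetFunctionAsymptotic2}. Where you genuinely differ is the device for evaluating an expectation at the argmax: you propose a fixed-shift Gaussian projection conditioned on the peak value, plus extreme-value concentration of the peak, whereas the paper uses an envelope identity (Lemma~\ref{lemma:4}: $\mathbb{E}[f(\hat{\s{R}})]$ is the derivative at $\alpha=0$ of $\mathbb{E}\max_r\{\s{S}(r)+\alpha f(r)\}$) combined with the Gumbel limit for the perturbed maximum, whose location shift $m_d^\star(\alpha)$ yields the soft-max formula of Proposition~\ref{prop:3}.

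The gap is the step you yourself call the crux, and it is not one that Slepian/Berman comparisons or a qualitative deletion-stability argument will close. Your conditional-law claim is false as stated: conditioned on $\{\hat{\s{R}}=r,\,c_r=v\}$ the bin contributions are \emph{not} ``independent Gaussians constrained to sum to $v$''; that description is valid only under conditioning on $\{c_r=v\}$ alone, while the argmax event $\{c_r\geq c_m\ \forall m\}$ is correlated with $\s{Y}^{(k)}_r$ because the bin-$k$ contribution is a single sinusoid, driven by the same pair $(\abs{\s{N}[k]},\phi_{\s{N}}[k])$ at every shift. Moreover, the selection effect you hope to dismiss as ``lower-order bias'' is in fact the entire answer: letting $\tilde{\s{R}}$ denote the argmax of the process with bin $k$ deleted (independent of bin $k$), the uniformity of $\phi_{\s{N}}[k]$ gives $\mathbb{E}[f_1(\tilde{\s{R}})]=0$ exactly, hence $\mathbb{E}[f_1(\hat{\s{R}})]=\mathbb{E}\bigl[(f_1(\hat{\s{R}})-f_1(\tilde{\s{R}}))\mathbbm{1}\{\hat{\s{R}}\neq\tilde{\s{R}}\}\bigr]$. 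The switching event has probability of order $\abs{\s{X}[k]}\sqrt{\log d}$ (a perturbation of amplitude $\asymp\abs{\s{X}[k]}$ against near-maximum spacings of order $1/\sqrt{\log d}$), with an $O(1)$ integrand on it --- which is precisely the order of the target $\mathbb{E}[f_1(\hat{\s{R}})]\sim 2\sigma^2\abs{\s{X}[k]}\sqrt{2\log d}$, itself vanishing under Assumption~\ref{assump:1}. So stability estimates can only reproduce the order $\Theta(\abs{\s{X}[k]}\sqrt{\log d})$, whereas \eqref{eqn:phaseConvergeneRateForAsymptoticD} and \eqref{eqn:magnitudeConvergenceAsymptoticMandD} require the exact constant (the limits equal $1$). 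Extracting that constant --- i.e., proving that the argmax conditioning reproduces the fixed-shift projection to within $o(\abs{\s{X}[k]}\sqrt{\log d})$ --- is exactly the content of the paper's Proposition~\ref{prop:3}, where the envelope derivative removes the argmax from the picture and sharp Gumbel asymptotics of the perturbed expected maximum deliver the soft-max formula with the needed error control; your outline asserts this step rather than proving it.
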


The proof of Theorem~\ref{thm:2} is presented in Appendix~\ref{thm:proofs2}. Based on Theorem~\ref{thm:2}, as $M,d\to\infty$, the convergence rate of the Fourier phases of the \EfN estimator is inversely proportional to the squared Fourier magnitude. In addition, unlike the fixed-$d$ result in Theorem~\ref{thm:1}, the high-dimensional regime reveals an explicit dependence on $d$ in the phase-error constant. Specifically, while Theorem~\ref{thm:1} states that for each fixed dimension $d$, \eqref{eqn:asymptoticComnvergenceOfPhases} holds with a finite constant $C_k$ (which depends on $d$ and on the template), Theorem~\ref{thm:2} makes this dependence explicit when $d\to\infty$: comparing \eqref{eqn:phaseConvergeneRateForAsymptoticD} with \eqref{eqn:asymptoticComnvergenceOfPhases} shows that in the high-dimensional limit the constant scales as 
\begin{align}
    \label{eq:Ckd_scaling_after_thm2} C_{k} = \frac{1}{4\,|\s{X}^{(d)}[k]|^2\,\log d}.
\end{align}
Intuitively, the $\log d$ factor arises from the alignment step: $\hat{\s{R}}_i$ is chosen by maximizing a stationary Gaussian correlation process over $d$ shifts, and the maximizer is governed by extreme-value statistics. In particular, the maximum of such a process grows on the scale $\sqrt{2\log d}$; see, e.g., \cite{leadbetter2012extremes, berman1964limit, adler2009random, azais2009level}, and the proof in Appendix~\ref{sec:proof-of-thm-2}.

Moreover, unlike Theorem~\ref{thm:1}  for a fixed $d$, Theorem~\ref{thm:2} also shows that the Fourier magnitudes of the \EfN estimator satisfy \eqref{eqn:magnitudeConvergenceAsymptoticMandD}, namely they recover the template magnitudes up to the known normalization factor $\sigma\sqrt{2\log d}$. Therefore, when $d\to\infty$ under Assumption~\ref{assump:1}, the normalized estimate $\hat{x}^{(d)}$ recovers the template signal, which in turn implies that the normalized cross-correlation between the template and the \EfN estimator approaches unity.

Empirically, we observe that Theorem~\ref{thm:2} provides accurate predictions of the convergence behavior when Assumption~\ref{assump:1} holds. As illustrated in Figure~\ref{fig:4}, the convergence rate is strongly influenced by the PSD of the template signal. In particular, Figure~\ref{fig:4}(b) shows that increasing the signal length and a flatter PSD lead to a stronger correlation between the \EfN estimator and the true template. Furthermore, Figure~\ref{fig:4}(c) demonstrates that the convergence of the Fourier phases of the \EfN estimator aligns closely with the theoretical predictions as the PSD becomes flatter. When the template violates Assumption~\ref{assump:1} (e.g., if its autocorrelation decays too slowly), the predicted convergence rates become less accurate, highlighting the importance of the assumption for the theorem’s formal guarantees.
However, even when the spectral decay is moderate, and the assumption is not strictly met, we find that the analytical convergence rates still rather closely match empirical observations (Figure \ref{fig:4}(c)). Notably, the key phenomenon that the convergence rate of the Fourier phases is inversely related to the magnitude of the corresponding spectral components remains robust beyond the regime where the theorem formally applies.

\begin{figure}[t!]
    \centering
    \includegraphics[width=1\linewidth]{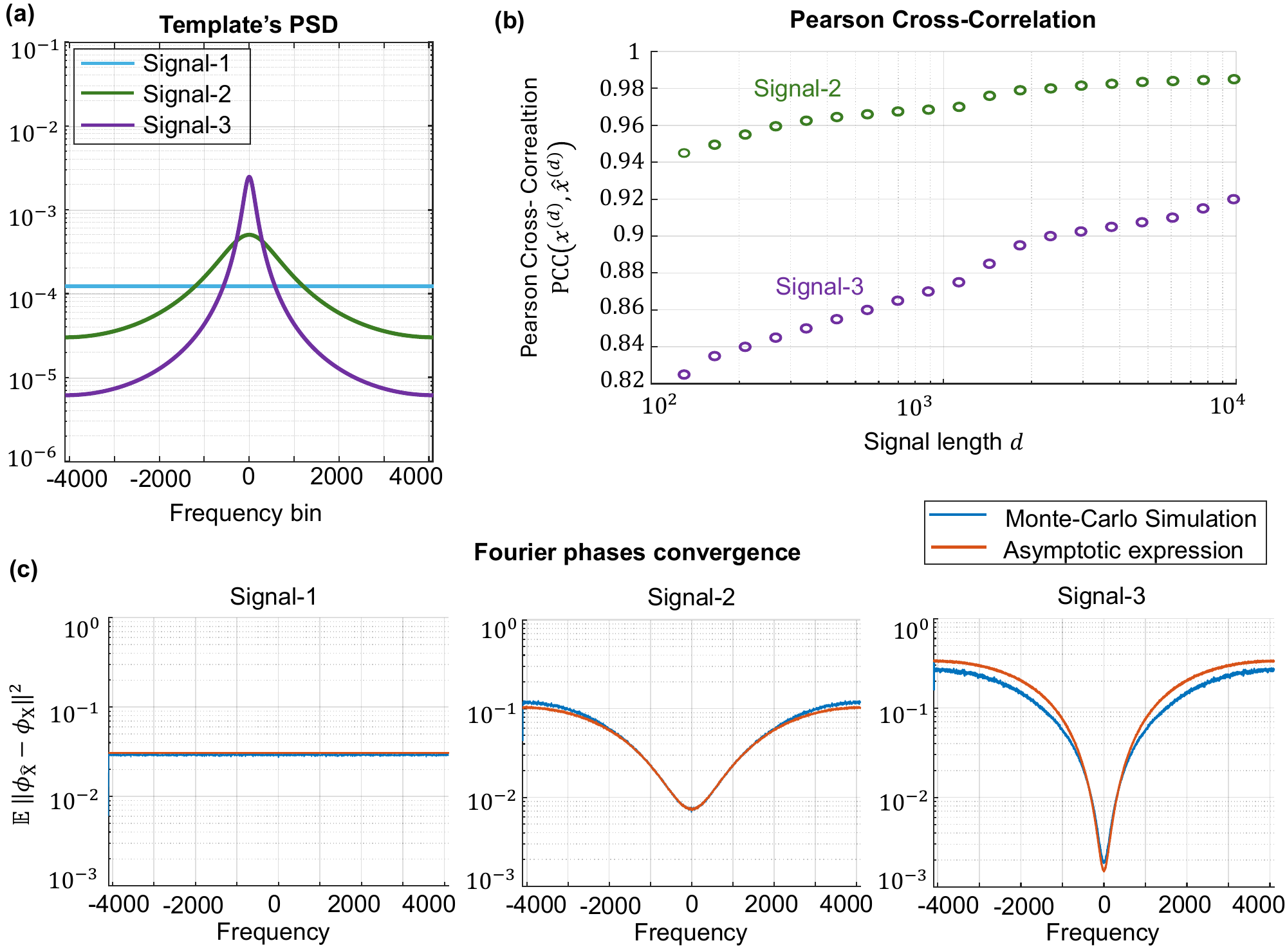}
    \caption{\textbf{Comparison between analytic expression and Monte-Carlo simulations for high-dimensional signals, $d$, and for signals with varying power spectral densities.}
    The analytic predictions for Fourier-phase convergence and Fourier-magnitude scaling are given by    \eqref{eqn:phaseConvergeneRateForAsymptoticD} and \eqref{eqn:magnitudeConvergenceAsymptoticMandD}, respectively.
    \textbf{(a)} Template PSDs for three template families at a representative dimension $d=8192$.
    For each dimension $d$, the template $x^{(d)}\in\mathbb{R}^d$ is generated directly at length $d$ as an exponentially decaying
    signal, $x^{(d)}_\ell[m]\propto \exp(-m/\alpha_\ell)$, $m=0,1,\ldots,d-1$, with decay parameters
    $\alpha_\ell\in\{0.02,\,2,\,10\}$ (Signals~1-3, respectively), followed by mean removal and normalization.
    \textbf{(b)} Monte-Carlo estimates of the Pearson cross-correlation $\mathrm{PCC}(x^{(d)}_\ell,\hat{x}^{(d)}_\ell)$ between the
    template $x^{(d)}_\ell$ and the EfN estimate $\hat{x}^{(d)}_\ell$ as a function of the signal length $d$ (with fixed sample
    size $M=10^4$). As $d$ increases, the correlation increases, particularly for templates with slower-decaying PSDs.
    \textbf{(c)} Per-frequency phase mean-squared error $\mathbb{E}|\phi_{\hat{\s{X}}^{(d)}_\ell}[k]-\phi_{\s{X}^{(d)}_\ell}[k]|^2$ at $d=8192$: Monte-Carlo estimates (blue) are compared with the asymptotic expression (red), i.e., the large-$(M,d)$ closed-form approximation predicted by \eqref{eqn:phaseConvergeneRateForAsymptoticD}. All Monte-Carlo curves are averaged over $2000$ independent trials.}
    \label{fig:4}
\end{figure}

\section{Extension to other noise statistics} \label{sec:extenstionToOtherNoise}
So far, we have analyzed the setting in which the noise is white Gaussian. In this section, we extend the analysis to a broader class of noise distributions. Specifically, we now assume that the observations $y_0, y_1, \ldots, y_{M-1} \in \mathbb{R}^d$ are i.i.d. samples drawn from an arbitrary distribution with zero mean and a fixed covariance matrix, namely,
\begin{align} \label{eqn:generalModel}
    \mathbb{E}[y_1] = \mathbf{0}, \quad \mathbb{E}[y_1 y_1^\top] = \Sigma,
\end{align}
where $\Sigma \succ 0$ is a positive-definite matrix with bounded operator norm, i.e., $\| \Sigma \| < \infty$. Notably, the entries of each sample $y_i$ are not required to be independent or identically distributed.

\subsection{Positive correlation}

In general, the Fourier phase convergence property established under the white Gaussian assumption does not hold for arbitrary noise distributions, as demonstrated empirically in Figures~\ref{fig:5} and~\ref{fig:6}. Nonetheless, we establish a positive correlation result between the EfN estimator and the underlying template signal.

\begin{proposition}[Positive correlation]
\label{prop:positiveCorrelation}
Let $d \geq 2$, and suppose the observations $\{y_i\}_{i=0}^{M-1}$ are drawn i.i.d. according to the model in \eqref{eqn:generalModel}. Let $x \in \mathbb{R}^d$ denote the template signal, and assume its discrete Fourier transform $\s{X}$ satisfies $\s{X}[k] \neq 0$ for all $1 \leq k \leq d - 1$. Let $\hat{x}$ be the EfN estimator computed from the observations $\{y_i\}$. Then, as $M \to \infty$, the following inequality holds almost surely,
\begin{align}
    \langle \hat{x}, x \rangle 
    \geq \max_{0 \leq r_1, r_2 < d-1} \frac{1}{2} \, \mathbb{E}\left[ \left| \left\langle y_1, \mathcal{T}_{r_1} x - \mathcal{T}_{r_2} x \right\rangle \right| \right] > 0. \label{eqn:positiveCorrelation}
\end{align}
\end{proposition}

The proof of Proposition~\ref{prop:positiveCorrelation} is provided in Appendix~\ref{sec:proofOfPositiveCorrelation}. This result implies that the EfN estimator is positively correlated with the true template signal. Although this is a weaker guarantee than the Fourier phase convergence obtained under Gaussian white noise, it still ensures that the estimator retains meaningful structural information from the template.

\subsection{High-dimensional i.i.d. noise}
Our next result demonstrates that the Fourier phase convergence established in Theorem~\ref{thm:1} for Gaussian white noise extends to a broader class of noise models in the high-dimensional regime. To this end, we impose an additional assumption that the entries of each observation vector $y_i \in \mathbb{R}^d$ are i.i.d.
Namely, the covariance matrix $\Sigma$ is diagonal.

\begin{thm}[High-dimensional i.i.d. noise] 
\label{thm:highDimentionalNoiseExtention}
Let $\{y_i\}_{i=0}^{M-1}$ be i.i.d. observations drawn according to the model in \eqref{eqn:generalModel}, and assume further that the entries of each $y_i \in \mathbb{R}^d$ are i.i.d., with finite variance, and satisfy $\mathbb{E}[(y_i[\ell])^4] < \infty$, for all $\ell \in \{0, 1, \ldots, d-1\}$. Let $\hat{\s{X}}$ denote the discrete Fourier transform of the EfN estimator under this noise model. Assume that the Fourier coefficients of the template $x$ are non-vanishing, i.e., $\s{X}[k] \neq 0$ for all $k \in \mathbb{N}^+$. Then, for any fixed $k \in \mathbb{N}^+$, we have,
\begin{align}
    \phi_{\s{\hat{X}}}[k] - \phi_{\s{X}}[k] \xrightarrow[]{\s{a.s.}} 0, \label{eqn:FirstRes2}
\end{align}
as $M, d \to \infty$. Moreover,
\begin{align}
   \lim_{d\to\infty} \lim_{M\to\infty} \frac{\mathbb{E} \left[|\phi_{\s{\hat{X}}}[k] - \phi_{\s{X}}[k]|^2\right]}{1/M} = C_k, \label{eqn:asymptoticComnvergenceOfPhasesCirculant}
\end{align}
for some finite constant $C_k < \infty$. Finally, if $x$ satisfies Assumption~\ref{assump:1}, then,
\begin{align}
    \lim_{d\to\infty} \lim_{M\to\infty} \frac{\mathbb{E} \left[|\phi_{\s{\hat{X}}}[k] - \phi_{\s{X}}[k]|^2\right]}{1/(M \log d)} \cdot \frac{1}{1/(4|\s{X}[k]|^2)} = 1. \label{eqn:asymptoticComnvergenceOfPhases3}
\end{align}
\end{thm}

The proof of Theorem~\ref{thm:highDimentionalNoiseExtention} is given in Appendix~\ref{sec:proofOfHighDimetnionalNoiseExtention}. In essence, this result extends the Fourier phase convergence of Theorem~\ref{thm:1} to a broader class of noise distributions in the high-dimensional setting. The main idea of the proof is to apply the functional central limit theorem to the DFT coefficients \cite{peligrad2010central, cerovecki2017clt, brillinger2001time}. As $d \to \infty$, the Fourier components of the noise converge in distribution to those of a circulant Gaussian random process, owing to the i.i.d. structure of the entries in $y_i$. This asymptotic Gaussianity enables us to apply the same analytical framework developed for the white noise case to establish convergence of the Fourier phases. 

\paragraph{Empirical demonstration.} Figure~\ref{fig:5} provides empirical validation of Theorem~\ref{thm:highDimentionalNoiseExtention} in settings where the noise distribution is non-Gaussian. In particular, we consider $y_i \in \mathbb{R}^d$ with i.i.d. entries drawn from either the uniform or Poisson distribution. As the figure shows, when $d$ is relatively small, the Fourier phases fail to converge and instead plateau. However, as the dimension increases, phase convergence emerges at the predicted $1/M$ rate, aligning with our theoretical results.

\begin{figure}[t!]
    \centering
    \includegraphics[width=0.95\linewidth]{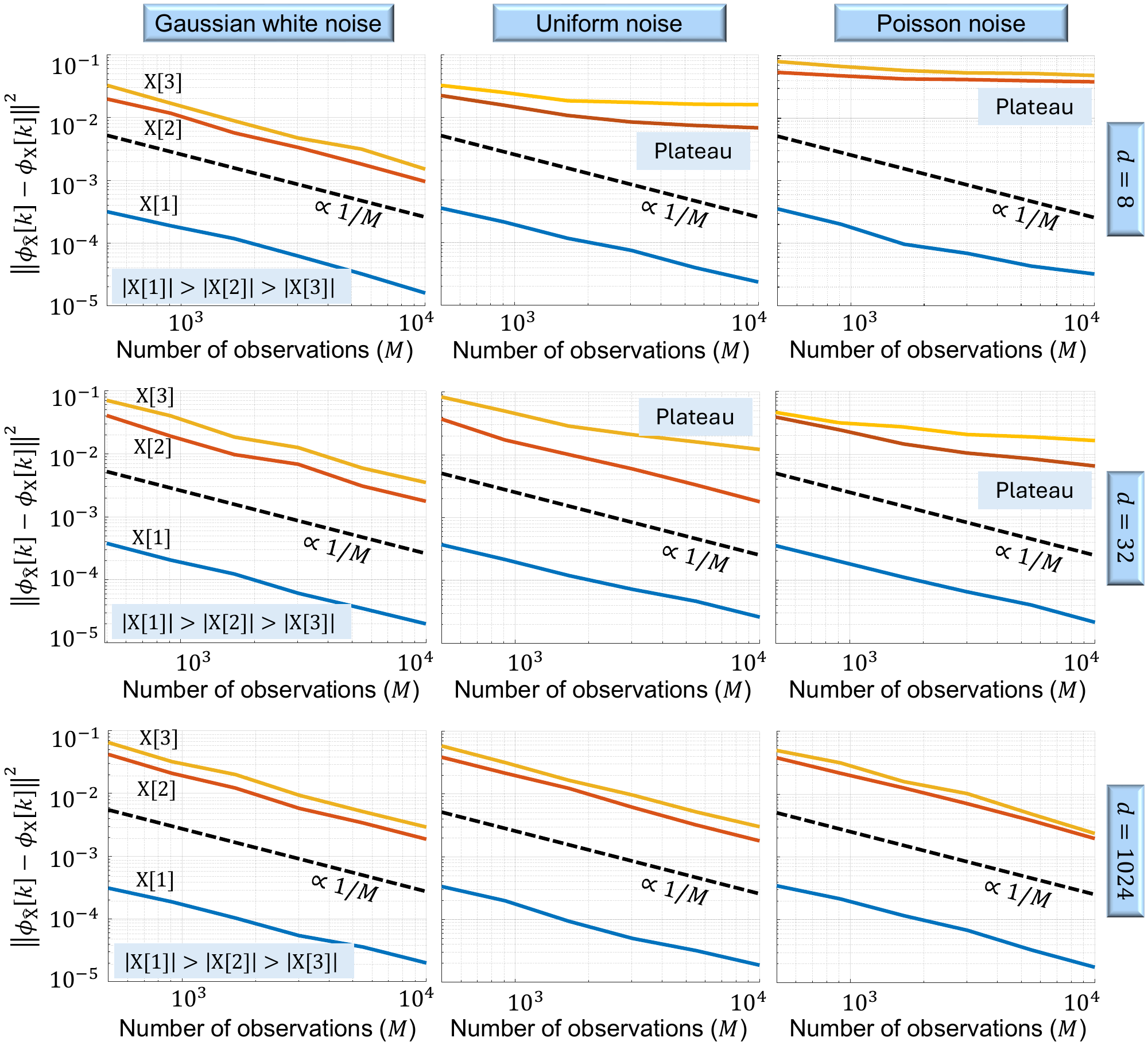}
    \caption{\textbf{The impact of noise statistics and signal dimension ($d$) on Fourier phase convergence.} Each panel displays the mean squared error (MSE) between the Fourier phases of the true template and those estimated by EfN, shown for three representative Fourier components. The dashed line represents the theoretical $1/M$ convergence rate. Columns correspond to different noise distributions: white Gaussian noise, i.i.d. noise drawn from a uniform distribution over the interval $[0,1]$, and i.i.d.  
    Poisson  noise with parameter $\lambda = 10$. Rows correspond to increasing signal dimensions: $d = 8$, $32$, and $1024$.
    For white Gaussian noise, the Fourier phases converge at the expected $1/M$ rate across all signal dimensions, in agreement with Theorem~\ref{thm:1}. In contrast, under uniform and Poisson noise, the MSE plateaus at low dimensions. However, increasing the signal dimension restores convergence, even under non-Gaussian noise, consistent with the high-dimensional regime described in Theorem~\ref{thm:highDimentionalNoiseExtention}. Notably, for $d = 1024$, all three noise models produce similar MSE values across the selected Fourier components, suggesting that their phase noise statistics become nearly indistinguishable. Each data point represents an average of 300 Monte Carlo trials.}
    \label{fig:5}
\end{figure}

\subsection{Circulant Gaussian process}

In this section, we consider the setting in which the noise exhibits correlations between entries. As previously noted, Fourier phase convergence does not generally hold under arbitrary noise models. However, we show that convergence is maintained when the noise follows a \textit{circulant Gaussian} distribution, a structured class of Gaussian noise characterized by rotational symmetry.

\begin{definition}[Symmetric circulant matrix] \label{def:circulantMatrix}
A matrix $\Sigma \in \mathbb{R}^{d \times d}$ is called \emph{circulant} if each row is a right cyclic shift of the previous one. That is, there exists a vector $c = (c_0, c_1, \dots, c_{d-1}) \in \mathbb{R}^d$ such that
\begin{align}
    \Sigma = \mathrm{circ}(c) = 
    \begin{bmatrix}
    c_0 & c_1 & c_2 & \dots & c_{d-1} \\
    c_{d-1} & c_0 & c_1 & \dots & c_{d-2} \\
    \vdots & \vdots & \vdots & \ddots & \vdots \\
    c_1 & c_2 & c_3 & \dots & c_0 \\
    \end{bmatrix}.
\end{align}
The matrix is said to be \emph{symmetric circulant} if $c_j = c_{d-j}$ for all $j = 1, \dots, d-1$.
\end{definition}

\begin{proposition}[Fourier phase convergence under circulant Gaussian noise]
\label{prop:circulantGauusianNoise}
Let $d \geq 2$ be fixed, and suppose the observations $\{y_i\}_{i=0}^{M-1}$ are i.i.d samples drawn from the multivariate normal distribution $\mathcal{N}(0, \Sigma)$, where $\Sigma$ is a symmetric circulant matrix as defined in Definition~\ref{def:circulantMatrix}. Assume further that the eigenvalues of $\Sigma$ are strictly positive, and that the template signal $x \in \mathbb{R}^d$ satisfies $\s{X}[k] \neq 0$ for all $1 \leq k \leq d-1$. Let $\hat{x}$ denote the EfN estimator under this noise model. Then, for each $0 \leq k \leq d-1$:
\begin{align}
    \phi_{\s{\hat{X}}}[k] \xrightarrow[]{\s{a.s.}} \phi_{\s{X}}[k], \label{eqn:FirstRes3}
\end{align}
as $M \to \infty$. Moreover,
\begin{align}
    \lim_{M \to \infty} \frac{\mathbb{E} \left[ |\phi_{\s{\hat{X}}}[k] - \phi_{\s{X}}[k]|^2 \right]}{1/M} = C_k, \label{eqn:asymptoticComnvergenceOfPhases2}
\end{align}
for some finite constant $C_k < \infty$.
\end{proposition}

The proof of Proposition~\ref{prop:circulantGauusianNoise} is given in Appendix~\ref{sec:circulantGauusianNoise}. In essence, this result serves as a generalization of Theorem~\ref{thm:1}, which considered the case of white Gaussian noise, to the broader setting of symmetric circulant Gaussian noise. Notably, white noise with covariance $\sigma^2 I_{d \times d}$ is a special case of circulant noise, making this extension a natural generalization. The critical insight here is that circulant covariance matrices remain diagonalizable in the Fourier basis, which preserves the independence of the DFT coefficients and enables phase convergence to proceed as in the white Gaussian case.

\paragraph{Empirical demonstration.} Figure~\ref{fig:6} presents an empirical comparison of the MSE of the Fourier phase estimates, as a function of the number of observations $M$, under three distinct noise models: (1) white Gaussian noise with covariance $\Sigma = \sigma^2 I_{d \times d}$; (2) Gaussian noise with a symmetric circulant covariance matrix, as defined in Definition~\ref{def:circulantMatrix}; and (3) Gaussian noise with a Toeplitz (but non-circulant) covariance matrix. As shown in the figure, both the i.i.d. and circulant models exhibit the expected $1/M$ decay in the phase MSE curve, though the constants $C_k$ differ, reflecting their distinct covariance structures. In contrast, under Toeplitz noise, the phase estimates do not converge: the MSE plateaus, and no $1/M$ scaling is observed. These results empirically confirm that the convergence of Fourier phases is tightly linked to the circulant structure of the noise covariance.

\begin{figure}[t!]
    \centering
    \includegraphics[width=0.95\linewidth]{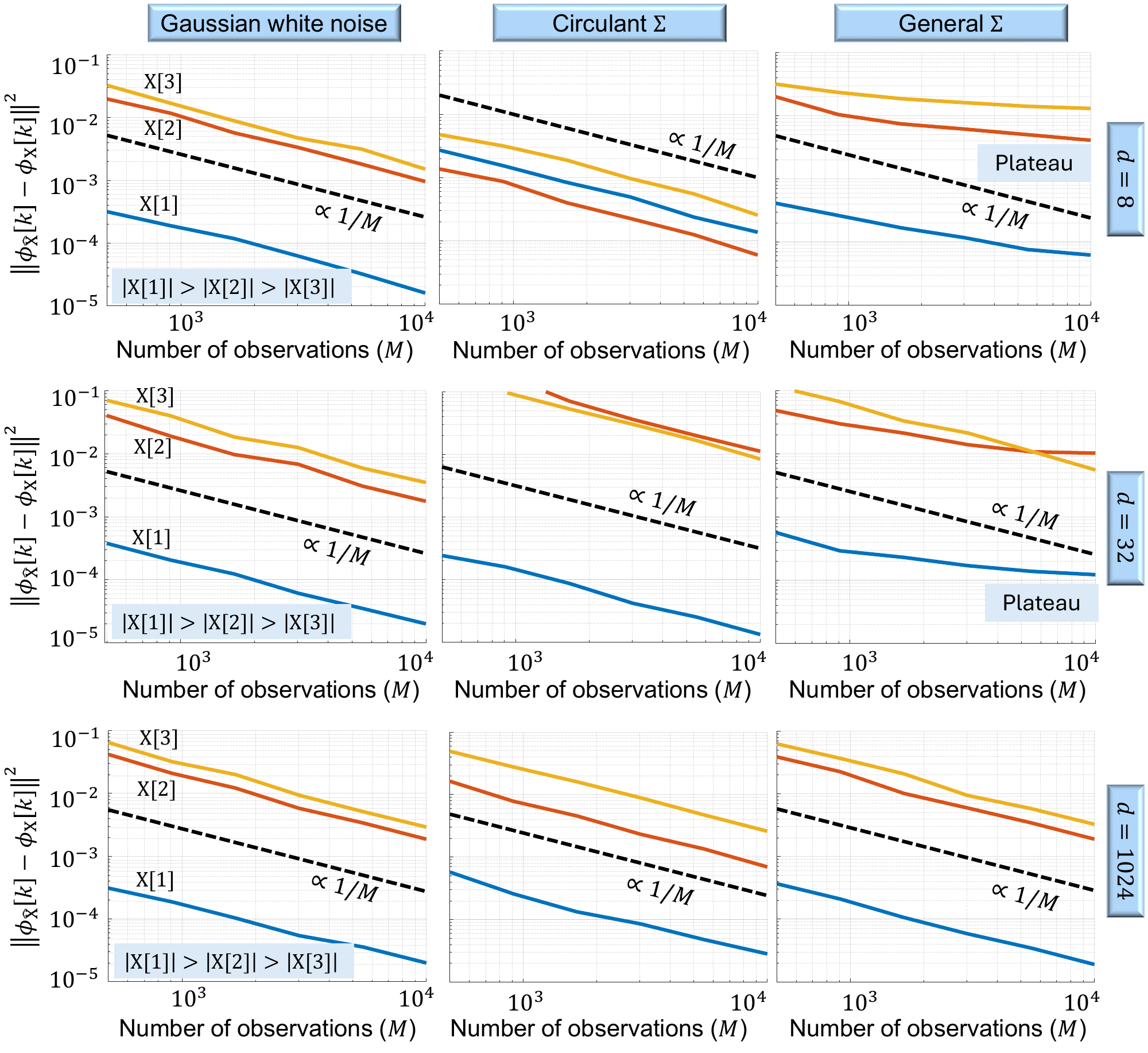}
    \caption{\textbf{The impact of the noise covariance structure and signal dimension ($d$) on Fourier phase convergence.} Each plot shows the mean squared error (MSE) between the Fourier phases of the ground-truth template and those estimated by EfN, evaluated across three spectral components. The dashed line indicates the theoretical $1/M$ convergence rate.
    Columns correspond to three types of noise: (1) white Gaussian noise with covariance $\Sigma = \sigma^2 I_{d \times d}$, (2) symmetric circulant covariance, and (3) a Toeplitz covariance matrix that is not circulant. Rows represent increasing signal dimensions: $d = 8$, $32$, and $1024$.
    Under white Gaussian noise, the Fourier phases converge at the expected $1/M$ rate, independent of the signal dimension (Theorem~\ref{thm:1}). A similar trend is observed when the noise has a circulant covariance structure: the same $1/M$ scaling holds, although the MSE is different compared to the white noise case.
    In contrast, for a Toeplitz covariance matrix that is not circulant, the MSE plateaus at small signal dimensions, indicating a failure of convergence. However, when the signal dimension increases to $d = 1024$, convergence at the $1/M$ rate is restored even under this more structured noise model.
    Each data point represents an average of 300 Monte Carlo trials.}
    \label{fig:6}
\end{figure}

\section{Discussion and outlook}\label{sec:outlook}

In this work, we have shown that the Fourier phases of the \EfN estimator converge to those of the template signal for an asymptotic number of observations. Since Fourier phases are crucial for perceiving image structure, the reconstructed image appears structurally similar to the template signal, even in cases where the estimator's spectral magnitudes differ from those of the template \cite{oppenheim1981importance, shechtman2015phase}. 
We have also shown that the Fourier phases of spectral components with stronger magnitudes converge faster, leading to faster structural similarity in the overall image perception. In addition, we have extended our analysis beyond white Gaussian noise, examining other noise models. We have shown that the \EfN estimator remains positively correlated with the template for arbitrary noise settings, and we have analyzed the Fourier phases convergence properties for high-dimensional i.i.d. noise (which is not necessarily Gaussian) and circulant Gaussian noise.

\subsection{Extensions and implications}
We anticipate that the findings of this paper will be beneficial in various fields. For example, the paper sheds light on a fundamental pitfall in template matching techniques, which may lead  
engineers and statisticians to misleading results. In addition, physicists and biologists working with data sets of low SNRs will benefit from understanding limitations and potential biases introduced by template matching techniques. More generally, this work provides a cautionary framework for the broader scientific community, highlighting the importance of exercising care when interpreting noisy observations.

\paragraph{Extension to higher dimensions.}
While this paper focuses on one-dimensional signals, the analysis can be readily extended to higher dimensions. This extension involves replacing the one-dimensional DFT with its $N$-dimensional counterpart. The symmetry properties established in Theorem \ref{thm:1}, including the results in Propositions \ref{prop:1} and \ref{prop:2}, remain valid. For the high-dimensional case of Theorem \ref{thm:2}, the conditions on the PSD adjust for the $N$-dimensional case. Specifically, the auto-correlation decay rate of the multidimensional array should be faster than $1/{\log d}$ in each dimension. 

\paragraph{Implications to cryo-EM.}
The findings of Theorem \ref{thm:2} have practical implications for cryo-EM. Typically, protein spectra exhibit rapid decay at low frequencies (known as the Guinier plot) and remain relatively constant at high frequencies, a behavior characterized by Wilson in~\cite{yu1942determination} and known as Wilson statistics. Wilson statistics is used to sharpen 3-D structures~\cite{singer2021wilson}.
To mitigate the risk of model bias, we suggest using templates with reduced high frequencies, recommending filtered, smooth templates. This insight may also relate to or support the common practice of initializing the expectation-maximization (EM) algorithm for 3-D refinement with a smooth 3-D volume. Each iteration of the EM algorithm effectively applies a version of template matching multiple times, although projection images typically contain actual signal rather than pure noise, as in the \EfN case. 

The key message for the cryo-EM community is that, regardless of the specific setting, one should not rely solely on the raw alignment average when working with low-SNR data. Instead, robust validation practices, such as cross-validation, independent reconstructions, and other consistency checks, are essential to guard against artifact-driven effects like the \EfN phenomenon. In this context, we mention a recent work suggesting that processing data in smaller mini-batches can help reduce the risk of EfN, offering a practical approach to mitigating model bias in such settings~\cite{balanov2026expectation}.  

\paragraph{Noise statistics in cryo-EM.}
The results in Section~\ref{sec:extenstionToOtherNoise} are particularly relevant to the noise characteristics commonly encountered in cryo-EM. While cryo-EM noise is often modeled as Poisson in nature, the standard practical assumption is that it follows a Gaussian distribution with a decaying power spectrum. These properties align well with the broader class of noise models considered in our analysis. Consequently, the conclusions of Theorem~\ref{thm:highDimentionalNoiseExtention} can be extended to the cryo-EM setting, and we expect similar asymptotic phase convergence behavior to hold.

\subsection{Future work}
Here, we list open questions and directions for future work.

\paragraph{Extension to non-cyclic group actions.} 
A natural direction for future work is to extend the \EfN analysis beyond the simplified setting of cyclic translations, as defined in \eqref{eqn:calQModel}, to more general group actions, particularly those arising in practical applications such as cryo-EM. In this context, the relevant transformations are elements of the rotation group $\s{SO}(3)$, and the postulated observations are two-dimensional projections of a three-dimensional structure rather than simple translations of a one-dimensional signal. Recent empirical evidence suggests that template-induced bias can persist in such non-abelian settings as well, including under $\s{SO}(3)$ group actions~\cite{xu2025bayesian}.

However, extending the theoretical analysis to non-abelian groups presents more substantial challenges. In particular, the property of circular Gaussian statistics, which underpins the \EfN analysis for cyclic groups, does not naturally extend to the non-abelian setting. Preliminary simulations for the non-abelian dihedral group (not shown here) indicate that the convergence of the \EfN estimator's Fourier phases observed in the abelian setting does not directly carry over. We nonetheless expect that alignment over a richer group may still induce systematic structural biases, potentially governed by the group’s representation-theoretic decomposition. This broader perspective is consistent with recent observations of confirmation bias effects even in unstructured latent-variable models: for Gaussian mixture models with pure-noise data, a single iteration of population $k$-means or EM initialized at hypothesized centroids produces estimates that remain positively correlated with the initialization~\cite{balanov2025confirmation}, suggesting that algorithmic bias can persist beyond group-alignment settings, albeit often in weaker forms than the phase-locking phenomenon characterized here.

\paragraph{Hard assignment algorithms and the EM algorithm.} 
One promising avenue for future research involves examining hard-assignment algorithms. These algorithms iteratively refine estimates of an underlying signal from noisy observations, where the signal is obscured by high noise (unlike the pure noise scenario in EfN). The process begins by aligning observations with a template signal in the initial iteration and averaging them to improve the template for subsequent iterations. A central objective is to understand and characterize the model bias introduced throughout this iterative process, specifically, how the final output depends on the initial template.
Notably, the results presented in this work can be interpreted as describing a single iteration of a hard-assignment algorithm in the limit as the SNR approaches zero.

Another important direction is investigating the EM algorithm, a cornerstone of cryo-EM algorithms~\cite{scheres2012relion, punjani2017cryosparc}. EM maximizes the likelihood function of models incorporating nuisance parameters~\cite{dempster1977maximum}, a topic of significant recent interest~\cite{daskalakis2017ten, xu2016global}. Unlike hard-assignment algorithms, EM operates iteratively as a soft assignment algorithm, assigning probabilities to various possibilities and computing a weighted average rather than selecting a single optimal alignment per observation. Further exploration of EM could provide deeper insights into iterative methodologies in cryo-EM and their associated model biases.

\paragraph{Extension to the non-i.i.d. case.} 
While Theorem~\ref{thm:highDimentionalNoiseExtention} assumes that the noise entries within each observation vector $y_i$ are independent and identically distributed, an important direction for future research is to extend these results to more general noise settings. Specifically, the analysis could be broadened to cover cases where the noise entries are independent but not identically distributed, provided that their variances remain uniformly bounded and a Lindeberg-type condition is fulfilled \cite{durrett2019probability}. Moreover, the framework may apply to noise that exhibits certain weak dependence structures, such as mixing conditions, allowing the use of functional central limit theorems and ensuring asymptotic Gaussianity of the Fourier components~\cite{peligrad2010central, cerovecki2017clt, brillinger2001time}.

\paragraph{Asymptotic regimes.}  
In this work, we analyzed two asymptotic regimes: (1) $M \to \infty$ with fixed $d$ (Theorem \ref{thm:1}), and (2) $M \to \infty$ followed by $d \to \infty$ (Theorem \ref{thm:2}). These regimes capture distinct theoretical and practical scenarios. Our approach relies on classical probabilistic tools in the first limit ($M \to \infty$), such as SLLN and CLT, and results from the theory of Gaussian extremes (e.g., convergence to the Gumbel distribution) in the second ($d \to \infty$).

Other challenging asymptotic regimes merit further investigation. In particular, it is of interest to understand the behavior in the joint high-dimensional regime where both $M, d \to \infty$ with a fixed ratio, i.e., $\frac{d}{M} \to c \in (0,\infty)$. This regime, common in modern high-dimensional statistics, differs from the sequential limits we analyze. More broadly, other asymptotic behaviors of $(M,d)$ are possible. When both $M = M_n$ and $d = d_n$ vary according to general sequences, a variety of additional regimes may arise, each potentially requiring different analytical techniques. Typically, in such settings, classical limit theorems may no longer apply directly, and new challenges arise, such as subtle phase transitions in statistical behavior and the breakdown of averaging effects when $d$ and $M$ grow at comparable rates. Addressing these phenomena typically requires more advanced tools from high-dimensional probability. We view the analysis of further asymptotic settings as a valuable direction for future research.

\paragraph{Statistical inference.}
While the present work establishes the asymptotic consistency of the \EfN estimator's Fourier phases, an important direction for future research is to investigate their behavior in the finite-sample regime. In particular, developing tools for statistical inference, such as confidence intervals or non-asymptotic error bounds, would enhance the practical utility of the analysis. Addressing these questions may require the use of sharper probabilistic techniques beyond classical limit theorems, such as Berry–Esseen-type results, concentration inequalities, or non-asymptotic deviation bounds tailored to the specific structure of the problem.

\section*{Acknowledgment}
T.B. is supported in part by BSF under Grant 2020159, in part by NSF-BSF under Grant 2024791, and in part by ISF under Grant 1924/21.
W.H. is supported by ISF under Grant 1734/21. 

\bibliographystyle{plain}

\begin{appendices}

{\centering{\section*{Appendix}}}

\paragraph{Appendix organization.} 
Appendix~\ref{thm:preProof} provides general preliminaries used throughout the paper, including notation and common technical tools. Appendix~\ref{sec:auxForTheorem1} presents the auxiliary lemmas required for Theorem~\ref{thm:1}, whose full proof appears in Appendix~\ref{thm:proofs1}. For Theorem~\ref{thm:2}, the necessary supporting results are given in Appendices~\ref{sec:preliminariesToTheorem2}, with the proof provided in Appendix~\ref{sec:proof-of-thm-2}. Appendix~\ref{sec:proofOfPositiveCorrelation} contains the proof of Proposition~\ref{prop:positiveCorrelation}, establishing the positive correlation property. Appendix~\ref{sec:proofOfHighDimetnionalNoiseExtention} proves Theorem~\ref{thm:highDimentionalNoiseExtention}, which extends the results to high-dimensional settings with i.i.d. noise that is not necessarily Gaussian. Finally, Appendix~\ref{sec:circulantGauusianNoise} provides the proof of Proposition~\ref{prop:circulantGauusianNoise}, addressing the case of structured noise with a circulant Gaussian covariance.

\section{Preliminaries}\label{thm:preProof}
Before we delve into the proofs, we fix notations and definitions and prove auxiliary results that will be used in the proofs. 

\subsection{Notations}
Recall the definitions of the Fourier transforms of $x$ and $n_i$ from~\eqref{FourierSpace}, and recall that the signal length $d$ is assumed to be even. Note that since $x$ and $n_i$ are real-valued, their Fourier coefficients satisfy the conjugate-symmetry relation: 
\begin{align}
    \s{X}[k]=\overline{\s{X}[d-k]},\quad 
    \s{N}_i[k]=\overline{\s{N}_i[d-k]}. \label{eqn:conjRelation}
\end{align}
In particular, $|\s{N}_i[k]| = |\s{N}_i[d-k]|$ and $\phi_{\s{N}_i}[k] = -\phi_{\s{N}_i}[d-k]$, which implies that only the first $d/2 + 1$ components of $\s{N}[k]$ are statistically independent. 

The definition of the maximal correlation in~\eqref{eqn:OptShiftRealSpace} can be represented in the Fourier domain as follows,
  \begin{align}
    \s{\hat{R}}_i &\triangleq \underset{0 \leq r \leq d-1}{\argmax} {\, \langle{n_i}, \mathcal{T}_r x\rangle}
    \\ & = \underset{0\leq r\leq d-1}{\argmax} \ { \langle \mathcal{F}\ppp{n_i},  {\mathcal{F}\ppp{\mathcal{T}_r x} }\rangle}
    \\ & = \underset{0\leq r\leq d-1}{\argmax} \sum_{k=0}^{d-1} {\abs{\s{X}[k]} \abs{\s{N}_i[k]} \, \cos \left( \frac{2\pi kr}{d} +  \phi_{\s{N}_i}[k] - \phi_{\s{X}}[k] \right)}.\label{eqn:OptShiftFourier}
\end{align}
To simplify notation, we define 
 \begin{align}
    {\s{S}_{i}[r]} &\triangleq \sum_{k=0}^{d-1} {\abs{\s{X}[k]} \abs{\s{N}_i[k]} \, \cos \left( \frac{2\pi kr}{d} +  \phi_{\s{N}_i}[k] - \phi_{\s{X}}[k] \right)} , \label{equ:maxGaussDef}
\end{align}
for $0\leq r\leq d-1$, and therefore, $\s{\hat{R}}_i = \argmax_{0 \leq r \leq {d-1}}\s{S}_{i}[r]$. We note that for any $0\leq i\leq M-1$, the random vector $\s{S}_i \triangleq (\s{S}_i[0],\s{S}_i[1],\ldots,\s{S}_i[d-1])^T$ is Gaussian distributed, with zero mean vector, and a circulant covariance matrix; therefore, it is a cyclo-stationary random process.

Our goal is to investigate the phase and magnitude of the estimator $\hat{\s{X}}$ in \eqref{eqn:estimatorFourierRepresentation_pre}. Simple manipulations reveal that, for any $0\leq k\leq d-1$, the estimator's phases are given by,
\begin{align}
    \phi_{\s{\hat{X}}}[k] = \phi_{\s{X}}[k] + \arctan \left( \frac{\sum_{i=0}^{M-1}\abs{\s{N}_i[k]} \sin \left( \phi_{e,i}[k] \right)}{\sum_{i=0}^{M-1}\abs{\s{N}_i[k]} \cos \left( \phi_{e,i}[k] \right)} \right ),\label{eqn:estimatorPhase}
\end{align}
where we define,
\begin{align}
    \phi_{e,i}[k] \triangleq \frac{2\pi k\s{\hat{R}}_i}{d} +  \phi_{\s{N}_i}[k] - \phi_{\s{X}}[k],
    \label{eqn:phaseDifferenceTerm}
\end{align}
and
\begin{align}
   |\s{\hat{X}}[k]|  = 
     \frac{1}{M} \abs{ \sum_{i=0}^{M-1}  \abs{\s{N}_i[k]} e^{j\phi_{e,i}[k]}}.
    \label{eqn:estimatorMagnitude}
\end{align}

\subsection{The convergence of the Einstein from Noise estimator} \label{sec:convergenceOfEfNestimator}
Recall the definition of $\phi_{e,i}[k]$ in \eqref{eqn:phaseDifferenceTerm}. Then, following \eqref{eqn:estimatorFourierRepresentation_pre}, and simple algebraic manipulation,
    \begin{align}
        \hat{\s{X}}[k] & =  \frac{1}{M} \sum_{i=0}^{M-1} \abs{\s{N}_i[k]} e^{j\phi_{\s{N}_i}[k]} e^{j\frac{2\pi k}{d}\s{\hat{R}}_i} 
        \\ & = \frac{e^{j\phi_{\s{X}}[k]}}{M} \sum_{i=0}^{M-1} \abs{\s{N}_i[k]} e^{j\phi_{\s{N}_i}[k]} e^{j\frac{2\pi k}{d}\s{\hat{R}}_i} e^{-j\phi_{\s{X}}[k]}
        \\ & = \frac{e^{j\phi_{\s{X}}[k]}}{M} \sum_{i=0}^{M-1} \abs{\s{N}_i[k]} e^{j\phi_{e,i}[k]}. \label{eqn:A11}
    \end{align}
By applying the strong law of large numbers (SLLN) on the right-hand-side of \eqref{eqn:A11}, for $M \to \infty$, we have,
\begin{align}
   \s{\hat{X}}[k] e^{-j\phi_{\s{X}}[k]}  = & \frac{1}{M} { \sum_{i=0}^{M-1}  \abs{\s{N}_i[k]} e^{j\phi_{e,i}[k]}} 
   \\ & \xrightarrow[]{\s{a.s.}} 
    \mathbb{E} \left[ \abs{\s{N}_1[k]} \cos\p{\phi_{e,1}[k]} \right] + j \mathbb{E} \left[ \abs{\s{N}_1[k]} \sin\p{\phi_{e,1}[k]} \right],\label{eqn:strongLLN}
\end{align}
where we have used the fact that the sequences of random variables $\{\abs{\s{N}_i[k]} \sin \left( \phi_{e,i}[k] \right)\}_{i=0}^{M-1}$ and
$\{\abs{\s{N}_i[k]} \cos \left( \phi_{e,i}[k] \right)\}_{i=0}^{M-1}$ are i.i.d. with finite mean and variances. 

We denote for every $0 \leq k \leq d-1$:
\begin{align}
    \mu_{\s{A},k} &\triangleq \mathbb{E}\left[ \abs{\s{N}_1[k]} \sin (\phi_{e,1}[k]) \right],  \label{eqn:muA}
    \\
    \mu_{\s{B},k} &\triangleq \mathbb{E}\left[ \abs{\s{N}_1[k]} \cos (\phi_{e,1}[k]) \right], \label{eqn:muB}
\end{align}
the imaginary and real parts of the right-hand-side of \eqref{eqn:strongLLN}, respectively. In addition, we denote:
\begin{align}
    \sigma_{\s{A},k}^2 \triangleq \s{Var}\p{\abs{\s{N}_1[k]} \sin (\phi_{e,1}[k])},
    \label{eqn:sigmaA}
    \\ \sigma_{\s{B},k}^2 \triangleq \s{Var}\p{\abs{\s{N}_1[k]} \cos (\phi_{e,1}[k])}.
    \label{eqn:sigmaB}    
\end{align}
In Theorem \ref{thm:1}, we prove that $\mu_{\s{A},k} = 0$ while $\mu_{\s{B},k} > 0$. Consequently, by \eqref{eqn:strongLLN}, as $M \to \infty$, the EfN estimator converges to a non-vanishing signal, and its Fourier phases converge those of the template (Einstein).

\subsection{Conditioning on the Fourier frequency noise component} 
Throughout the proofs, we condition the noise realization $\s{S}_i$ \eqref{equ:maxGaussDef} on the $k$-th Fourier coefficient $\s{S}_i\vert\s{N}_i[k]$, to capture the dependence of $\hat{\s{R}}_i$ on the noise component. Specifically, we prove the following:
\begin{lem} \label{lemma:conditioning}
    Recall the definition of $\s{S}_i$ \eqref{equ:maxGaussDef}. Then, for every $k \in \ppp{1,2, \dots, \frac{d}{2}-1, \frac{d}{2}+1, \dots d-1}$,
    \begin{align}
        \s{S}_i\vert\s{N}_i[k] \sim \calN (\s{\mu}_{k,i}, \s{\Sigma}_{k,i}), \label{eqn:conditionalGaussian}
    \end{align}
    where,
    \begin{align}
        \mu_{k,i}[r] &\triangleq \mathbb{E}\left[ \s{S}_i[r]\vert\s{N}_i[k] \right]= 2\abs{\s{X}[k]} \abs{\s{N}_i[k]} \cos \left( \frac{2\pi kr}{d} + \phi_{\s{N}_i}[k] - \phi_{\s{X}}[k] \right),
        \label{eqn: 14}
    \end{align}
    for $0\leq r \leq d-1$, and
    \begin{align}
        \s{\Sigma}_{k,i}[r,s] & \triangleq \mathbb{E}\left[ \left(\s{S}_i[r] - \mathbb{E}\s{S}_i[r] \right) \left( \s{S}_i[s] - \mathbb{E}\s{S}_i[s] \right)\vert\s{N}_i[k] \right] \nonumber\\ 
        & = \frac{\sigma^2}{2} \sum_{\ell = 0}^{d-1} |\widetilde{\s{X}}_k[\ell]|^2 \cos \left( \frac{2\pi \ell}{d}(r-s) \right), \label{eqn:covarainceMatrix}
    \end{align}
    for $0\leq r,s\leq d-1$, where $\widetilde{\s{X}}_k$ is defined by:
    \begin{align}
        \widetilde{\s{X}}_k[\ell] \triangleq  \begin{cases}
                0  & \s{if}  \ell = k, d-k, \\
             \s{X}[\ell]  & \s{if}  \ell = 0, d/2, \\
                \sqrt{2}\cdot\s{X}[\ell]  & \s{otherwise}.
          \end{cases}
           \label{eqn:XtildeDefenition}
    \end{align}
\end{lem}

\begin{remark}
    In Lemma \ref{lemma:conditioning}, and throughout this work, we condition on $\s{S}_i \vert \s{N}_i[k]$ for all $k \neq 0, d/2$. Since the signals $x$ and $n_i$ lie in $\mathbb{R}^d$, their Fourier phases satisfy $\phi_{\s{X}}[0] = 0$ and $\phi_{\s{X}}[d/2] = 0$. Therefore, we restrict our analysis to the convergence of the Fourier phases for $k \neq 0, d/2$, as the convergence at $k = 0$ and $k = d/2$ is trivial.
\end{remark}

Note that the conditional process $\s{S}_i\vert\s{N}_i[k]$ is Gaussian because it is given by a linear transform of i.i.d. Gaussian variables. Also, since its covariance matrix is circulant and depends only on the difference between the two indices, i.e., $\s{\Sigma}_{k,i}[r,s] = \sigma_{k,i}[\abs{r-s}]$, it is cycle-stationary with a cosine trend. The eigenvalues of this circulant matrix are given by the DFT of its first row, and thus its $\ell$-th eigenvalue equals $|\widetilde{\s{X}}_k[\ell]|^2$, for $0 \leq \ell\leq d-1$.

For simplicity of notation, whenever it is clear from the context, we will omit the dependence of the above quantities on the $i$-th observation and $k$-th frequency indices, and we will use $\mu[r]$ and $\s{\Sigma}[r,s]$, instead. 
Furthermore, for convenience, we assume that the template vector is normalized to unity, i.e. $\sum_{\ell=0}^{d-1} |\s{X}[\ell]|^2 = 1$.

\begin{proof}[Proof of Lemma~\ref{lemma:conditioning}]
    By definition of $\s{S}_i$ \eqref{equ:maxGaussDef}, we have for every $k \neq 0, d/2$,
    \begin{align}
        \nonumber \s{S}_i\pp{r}\vert \s{N}_i[k] = & 2\abs{\s{X}[k]} \abs{\s{N}_i[k]} \cos \left( \frac{2\pi kr}{d} + \phi_{\s{N}_i}[k] - \phi_{\s{X}}[k] \right) 
        \\ & +\sum_{\ell\neq k, d-k} {\abs{\s{X}[\ell]} \abs{\s{N}_i[\ell]} \, \cos \left( \frac{2\pi \ell r}{d} +  \phi_{\s{N}_i}[\ell] - \phi_{\s{X}}[\ell] \right)}, \label{eqn:A21}
    \end{align}
    where we have used the property of $\s{X}[k]=\overline{\s{X}[d-k]}, \  
    \s{N}_i[k]=\overline{\s{N}_i[d-k]}$, \eqref{eqn:conjRelation}. 
    Clearly, as $\mathbb{E} \pp{\s{N}_i \pp{\ell}} = 0$, for every $0 \leq \ell \leq d-1$, we have,
    \begin{align}
        \mathbb{E} \pp{{\abs{\s{X}[\ell]} \abs{\s{N}_i[\ell]} \, \cos \left( \frac{2\pi \ell r}{d} +  \phi_{\s{N}_i}[\ell] - \phi_{\s{X}}[\ell] \right)}} = 0, \label{eqn:A22}
    \end{align}
    for every $0 \leq \ell \leq d-1$. Combining \eqref{eqn:A21} and \eqref{eqn:A22} results,
    \begin{align}
        \mu_{k,i}[r] & = \mathbb{E}\left[ \s{S}_i[r]\vert\s{N}_i[k] \right]= 2\abs{\s{X}[k]} \abs{\s{N}_i[k]} \cos \left( \frac{2\pi kr}{d} + \phi_{\s{N}_i}[k] - \phi_{\s{X}}[k] \right),
    \end{align}
    proving the first result about the means.

    \paragraph{The covariance term.}
    In the following, we derive the covariance term,
    \begin{align}
        \s{\Sigma}_{k,i}[r,s] & \triangleq \mathbb{E}\left[ \left(\s{S}_i[r] - \mathbb{E}\s{S}_i[r] \right) \left( \s{S}_i[s] - \mathbb{E}\s{S}_i[s] \right)\vert\s{N}_i[k] \right].
    \end{align}
    Denote,
    \begin{align}
        \nonumber \rho_{k,i} \pp{r} &  \triangleq \s{S}_i[r] - \mathbb{E}\s{S}_i[r]  
        \\ & = \sum_{\ell\neq k, d-k} {\abs{\s{X}[\ell]} \abs{\s{N}_i[\ell]} \, \cos \left( \frac{2\pi \ell r}{d} +  \phi_{\s{N}_i}[\ell] - \phi_{\s{X}}[\ell] \right)}.
        \label{eqn:A24}
    \end{align}
    Denote the set 
    \begin{align}
        \mathcal{I} = \ppp{1,2, \dots k-1, k+1, \dots, d/2-1},
    \end{align}
    which defines the indices of the Fourier coefficients, excluding $\ppp{0, k, d/2}$.

    As the sequences $\ppp{|\s{N}_i[\ell]|}_{\ell=0}^{d/2}$ and $\ppp{\phi_{\s{N}_i}[\ell]}_{\ell=0}^{d/2}$ satisfy $\s{N}_i[\ell]=\overline{\s{N}_i[d-\ell]}$, as well as $\s{X}[\ell]=\overline{\s{X}[d-\ell]}$, we have,
    \begin{align}
        \nonumber  \rho_{k,i} \pp{r}= \sum_{\ell\neq k, d-k} & {\abs{\s{X}[\ell]} \abs{\s{N}_i[\ell]} \, \cos \left( \frac{2\pi \ell r}{d} +  \phi_{\s{N}_i}[\ell] - \phi_{\s{X}}[\ell] \right)} 
         = \\ = & \nonumber \sum_{\ell \in \ppp{0,d/2}} \abs{\s{X}[\ell]}\abs{\s{N}_i[\ell]} \, \cos \left( \frac{2\pi \ell r}{d} +  \phi_{\s{N}_i}[\ell] - \phi_{\s{X}}[\ell] \right) \\ & + 2 \cdot  \sum_{\ell \in \mathcal{I}} \abs{\s{X}[\ell]}\abs{\s{N}_i[\ell]} \, \cos \left( \frac{2\pi \ell r}{d} +  \phi_{\s{N}_i}[\ell] - \phi_{\s{X}}[\ell] \right),
        \label{eqn:A25}
    \end{align}
    where each one of the terms in the sum is independent. 
    
    Since the terms in the sum on the right-hand side of \eqref{eqn:A25} are independent, that is, $\mathbb{E} \pp{\s{N}_i\pp{\ell_1} \overline{\s{N}_i\pp{\ell_2}}} = \mathbb{E}\pp{\abs{\s{N}_i\pp{\ell_1}}^2} \delta_{\ell_1,\ell_2}$, it follows that,    
    \begin{align}
        \nonumber\s{\Sigma}_{k,i}&[r,s] =  \mathbb{E} \pp{\rho_{k,i} \pp{r}\rho_{k,i} \pp{s} \vert \s{N}_i[k] } 
        \\ & = \nonumber \mathbb{E} \pp{{\sum_{\ell \in \ppp{0, d/2}} {\abs{\s{X}[\ell]}^2 \abs{\s{N}_i[\ell]}^2 \, \cos \left( \frac{2\pi \ell r}{d} +  \phi_{\s{N}_i}[\ell] - \phi_{\s{X}}[\ell] \right) \cos \left( \frac{2\pi \ell s}{d} +  \phi_{\s{N}_i}[\ell] - \phi_{\s{X}}[\ell] \right)}}}
        \\ & + 4 \cdot \mathbb{E} \pp{{\sum_{\ell \in \mathcal{I}} {\abs{\s{X}[\ell]}^2 \abs{\s{N}_i[\ell]}^2 \, \cos \left( \frac{2\pi \ell r}{d} +  \phi_{\s{N}_i}[\ell] - \phi_{\s{X}}[\ell] \right) \cos \left( \frac{2\pi \ell s}{d} +  \phi_{\s{N}_i}[\ell] - \phi_{\s{X}}[\ell] \right)}}}. \label{eqn:A28}
    \end{align}
    The expectation value in \eqref{eqn:A28} is composed of the multiplications of cosines. Applying trigonometric identities, we obtain:
    \begin{align}
        \nonumber \cos & \left( \frac{2\pi \ell r}{d} + \phi_{\s{N}_i}[\ell] - \phi_{\s{X}}[\ell] \right)\cos \left( \frac{2\pi \ell s}{d} + \phi_{\s{N}_i}[\ell] - \phi_{\s{X}}[\ell] \right) 
        \\ & = \frac{1}{2}\cos \left( \frac{2\pi \ell (r-s)}{d} \right) + \frac{1}{2} \cos \left( \frac{2\pi \ell (r+s)}{d} + 2 \p{\phi_{\s{N}_i}[\ell] - \phi_{\s{X}}[\ell]} \right). \label{eqn:trigonometricIdentity}
    \end{align}
    For $\ell \in \left\{1,\ldots,\frac{d}{2}-1\right\}$, the DFT coefficients $N_i[\ell]$ are i.i.d.\ circular complex Gaussian (under white Gaussian noise in the time domain), hence independent across $\ell$, with $\phi_{N_i}[\ell] \sim \mathrm{Unif}\!\left[-\pi,\pi\right)$ independent of $\lvert N_i[\ell]\rvert$. Thus, 
    \begin{align}
        \nonumber \mathbb{E} & \pp{\abs{\s{N}_i[\ell]}^2 \cos \left( \frac{2\pi \ell r}{d} + \phi_{\s{N}_i}[\ell] - \phi_{\s{X}}[\ell] \right) \cos \left( \frac{2\pi \ell s}{d} + \phi_{\s{N}_i}[\ell] - \phi_{\s{X}}[\ell] \right)}
        \\ & = \frac{1}{2}\mathbb{E} \pp{\abs{\s{N}_i[\ell]}^2} \cos \left( \frac{2\pi \ell (r-s)}{d} \right) = \frac{\sigma^2}{2} \cos \left( \frac{2\pi \ell (r-s)}{d} \right). \label{eqn:A27}
    \end{align}
    Substituting \eqref{eqn:A27} into \eqref{eqn:A28} leads to,
    \begin{align}
         \nonumber \frac{2}{\sigma^2}\mathbb{E} & \pp{\rho_{k,i} \pp{r}\rho_{k,i} \pp{s} \vert \s{N}_i[k] } =  {{\sum_{\ell \in \ppp{0, d/2}} {\abs{\s{X}[\ell]}^2  \, \cos \left( \frac{2\pi \ell}{d}(r-s) \right)}}} 
        \\ & + 4 \cdot {{\sum_{\ell \in \mathcal{I}} {\abs{\s{X}[\ell]}^2  \, \cos \left( \frac{2\pi \ell}{d}(r-s) \right)}}}. \label{eqn:A30}
    \end{align}
    As for every $\ell \in \mathcal{I}$, $\abs{\s{X}[\ell]} = \abs{\s{X}[d-\ell]}$, we have,
    \begin{align}
        {{\sum_{\ell \in \mathcal{I}} {4\abs{\s{X}[\ell]}^2  \, \cos \left( \frac{2\pi \ell}{d}(r-s) \right)}}} = {{\sum_{\ell \neq \ppp{0,k,d/2,d-k}} {2\abs{\s{X}[\ell]}^2  \, \cos \left( \frac{2\pi \ell}{d}(r-s) \right)}}}. \label{eqn:A31}
    \end{align}
    Substituting \eqref{eqn:A31} into \eqref{eqn:A30}
    \begin{align}
        \nonumber \mathbb{E} & \pp{\rho_{k,i} \pp{r}\rho_{k,i} \pp{s} \vert \s{N}_i[k] } =  \frac{\sigma^2}{2} \sum_{\ell = 0}^{d-1} |\widetilde{\s{X}}_k[\ell]|^2 \cos \left( \frac{2\pi \ell}{d}(r-s) \right),
    \end{align}
    for $\widetilde{\s{X}}_k[\ell]$ defined in \eqref{eqn:XtildeDefenition}, which completes the proof.
\end{proof}

\subsection{Uniqueness of the maximizer}
\begin{lem}[Uniqueness of the maximizer]
\label{lem:unique_conditional_argmax_model}
Recall the definition of $\s{S}_i$ from~\eqref{equ:maxGaussDef}. Assume $d\ge 6$ is even. Fix
$k\in\{1,\ldots,\frac{d}{2}-1,\frac{d}{2}+1,\ldots,d-1\}$ and recall from Lemma~\ref{lemma:conditioning} that
\begin{align}
    \s{S}_i\vert \s{N}_i[k]\sim\mathcal{N}(\mu_{k,i},\Sigma_{k,i}),\qquad
    \Sigma_{k,i}[r,s]=\frac{\sigma^2}{2}\sum_{\ell=0}^{d-1}\big|\widetilde{\s{X}}_k[\ell]\big|^2 \cos\!\left(\frac{2\pi\ell}{d}(r-s)\right).
\end{align}
Assume moreover that the template spectrum is non-vanishing, i.e.\ $\s{X}[\ell]\neq 0$ for all $\ell\in\{0,\ldots,d-1\}$.
Then, for every $r\neq s$,
\begin{align}
    \label{eq:Var_diff_conditional_proved}
    \operatorname{Var}\!\Big( (\s{S}_i\vert \s{N}_i[k])_r - (\s{S}_i\vert \s{N}_i[k])_s \Big)
    &= \sigma^2\sum_{\ell=0}^{d-1}\big|\widetilde{\s{X}}_k[\ell]\big|^2
    \Big(1-\cos\!\big(\tfrac{2\pi\ell}{d}(r-s)\big)\Big)
     > 0 .
\end{align}
Consequently, the maximizer $\s{\hat{R}}_i=\argmax_{0\le r\le d-1} (\s{S}_i\vert \s{N}_i[k])_r$ is unique almost surely.
\end{lem}

\begin{proof}[Proof of Lemma~\ref{lem:unique_conditional_argmax_model}]
Fix $r\neq s$ and set $m\triangleq r-s\not\equiv 0\ (\mathrm{mod}\ d)$. Using the covariance formula and circulantness,
\begin{align}
    \operatorname{Var}\!\Big( (\s{S}_i\vert \s{N}_i[k])_r - (\s{S}_i\vert \s{N}_i[k])_s \Big)
    &= \Sigma_{k,i}[r,r]+\Sigma_{k,i}[s,s]-2\Sigma_{k,i}[r,s] \nonumber\\
    &= \sigma^2\sum_{\ell=0}^{d-1}\big|\widetilde{\s{X}}_k[\ell]\big|^2
    \Big(1-\cos\!\big(\tfrac{2\pi\ell}{d}m\big)\Big),
\end{align}
which proves the identity in \eqref{eq:Var_diff_conditional_proved}. Each summand is nonnegative.

It remains to show strict positivity. We show that there is at least one term in the sum that is strictly positive. Define
\begin{align}
    H_m  \triangleq  \{\ell\in\{0,\ldots,d-1\}:\ \ell m \equiv 0\ (\mathrm{mod}\ d)\}.
\end{align}
Then $|H_m|=\gcd(d,m)\le d/2$ since $m\not\equiv 0\ (\mathrm{mod}\ d)$. Because $d\ge 6$, we have
\begin{align}
    \big|\{0,\ldots,d-1\}\setminus(H_m\cup\{k,d-k\})\big| 
    &  \ge  d-|H_m|-2 \\ &  \ge  d-\frac d2-2  =  \frac d2-2  > 0.
\end{align}
Hence we may choose $\ell_0\in\{0,\ldots,d-1\}\setminus(H_m\cup\{k,d-k\})$. By construction,
$\ell_0\notin H_m$ implies $\ell_0 m\not\equiv 0\ (\mathrm{mod}\ d)$, i.e.\ $\cos(2\pi\ell_0 m/d)\neq 1$ and thus
$1-\cos(2\pi\ell_0 m/d)>0$. Moreover, since $\widetilde{\s{X}}_k[\ell]=0$ only for $\ell\in\{k,d-k\}$ and
$\s{X}[\ell]\neq 0$ for all $\ell$, we have $|\widetilde{\s{X}}_k[\ell_0]|^2>0$. Therefore the $\ell_0$-term in the sum is
strictly positive, and the entire variance is strictly positive, proving \eqref{eq:Var_diff_conditional_proved}.

Finally, for any $r\neq s$, the difference $(\s{S}_i\vert \s{N}_i[k])_r-(\s{S}_i\vert \s{N}_i[k])_s$ is a non-degenerate Gaussian,
hence $\mathbb{P}\!\big((\s{S}_i\vert \s{N}_i[k])_r=(\s{S}_i\vert \s{N}_i[k])_s\big)=0$. A union bound over finitely many pairs
implies ties occur with probability $0$, so the maximizer is unique almost surely.
\end{proof}

\subsection{Positive probability of each maximizer event}
\begin{lem}[Positive probability of each strict maximizer event]
\label{lem:positive_prob_each_argmax_conditional}
Recall the definition of $\s{S}_i$ from~\eqref{equ:maxGaussDef}. Assume $d\ge 6$ is even. Fix
$k\in\{1,\ldots,\frac{d}{2}-1,\frac{d}{2}+1,\ldots,d-1\}$ and recall from Lemma~\ref{lemma:conditioning} that
\begin{align}
    \label{eqn:model-and-covaraince-matrix}
    \s{S}_i\vert \s{N}_i[k]\sim\mathcal{N}(\mu_{k,i},\Sigma_{k,i}),
    \qquad
    \Sigma_{k,i}[r,s]=\frac{\sigma^2}{2}\sum_{\ell=0}^{d-1}\big|\widetilde{\s{X}}_k[\ell]\big|^2 \cos\!\Big(\frac{2\pi\ell}{d}(r-s)\Big).
\end{align}
Assume moreover that the template spectrum is non-vanishing, i.e.\ $\s{X}[\ell]\neq 0$ for all $\ell\in\{0,\ldots,d-1\}$.
Then, for every $r\in\{0,\ldots,d-1\}$, the event
\begin{align}
    \mathcal{C}_r  \triangleq  \Big\{\, (\s{S}_i\vert \s{N}_i[k])_r > \max_{t\neq r} (\s{S}_i\vert \s{N}_i[k])_t \,\Big\}
\end{align}
has strictly positive probability:
\begin{align}
    \mathbb{P}\big(\mathcal{C}_r \,\big|\, \s{N}_i[k]\big)>0,
\end{align}
for almost every realization of $\s{N}_i[k]$.
\end{lem}

\begin{proof}[Proof of Lemma~\ref{lem:positive_prob_each_argmax_conditional}]
Fix $k$ as in the statement and condition on a realization of $\s{N}_i[k]$. Then $Y\triangleq \s{S}_i\vert \s{N}_i[k]\sim\mathcal{N}(m,\Sigma)$ with $m=\mu_{k,i}$ and $\Sigma=\Sigma_{k,i}$.
By Lemma~\ref{lem:unique_conditional_argmax_model}, for every $r\neq s$, $\operatorname{Var}(Y_r-Y_s)>0$, hence $Y_r-Y_s$ is a non-degenerate Gaussian and $\mathbb{P}(Y_r=Y_s)=0$. In particular, ties occur with probability $0$.

Since $\Sigma$ is real, symmetric, and circulant, it is diagonalized by the DFT basis: there exist eigenvectors $\{f_\ell\}_{\ell=0}^{d-1}$ (the Fourier modes) such that the corresponding eigenvalues $\{\lambda_\ell\}_{\ell=0}^{d-1}$ are given by the DFT of the first row of $\Sigma$.
In Model~\ref{eqn:model-and-covaraince-matrix}, Lemma~\ref{lemma:conditioning} shows that these eigenvalues satisfy
\begin{align}
    \lambda_\ell \ \propto\ |\widetilde{\s{X}}_k[\ell]|^2,\qquad \ell=0,\ldots,d-1.
\end{align}
Under the non-vanishing spectrum assumption $\s{X}[\ell]\neq 0$ for all $\ell$, and by the definition
\eqref{eqn:XtildeDefenition}, we have
\begin{align}
    |\widetilde{\s{X}}_k[\ell]|^2>0 \ \quad \text{for all }\ell\notin\{k,d-k\},
\end{align}
while $|\widetilde{\s{X}}_k[k]|^2=|\widetilde{\s{X}}_k[d-k]|^2=0$.
Hence $\Sigma$ has exactly two zero eigenvalues, corresponding to the $\ell=k$ and $\ell=d-k$ Fourier modes.
In the real domain, these two modes are equivalently represented by the cosine and sine vectors $c,s \in \mathbb{R}^d$
\begin{align}
    \label{eqn:def-c-s}
    c_t=\cos\!\Big(\tfrac{2\pi k}{d}t\Big),\qquad s_t=\sin\!\Big(\tfrac{2\pi k}{d}t\Big),
\end{align}
for $t=0,\ldots,d-1$, so
\begin{align}
    \label{eqn:c-s-span-ker}
    \ker(\Sigma)=\operatorname{span}\{c,s\}.
\end{align}
Let $L\triangleq \operatorname{range}(\Sigma)$. Since $\Sigma$ is symmetric, we have $L=(\ker\Sigma)^\perp$, and therefore the Gaussian vector $Y\sim\mathcal{N}(m,\Sigma)$ is supported on the affine subspace $m+L$.

We fix $r\in\{0,\ldots,d-1\}$ and consider the open cone
\begin{align}
    \mathcal{C}_r \triangleq  \{y\in\mathbb{R}^d:\ y_r>\max_{t\neq r}y_t\}.
\end{align}
We claim that $\mathcal{C}_r\cap(m+L)\neq\emptyset$, i.e., the affine support contains at least one point whose unique largest coordinate is the $r$-th.

To build such a point, we start from the $r$-th standard basis vector $u=e_r$, whose maximizer is trivially at $r$ (i.e., $e_r \in \mathcal{C}_r$), but which may fail to belong to $m + L$.
We therefore remove from $u$ its components along the two forbidden directions $c$ and $s$ (which span $\ker(\Sigma)$~\eqref{eqn:c-s-span-ker}). We define
\begin{align}
    \alpha \triangleq \frac{\langle u,c\rangle}{\langle c,c\rangle},\qquad
    \beta \triangleq \frac{\langle u,s\rangle}{\langle s,s\rangle},\qquad
    v\triangleq u-\alpha\,c-\beta\,s .
\end{align}
Because $k\notin\{0,d/2\}$, the sine and cosine vectors are orthogonal and have equal energy:
\begin{align}
    \label{eqn:orth-basis}
    \langle c,s\rangle=0,\qquad \langle c,c\rangle=\langle s,s\rangle=d/2.
\end{align}
Hence $\langle v,c\rangle=\langle v,s\rangle=0$, which means $v\in(\ker\Sigma)^\perp=L$.

Next we show that $v$ remains strongly peaked at the $r$-th coordinate. Since $u=e_r$, we have $\langle u,c\rangle=c_r$ and $\langle u,s\rangle=s_r$, and therefore
\begin{align}
    v_r = 1-\frac{c_r^2}{\langle c,c\rangle}-\frac{s_r^2}{\langle s,s\rangle} = 1-\frac{c_r^2+s_r^2}{d/2} = 1-\frac{2}{d},
\end{align}
where we used $c_r^2+s_r^2=1$~\eqref{eqn:def-c-s}. For $t\neq r$, since $u=e_r$ is the $r$-th standard basis vector, we have $u_t=0$; thus, using~\eqref{eqn:def-c-s} and~\eqref{eqn:orth-basis}, we obtain
\begin{align}
    v_t = -\alpha \,c_t- \beta \,s_t = -\frac{c_r c_t+s_r s_t}{d/2} = -\frac{2}{d}\cos\!\Big(\tfrac{2\pi k}{d}(t-r)\Big),
\end{align}
so $|v_t|\le 2/d$ for all $t\neq r$.
Consequently, letting
\begin{align}
    \delta \triangleq v_r-\max_{t\neq r}v_t,
\end{align}
we obtain the uniform lower bound for $d > 4$
\begin{align}
    \delta \ge \Big(1-\frac{2}{d}\Big)-\frac{2}{d} = 1-\frac{4}{d} > 0.
\end{align}
Thus, within the direction $v\in L$, the $r$-th coordinate exceeds every other coordinate by at least $\delta$.

Finally, because the affine support is $m+L$, any point of the form $y(\gamma)=m+\gamma v$ lies in the support. The offset $m$ may change the ordering for small $\gamma$, but scaling $\gamma$ makes the $v$-term dominate. In particular, let
\begin{align}
    M\triangleq \max_{t\neq r}|m_r-m_t|.
\end{align}
Then for any $t\neq r$,
\begin{align}
    y_r(\gamma)-y_t(\gamma)=(m_r-m_t)+\gamma(v_r-v_t)\ \ge\ -M+\gamma\delta.
\end{align}
Choosing $\gamma>M/\delta$ guarantees $y_r(\gamma)>y_t(\gamma)$ for all $t\neq r$, hence
\begin{align}
    y(\gamma)\in \mathcal{C}_r\cap(m+L),
\end{align}
proving $\mathcal{C}_r\cap(m+L)\neq\emptyset$.

Since $\mathcal{C}_r$ is open in $\mathbb{R}^d$, the intersection $\mathcal{C}_r\cap(m+L)$ contains a nonempty open subset of the affine support $m+L$. A Gaussian measure assigns positive probability to any nonempty open subset of its affine support~\cite{kallenberg1997foundations}; therefore, $\mathbb{P}(Y\in\mathcal{C}_r)>0$, or equivalently $\mathbb{P}\!\big(\mathcal{C}_r \,\big|\, \s{N}_i[k]\big)>0$, completing the proof.
\end{proof}

\subsection{Auxiliary result for Proposition~\ref{prop:2}}

Let $\s{S}^{(+)} \sim \calN\!\left( \mu, \s{\Sigma} \right)$ and $\s{S}^{(-)} \sim \calN\!\left( -\mu, \s{\Sigma} \right)$ be two $d$-dimensional Gaussian vectors, where $\s{\Sigma}$ is a real, symmetric, circulant covariance matrix. Define the maximizers
    \begin{align}
        \s{\hat{R}}^{(+)} &= \argmax_{0 \leq \ell \leq {d-1}}\s{S}_\ell^{(+)},
        \label{eqn:27}\\
        \s{\hat{R}}^{(-)} &= \argmax_{0 \leq \ell \leq {d-1}}\s{S}_\ell^{(-)}.
        \label{eqn:28}
    \end{align}
and assume they are unique almost surely. We define the entries of $\mu$ as,
    \begin{align}   
        \mu_\ell\triangleq[\mu]_\ell = \cos { \left( \frac{2\pi k}{d}\ell + \varphi \right)},
        \label{eqn:35}
    \end{align}
for $\varphi \in [0, 2\pi)$, and $0\leq \ell\leq d-1$. Note that $-\mu_\ell = \cos { \left( \frac{2\pi k}{d}\ell + \varphi +\pi\right)}$, for $0\leq \ell\leq d-1$. 

Then, we have the following result.
\begin{proposition} \label{lemma:1}
Consider the definitions in \eqref{eqn:35}--\eqref{eqn:28}, and assume the maximizers in~\eqref{eqn:28} are unique a.s. Moreover, assume that for every $r\in\{1,\ldots,d-1\}$,
\begin{align}
    \label{eq:pos_prob_Cr_assumption} \mathbb{P}\!\left(\s{S}^{(+)}_r>\max_{t\neq r}\s{S}^{(+)}_t\right)>0
    \qquad\text{and}\qquad    \mathbb{P}\!\left(\s{S}^{(-)}_r>\max_{t\neq r}\s{S}^{(-)}_t\right)>0 .
\end{align}
Fix $0 \leq \ell \leq {d-1}$. If $\mu_\ell> 0$, then,
    \begin{align}   
        \mathbb{P} \left[ \hat{\s{R}}^{(+)} = \ell \right] > \mathbb{P} \left[  \hat{\s{R}}^{(-)} = \ell \right],\label{eqn:lemma1conc0}
    \end{align} 
    otherwise, if $\mu_\ell<0$, then,
    \begin{align}   
        \mathbb{P} \left[ \hat{\s{R}}^{(-)} = \ell \right] > \mathbb{P}  \left[  \hat{\s{R}}^{(+)} = \ell \right].\label{eqn:lemma1conc1}
    \end{align}
    In particular, for any $\varphi\in[0,2\pi)$ and $0\leq k\leq d-1$,
    \begin{align}   
        \mathbb{E}\pp{\cos\left(\frac{2\pi k}{d} \hat{\s{R}}^{(+)} + \varphi \right)} + \mathbb{E}\pp{\cos\left(\frac{2\pi k}{d} \hat{\s{R}}^{(-)} + \varphi + \pi \right)} > 0.\label{eqn:lemma1conc}
    \end{align}
\end{proposition}

\begin{proof}[Proof of Proposition~\ref{lemma:1}]
By definition, it is clear that,
    \begin{align}   
        \mathbb{P} \left[ \hat{\s{R}}^{(+)} = \ell \right] = \mathbb{P} \left[\s{S}_\ell^{(+)} \geq \max_{m\neq\ell}\s{S}_{m}^{(+)} \right], 
    \end{align}
and,
    \begin{align}   
     \mathbb{P} \left[ \hat{\s{R}}^{(-)} = \ell \right] = \mathbb{P} \left[\s{S}_\ell^{(-)} \geq \max_{m\neq\ell}\s{S}_{m}^{(-)} \right],
    \end{align}
for $0\leq\ell\leq d-1$. Since $\s{S}^{(+)}$ and $\s{S}^{(-)}$ can be decomposed as $\s{S}^{(+)} = \s{Z} + \mu$ and $\s{S}^{(-)} = \s{Z} - \mu$, where $\s{Z}$ is a cyclo-stationary process, and $\mu$ is defined in \eqref{eqn:35}. Then,
    \begin{align}   
        \mathbb{P} \left[\s{S}_\ell^{(+)} \geq \max_{m\neq\ell}\s{S}_{m}^{(+)} \right] = \mathbb{P} \left[ \s{Z}_\ell + \mu_\ell \geq \max_{m \neq \ell}\s{Z}_m + \mu_m \right],
    \end{align}
and,
\begin{align}   
        \mathbb{P} \left[\s{S}_\ell^{(-)} \geq \max_{m\neq\ell}\s{S}_{m}^{(-)} \right] = \mathbb{P} \left[ \s{Z}_\ell - \mu_\ell \geq \max_{m \neq \ell}\s{Z}_m - \mu_m \right].
    \end{align}
We will show that for any $\ell$ such that $\mu_\ell > 0$, we have,
    \begin{align}   
        \mathbb{P} \left[\s{Z}_\ell \geq \max_{m \neq \ell} \left\{ \s{Z}_m + \mu_m - \mu_\ell \right\} \right] >  \mathbb{P} \left[ \s{Z}_\ell \geq \max_{m \neq \ell} \left\{ \s{Z}_m - \mu_m + \mu_\ell \right\} \right], \label{eqn:A44}
    \end{align}
which in turn implies that $ \mathbb{P}\{ \hat{\s{R}}^{(+)} = \ell\} > \mathbb{P}\{ \hat{\s{R}}^{(-)} = \ell\}$. 

By definition, since $\s{Z}$ is a zero-mean Gaussian, cyclo-stationary random process (i.e., with a real, symmetric, circulant covariance matrix), its cumulative distribution function $F_{\s{Z}}$ is invariant under cyclic shifts, i.e.,
    \begin{align}   
         F_{\s{Z}}\left(z_0, z_1,\ldots,z_{d-1} \right) = F_{\s{Z}}\left( z_\tau, z_{\tau+1},\ldots,z_{\tau+d-1} \right),
         \label{eqn:stationarityTimeShift}
    \end{align}
for any $\tau \in \mathbb{Z}$, with indices taken modulo $d$. Moreover, reversing the time indices does not affect the distribution; that is,    \begin{align}   
         F_{\s{Z}}\left( z_0, z_1,\ldots, z_{\ell-1}, z_\ell, z_{\ell+1},\ldots,z_{d-1} \right) = F_{\s{Z}}\left( z_{d-1}, z_{d-2},\ldots,z_{\ell+1}, z_\ell, z_{\ell-1}, ..., z_{0} \right).
         \label{eqn:stationarityTimeReverse}
    \end{align}
Combining \eqref{eqn:stationarityTimeShift} and \eqref{eqn:stationarityTimeReverse} yields,
    \begin{align}   
         F_{\s{Z}}\left( z_\ell, z_{\ell+1}, z_{\ell+2},\ldots,z_{\ell-2}, z_{\ell-1} \right) = F_{\s{Z}}\left( z_\ell, z_{\ell-1}, z_{\ell-2},\ldots,z_{\ell+2}, z_{\ell+1} \right).
         \label{eqn:ZrRevert}
    \end{align}
Accordingly, let us define the Gaussian vectors $\s{Z}^{(1)}$ and $\s{Z}^{(2)}$, such that their $m$-th entry is,
    \begin{align}   
         [\s{Z}^{(1)}]_m = \s{Z}_{\ell+m} - \s{Z}_\ell,
    \end{align}
    \begin{align}   
         [\s{Z}^{(2)}]_m = \s{Z}_{\ell-m} - \s{Z}_\ell,
    \end{align}
for $1 \leq m \leq d-1$. It is clear from \eqref{eqn:ZrRevert} that $\s{Z}^{(1)}$ and $\s{Z}^{(2)}$ have the same cumulative distribution function, i.e.,
    \begin{align}   
         F_{\s{Z}^{(1)}} = F_{\s{Z}^{(2)}}.
         \label{eqn:cyclicPropoertyOfS}
    \end{align}
Therefore, the following holds,
    \begin{align}   
         &\mathbb{P} \left[ \s{Z}_\ell \geq \max_{m \neq 0} \left\{ \s{Z}_{\ell+m} + \mu_{\ell+m} - \mu_\ell \right\} \right] \nonumber\\
         &\qquad\qquad= \mathbb{P} \left[0 \geq \max_{m \neq 0} \left\{ \s{Z}_{\ell+m} - \s{Z}_\ell + \mu_{\ell+m} - \mu_\ell \right\} \right] \nonumber \\ 
         &\qquad\qquad  = \mathbb{P} \left[ 0 \geq \max_{m \neq 0} \left\{ \s{Z}_{\ell-m} - \s{Z}_\ell + \mu_{\ell+m} - \mu_\ell \right\} \right]\nonumber\\
         &\qquad\qquad= \mathbb{P} \left[ \s{Z}_\ell \geq \max_{m \neq 0} \left\{ \s{Z}_{\ell-m} + \mu_{\ell+m} - \mu_\ell \right\} \right],
        \label{eqn:77}
    \end{align}
where the second equality follows from \eqref{eqn:cyclicPropoertyOfS}. Next, we note that for every $0 < m \leq d-1$ and $\mu_\ell > 0$,
\begin{align}   
        \mu_{\ell-m} + \mu_{\ell+m} = 2\mu_\ell \cos \left( \frac{2\pi k}{d}m \right).
    \end{align}
Therefore,
    \begin{align}   
        \mu_\ell - \mu_{\ell-m} + \mu_\ell - \mu_{\ell+m} = 2\mu_\ell \left( 1 - \cos \left( \frac{2\pi k}{d}m \right)  \right ) \geq 0,
        \label{eqn:67}
    \end{align}
which implies
    \begin{align}   
        \mu_\ell - \mu_{\ell-m} \geq \mu_{\ell+m} - \mu_{\ell},
        \label{eqn:70}
    \end{align}
or, equivalently,
    \begin{align}   
        \mu_\ell - \mu_{\ell+m} \geq \mu_{\ell-m} - \mu_{\ell}.
        \label{eqn:71}
    \end{align}
\begin{remark} \label{remark:A4}
    According to \eqref{eqn:67}, the inequalities in \eqref{eqn:70} and \eqref{eqn:71} are strict whenever $\cos \left( \frac{2\pi k}{d}m \right) < 1$, which holds for the majority of values of $m$. In particular, at least $d/2$ of the inequalities are strict for $0 \leq m \leq d-1$.
\end{remark}

Following from \eqref{eqn:70}, \eqref{eqn:71}, and the last remark, we have the following auxiliary Lemma, which we prove below.
\begin{lem} \label{lem:A4}
    Assume the maximizers in~\eqref{eqn:28} are unique a.s. and satisfy~\eqref{eq:pos_prob_Cr_assumption}. Then, for $\mu_\ell > 0$, we have,
    \begin{align}   
        \mathbb{P} \left[ \s{Z}_\ell \geq \max_{m \neq 0} \left\{ \s{Z}_{\ell+m} + \mu_{\ell+m} - \mu_\ell \right\} \right]  > \mathbb{P} \left[ \s{Z}_\ell \geq \max_{m \neq 0} \left\{ \s{Z}_{\ell-m} - \mu_{\ell-m} + \mu_\ell \right\} \right]. \label{eqn:A57}
    \end{align}
\end{lem}
Note that \eqref{eqn:A57} is equivalent to the following expression, by a change of index notation:
\begin{align}   
        \mathbb{P} \left[ \s{Z}_\ell \geq \max_{m \neq \ell} \left\{ \s{Z}_m + \mu_m - \mu_\ell \right\} \right] >  \mathbb{P} \left[ \s{Z}_\ell \geq \max_{m \neq \ell} \left\{ \s{Z}_m - \mu_m + \mu_\ell \right\} \right],
    \end{align}
which proves \eqref{eqn:A44}.    
A similar result can be obtained for the case where $\mu_\ell < 0$, i.e.,
    \begin{align}   
        \mathbb{P} \left[ \s{Z}_\ell \geq \max_{m \neq \ell} \left\{ \s{Z}_m + \mu_m - \mu_\ell \right\} \right] <  \mathbb{P} \left[ \s{Z}_\ell \geq \max_{m \neq \ell} \left\{ \s{Z}_m - \mu_m + \mu_\ell \right\} \right],
    \end{align}
which completes the proofs of \eqref{eqn:lemma1conc0}--\eqref{eqn:lemma1conc1}. 

Finally, we prove \eqref{eqn:lemma1conc}. By definition, it is clear that
    \begin{align}   
        &\mathbb{E} \left[\cos \left(\frac{2\pi k}{d} \hat{\s{R}}^{(+)} + \varphi \right) \right]+\mathbb{E} \left[ \cos \left(\frac{2\pi k}{d} \hat{\s{R}}^{(-)} + \varphi + \pi \right) \right] \nonumber \\
        &\qquad\qquad=  \sum_{\ell=0}^{d-1} \cos \left(\frac{2\pi k}{d}\ell+ \varphi \right) \left[ \mathbb{P} \left(\hat{\s{R}}^{(+)} = \ell  \right) - \mathbb{P} \left( \hat{\s{R}}^{(-)} = \ell \right)\right],\label{eqn:sumExpec}
    \end{align}
where we have used the fact that $\cos(\alpha+\pi) = -\cos\alpha$, for any $\alpha\in\mathbb{R}$.

By \eqref{eqn:lemma1conc0}--\eqref{eqn:lemma1conc1}, as for any $0\leq\ell\leq d-1$ such that $\mu_\ell=\cos \left(\frac{2\pi k}{d}\ell + \varphi \right) > 0$ it holds that $\mathbb{P} [\hat{\s{R}}^{(+)} = \ell] > \mathbb{P} [\hat{\s{R}}^{(-)} = \ell]$, otherwise, for $0\leq\ell\leq d-1$ such that $\mu_\ell=\cos \left(\frac{2\pi k}{d}\ell + \varphi \right) < 0$, it holds that $\mathbb{P} [\hat{\s{R}}^{(+)} = \ell] < \mathbb{P} [\hat{\s{R}}^{(-)} = \ell]$. Therefore,
    \begin{align}   
        \sum_{\ell=0}^{d-1} \cos \left(\frac{2\pi k}{d}\ell+ \varphi \right) \left[ \mathbb{P} \left(\hat{\s{R}}^{(+)} = \ell  \right) - \mathbb{P} \left( \hat{\s{R}}^{(-)} = \ell \right)\right]> 0,
    \end{align}
which in light of \eqref{eqn:sumExpec} concludes the proof. 
\end{proof}   

It is left to prove Lemma \ref{lem:A4}.

\begin{proof}[Proof of Lemma \ref{lem:A4}]
Using \eqref{eqn:70} and \eqref{eqn:71}, we obtain the following inequalities for $\mu_\ell > 0$,
    \begin{align}   
        \max_{m \neq 0} \left\{ \s{Z}_{\ell-m} - \mu_{\ell+m} + \mu_\ell \right\} \geq \max_{m \neq 0} \left\{ \s{Z}_{\ell-m} + \mu_{\ell-m} - \mu_\ell \right\},
        \label{eqn:64}
    \end{align}
and
    \begin{align}   
        \max_{m \neq 0} \left\{ \s{Z}_{\ell-m} - \mu_{\ell-m} + \mu_\ell \right\} \geq \max_{m \neq 0} \left\{ \s{Z}_{\ell-m} + \mu_{\ell+m} - \mu_\ell \right\},
        \label{eqn:65}
    \end{align}
As a result, we also have the following probabilistic inequalities,
        \begin{align}   
        \mathbb{P} \pp{\s{Z}_\ell < \max_{m \neq 0} \left\{ \s{Z}_{\ell-m} - \mu_{\ell+m} + \mu_\ell \right\}} 
        \geq \mathbb{P} \pp{\s{Z}_\ell < \max_{m \neq 0} \left\{ \s{Z}_{\ell-m} + \mu_{\ell-m} - \mu_\ell \right\}},
        \label{eqn:A63}
    \end{align}
and,
    \begin{align}   
        \mathbb{P} \pp{\s{Z}_\ell < \max_{m \neq 0} \left\{ \s{Z}_{\ell-m} - \mu_{\ell-m} + \mu_\ell \right\}}
        \geq \mathbb{P} \pp{\s{Z}_\ell < \max_{m \neq 0} \left\{ \s{Z}_{\ell-m} + \mu_{\ell+m} - \mu_\ell \right\}}.
        \label{eqn:A64}
    \end{align} 
Next, we show that these probabilistic inequalities are in fact strict. Define the set of indices where the inequality in \eqref{eqn:71} is strict,
\begin{align}
    \mathcal{M} = \ppp{m : \mu_\ell - \mu_{\ell+m} > \mu_{\ell-m} - \mu_{\ell}}. \label{eqn:A65}
\end{align}
From Remark~\ref{remark:A4}, we know that $\abs{\mathcal{M}} \geq d/2$.
Now define the event,
\begin{align}
    \mathcal{C}_r =  \ppp{\s{Z}_{\ell-r} - \mu_{\ell-r} + \mu_\ell  = \max_{m \neq 0} \ppp{\s{Z}_{\ell-m} - \mu_{\ell-m} + \mu_\ell } },
\end{align}
i.e., the event that $\s{Z}_{\ell-r} - \mu_{\ell-r} + \mu_\ell$ attains the maximum in the expression above.
By assumption~\eqref{eq:pos_prob_Cr_assumption}, the event $\mathcal{C}_r$ for every $r$, i.e.\ $\mathbb{P}(\mathcal{C}_r)>0$. Thus, we may choose $r\in\mathcal{M}$ with $\mathbb{P}(\mathcal{C}_r)>0$, i.e.,
\begin{align}
    r \in \mathcal{M}, \quad\text{and}\quad
    \mathbb{P}(\mathcal{C}_r) =\mathbb{P}\!\left( \max_{m \neq 0}\Big\{\s{Z}_{\ell-m} - \mu_{\ell-m} + \mu_\ell\Big\} = \s{Z}_{\ell-r} - \mu_{\ell-r} + \mu_\ell \right) > 0 .
\end{align}Then, by the law of total probability,
\begin{align}
    \nonumber \mathbb{P} & \pp{\s{Z}_\ell < \max_{m \neq 0} \ppp{ \s{Z}_{\ell-m} - \mu_{\ell-m} + \mu_\ell }} = \mathbb{P} \pp{\s{Z}_\ell < \max_{m \neq 0} \ppp{ \s{Z}_{\ell-m} - \mu_{\ell-m} + \mu_\ell } \ \vert \ \mathcal{C}_r} \mathbb{P}\pp{\mathcal{C}_r} 
    \\ &  \hspace{2cm}+ \mathbb{P} \pp{\s{Z}_\ell < \max_{m \neq 0} \ppp{ \s{Z}_{\ell-m} - \mu_{\ell-m} + \mu_\ell } \ \vert \ \mathcal{C}_r^c} \mathbb{P}\pp{\mathcal{C}_r^c}. \label{eqn:A67}
\end{align}
From \eqref{eqn:A64}, we have,
\begin{align}
    \mathbb{P} \pp{\s{Z}_\ell < \max_{m \neq 0} \ppp{ \s{Z}_{\ell-m} - \mu_{\ell-m} + \mu_\ell } \ \vert \ \mathcal{C}_r^c} \geq \mathbb{P} \pp{\s{Z}_\ell < \max_{m \neq 0} \ppp{ \s{Z}_{\ell-m} + \mu_{\ell+m} - \mu_\ell } \ \vert \ \mathcal{C}_r^c}. \label{eqn:A68}
\end{align}
Additionally, since $r \in \mathcal{M}$, it follows that,
\begin{align}
     \mathbb{P} \pp{\s{Z}_\ell < \max_{m \neq 0} \ppp{ \s{Z}_{\ell-m} - \mu_{\ell-m} + \mu_\ell } \ \vert \ \mathcal{C}_r}  >
     \mathbb{P} \pp{\s{Z}_\ell < \max_{m \neq 0} \ppp{ \s{Z}_{\ell-m} + \mu_{\ell+m} - \mu_\ell } \ \vert \ \mathcal{C}_r}. \label{eqn:A69}
\end{align}
Substituting \eqref{eqn:A68} and \eqref{eqn:A69} into \eqref{eqn:A67} yields,
\begin{align}
    \nonumber \mathbb{P} & \pp{\s{Z}_\ell < \max_{m \neq 0} \ppp{ \s{Z}_{\ell-m} - \mu_{\ell-m} + \mu_\ell }} 
    > \mathbb{P} \pp{\s{Z}_\ell < \max_{m \neq 0} \ppp{ \s{Z}_{\ell-m} + \mu_{\ell+m} - \mu_\ell } \ \vert \ \mathcal{C}_r} \mathbb{P}\pp{\mathcal{C}_r} 
    \\ &  + \mathbb{P} \pp{\s{Z}_\ell < \max_{m \neq 0} \ppp{ \s{Z}_{\ell-m} + \mu_{\ell+m} - \mu_\ell } \ \vert \ \mathcal{C}_r^c} \mathbb{P}\pp{\mathcal{C}_r^c}. \label{eqn:A71}
\end{align}
By the law of total probability, the right-hand-side of \eqref{eqn:A71} is,
\begin{align}
    \nonumber \mathbb{P} & \pp{\s{Z}_\ell < \max_{m \neq 0} \ppp{ \s{Z}_{\ell-m} + \mu_{\ell+m} - \mu_\ell }} 
    = \mathbb{P} \pp{\s{Z}_\ell < \max_{m \neq 0} \ppp{ \s{Z}_{\ell-m} + \mu_{\ell+m} - \mu_\ell } \ \vert \ \mathcal{C}_r} \mathbb{P}\pp{\mathcal{C}_r} 
    \\ &  + \mathbb{P} \pp{\s{Z}_\ell < \max_{m \neq 0} \ppp{ \s{Z}_{\ell-m} + \mu_{\ell+m} - \mu_\ell } \ \vert \ \mathcal{C}_r^c} \mathbb{P}\pp{\mathcal{C}_r^c}. \label{eqn:A72}
\end{align}
Combining \eqref{eqn:A71} and \eqref{eqn:A72}, we conclude,
\begin{align}
      \mathbb{P} \pp{\s{Z}_\ell < \max_{m \neq 0} \ppp{ \s{Z}_{\ell-m} - \mu_{\ell-m} + \mu_\ell }} > \mathbb{P} \pp{\s{Z}_\ell < \max_{m \neq 0} \ppp{ \s{Z}_{\ell-m} + \mu_{\ell+m} - \mu_\ell }}. \label{eqn:A73}
\end{align}
Equivalently, we can express \eqref{eqn:A73} as a complementary event, and obtain,
\begin{align}   
    \mathbb{P} \left[ \s{Z}_\ell \geq \max_{m \neq 0} \left\{ \s{Z}_{\ell+m} + \mu_{\ell+m} - \mu_\ell \right\} \right]  > \mathbb{P} \left[ \s{Z}_\ell \geq \max_{m \neq 0} \left\{ \s{Z}_{\ell-m} - \mu_{\ell-m} + \mu_\ell \right\} \right],
\end{align}
which proves \eqref{eqn:A57}, and completes the proof.
\end{proof}

\section{Proof of Theorem \ref{thm:1}} \label{sec:auxForTheorem1}

First, we prove several auxiliary statements needed in the proof of Theorem \ref{thm:1}. Recall the definition of $\phi_{e,i}[k]$ in \eqref{eqn:phaseDifferenceTerm} and of $\mu_{\s{A},k}, \mu_{\s{B},k}, \sigma_{\s{A},k}^2, \sigma_{\s{B},k}^2$ in \eqref{eqn:muA}--\eqref{eqn:sigmaB}.

\paragraph{Notation for convergence rate of the Fourier phases.}
Denote for every $0 \leq k \leq d-1$,
\begin{align}
    \s{A}_{M,k} \triangleq \frac{1}{\sqrt{M}}\sum_{i=0}^{M-1}\abs{\s{N}_i[k]} \sin \left( \phi_{e,i}[k] \right), \label{eqn:AMdef}
\end{align}
and,
\begin{align}
    \s{B}_{M,k} \triangleq \frac{1}{M}\sum_{i=0}^{M-1}\abs{\s{N}_i[k]} \cos \left( \phi_{e,i}[k] \right). \label{eqn:BMdef}
\end{align}
Note that $\s{A}_{M,k}$ is the imaginary part of the \EfN estimator, multiplied by the phase of the template signal as defined in \eqref{eqn:A11}, but is normalized by $1/\sqrt{M}$ instead of $1/M$, to facilitate the analysis of the convergence rate. Similarly, $\s{B}_{M,k}$ corresponds to the real part in \eqref{eqn:A11}. Additionally, we define the following Gaussian random variable $\s{Q}_k$
\begin{align}
   \s{Q}_k \sim \calN \left( 0, \frac{\sigma_{\s{A},k}^2}{\mu_{\s{B},k}^2} \right),
   \label{eqn: A51}
\end{align}
for every $0 \leq k \leq d-1$.

\paragraph{The main results of this section.}
Recall that by the SLLN \eqref{eqn:strongLLN}, the EfN estimator converges to,
\begin{align}
   \s{\hat{X}}[k] & \xrightarrow[]{\s{a.s.}} e^{j\phi_{\s{X}}[k]} \cdot
    \mathbb{E} \left[ \abs{\s{N}_1[k]} \cos\p{\phi_{e,1}[k]} \right] + j \mathbb{E} \left[ \abs{\s{N}_1[k]} \sin\p{\phi_{e,1}[k]} \right]
    \\ & = e^{j\phi_{\s{X}}[k]} \p{\mu_{\s{B},k} + j \mu_{\s{A},k}}.\label{eqn:B5}
\end{align}
In Sections \ref{sec:imaginartPartVanishing} and \ref{sec:realPartGreaterThan0}, we prove that $\mu_{\s{A},k} = 0$ and $\mu_{\s{B},k} > 0$. These results, combined with \eqref{eqn:B5}, imply that the EfN estimator converges to a non-vanishing signal, and its Fourier phases converge those of Einstein as $M \to \infty$. In Sections \ref{sec:proof_prop_1} and \ref{sec:convergenceInExpectationOfFourierPhases}, we analyze the convergence rate of the Fourier phases, first establishing convergence rate in distribution to $\s{Q}_k$, and then proving convergence rate in expectation.

\subsection{Convergence of the Fourier phases} \label{sec:imaginartPartVanishing}

\begin{lem}[Convergence of the Fourier phases] \label{lemma:A1}
    Recall the definition of $\phi_{e,i}[k]$ in \eqref{eqn:phaseDifferenceTerm}. Then we have,
    \begin{align}
        \mu_{\s{A},k} &\triangleq \mathbb{E}\left[ \abs{\s{N}_1[k]} \sin (\phi_{e,1}[k]) \right] = 0,
        \label{eqn:muAdefenition1}
    \end{align}
    for every $0 \leq k \leq d-1$.
\end{lem}
\begin{proof}[Proof of Lemma~\ref{lemma:A1}]
Let $\s{D}[k]\triangleq\phi_{\s{X}}[k] - \phi_{\s{N}_1}[k]$, and recall the definition of $\s{\hat{R}}_i$ in \eqref{eqn:OptShiftFourier}:
  \begin{align}
    \s{\hat{R}}_i = \underset{0\leq r\leq d-1}{\argmax} \ \sum_{k=0}^{d-1} {\abs{\s{X}[k]} \abs{\s{N}_i[k]} \, \cos \left( \frac{2\pi kr}{d} +  \phi_{\s{N}_i}[k] - \phi_{\s{X}}[k] \right)}.
\end{align}
Note that $\s{\hat{R}}_i$ is a function of
\begin{align}
    \s{\hat{R}}_i = \s{\hat{R}}_i \p{\ppp{|\s{N}_i[k]|}_{k=0}^{d-1}, \ppp{|\s{X}[k]|}_{k=0}^{d-1}, \ppp{\phi_{\s{N}_i}[k]}_{k=0}^{d-1}, \ppp{\phi_{\s{X}}[k]}_{k=0}^{d-1}},
\end{align}
and it depends on $\phi_{\s{N}_i}[k]$ and $\phi_{\s{X}}[k]$ only through $\s{D}[k]$. Accordingly, viewing $\s{\hat{R}}_1$ as a function of $\s{D}[k]$, for fixed $\ppp{|\s{N}_i[k]|}_{k=0}^{d-1}, \ppp{|\s{X}[k]|}_{k=0}^{d-1}$,  we have,
    \begin{align}
        \s{\hat{R}}_1 \left(-\s{D}[0],-\s{D}[1],\ldots,-\s{D}[d-1]\right)  = -\s{\hat{R}}_1 \left(\s{D}[0],\s{D}[1],\ldots,\s{D}[d-1]\right).
        \label{eqn:antiSymmetryPropoeryOfR}
    \end{align}
Namely, from symmetry arguments, by flipping the signs of all the phases, the location of the maximum flips its sign as well. Then, by the law of total expectation,
\begin{align}
         \mu_{\s{A},k} &= \mathbb{E} \left[ \abs{\s{N}_1[k]} \sin \left(\frac{2\pi k}{d}\s{\hat{R}}_1 + \phi_{\s{N}_1}[k] - \phi_{\s{X}}[k]  \right)  \right] \nonumber \\ 
         & = \mathbb{E} \left\{\abs{\s{N}_1[k]}\cdot\left. \mathbb{E}\left[\sin \left(\frac{2\pi k}{d}\s{\hat{R}}_1 + \phi_{\s{N}_1}[k] - \phi_{\s{X}}[k]  \right)\right|\{\abs{\s{N}_1[k]}\}_{k=0}^{d-1}  \right] \right\}.
         \label{eqn:lawOfTotalExpectationProp1}
\end{align}
The inner expectation in \eqref{eqn:lawOfTotalExpectationProp1} is taken w.r.t. the uniform distribution randomness of the phases $\{\phi_{\s{N}_1}[k]\}_{k=0}^{d-1} \in [-\pi, \pi)$. However, due to \eqref{eqn:antiSymmetryPropoeryOfR}, and since the sine function is odd around zero, the integration in~\eqref{eqn:lawOfTotalExpectationProp1} nullifies. Therefore,
    \begin{align}
        \left. \mathbb{E}\left[\sin \left(\frac{2\pi k}{d}\s{\hat{R}}_1 + \phi_{\s{N}_1}[k] - \phi_{\s{X}}[k]  \right)\right|\{\abs{\s{N}_1[k]}\}_{k=0}^{d-1}  \right] = 0,
        \label{eqn:muAisZero}
    \end{align}
and thus $\mu_{\s{A},k}=0$. 
\end{proof}

\subsection{Convergence to non-vanishing signal} \label{sec:realPartGreaterThan0}

\begin{proposition}[Convergence to non-vanishing signal]\label{prop:2}
Recall the definition of $\phi_{e,i}[k]$ in \eqref{eqn:phaseDifferenceTerm}. Fix $d\in\mathbb{N}$, and assume that $\s{X}[k] \neq 0$ for all $0< k \leq d-1$. Then, for any $0\leq k\leq d-1$,
    \begin{align}   
        \mu_{\s{B},k} \triangleq \mathbb{E}{\left[ \abs{\s{N}_1[k]} \cos(\phi_{e,1}[k]) \right]}>0.
        \label{eqn:targetFunction}
    \end{align}
\end{proposition}

\begin{proof}[Proof of Proposition~\ref{prop:2}]

By the law of total expectation, we have,
\begin{align}
    \mathbb{E}{\left[ \abs{\s{N}_1[k]} \cos(\phi_{e,1}[k]) \right]} &= \mathbb{E}\pp{\left.\abs{\s{N}_1[k]} \cdot\mathbb{E}\left( \cos(\phi_{e,1}[k])\right|\s{N}_1[k] \right)}\nonumber\\
    & = \mathbb{E}\pp{\left.\abs{\s{N}_1[k]} \cdot\mathbb{E}\left( \cos\p{\frac{2\pi k\s{\hat{R}}_1}{d} +  \phi_{\s{N}_1}[k] - \phi_{\s{X}}[k]}\right|\s{N}_1[k] \right)}.
\end{align}
More explicitly, we can write,
    \begin{align}
         &\mathbb{E}{\left[ \abs{\s{N}_1[k]} \cos(\phi_{e,1}[k]) \right]} = \nonumber \\ 
         & \frac{1}{2\pi} \int_{0}^{\infty} \mathrm{d}nn f_{\abs{{\s{N}_1[k]}}}(n)  \int_{-\pi}^{\pi} \mathrm{d}\varphi \mathbb{E} \left[\left.\cos \left( \frac{2\pi k}{d}\hat{\s{R}}_1 + \varphi \right)\right|\abs{\s{N}_1[k]} = n, \phi_{\s{N}_1}[k] = \phi_{\s{X}}[k] + \varphi \right].\label{eqn:totalExp}
\end{align}
Now, note that the inner integral can be written as,
\begin{align}
         &\int_{-\pi}^{\pi} \mathrm{d}\varphi \mathbb{E} \left[\left.\cos \left( \frac{2\pi k}{d}\hat{\s{R}}_1 + \varphi \right)\right|\abs{\s{N}_1[k]} = n, \phi_{\s{N}_1}[k] = \phi_{\s{X}}[k] + \varphi \right]\nonumber\\
         &\qquad=\int_{0}^{\pi} \mathrm{d}\varphi \ \mathbb{E} \left[\left.\cos \left( \frac{2\pi k}{d}\hat{\s{R}}_1 + \varphi \right)\right|\abs{\s{N}_1[k]}=n, \phi_{\s{N}_1}[k] = \phi_{\s{X}}[k] + \varphi \right] + \nonumber \\ 
         &\qquad\quad +\int_{0}^{\pi} \mathrm{d}\varphi \  \mathbb{E} \left[\left.\cos \left( \frac{2\pi k}{d}\hat{\s{R}}_1 + \varphi + \pi \right)\right|\abs{\s{N}_1[k]}=n, \phi_{\s{N}_1}[k] = \phi_{\s{X}}[k] + \varphi + \pi \right].
         \label{eqn:lawOfTotalExpectationInnerIntegral}
    \end{align}

Now, we apply Proposition \ref{lemma:1} on the integrands in \eqref{eqn:lawOfTotalExpectationInnerIntegral}. Using its notation, we define the Gaussian process:  
\begin{align}  
    \s{S}^{(+)} = \s{S}_1 \vert \s{N}_1 [k],  
\end{align}  
where the right-hand side is defined as in~\eqref{eqn:conditionalGaussian}. By \eqref{eqn: 14}, the mean vector of $\s{S}_1 \vert \s{N}_1 [k]$ has a cosine trend, as assumed in Proposition \ref{lemma:1} in \eqref{eqn:35}. Additionally, $\s{S}_1 \vert \s{N}_1 [k]$ is a Gaussian cyclo-stationary process, as described in \eqref{eqn:covarainceMatrix}. The final condition to verify is~\eqref{eq:pos_prob_Cr_assumption}, which is satisfied by Lemma~\ref{lem:positive_prob_each_argmax_conditional}.

Since the conditional distribution of $\hat{\s{R}}_1$ given $\{\abs{\s{N}_1[k]}=n, \phi_{\s{N}_1}[k] = \phi_{\s{X}}[k] + \varphi\}$ matches that of $\hat{\s{R}}^{(+)}$ in \eqref{eqn:27}, and similarly, given $\{\abs{\s{N}_1[k]}=n, \phi_{\s{N}_1}[k] = \phi_{\s{X}}[k] + \varphi+\pi\}$, it matches $\hat{\s{R}}^{(-)}$ in \eqref{eqn:28}, the sum of the integrands on the right-hand side of \eqref{eqn:lawOfTotalExpectationInnerIntegral} equals the left-hand side of \eqref{eqn:lemma1conc}. By Proposition \ref{lemma:1}, this sum is positive for all $\varphi \in [0, \pi]$. Together with \eqref{eqn:totalExp}, this completes the proof of Proposition~\ref{prop:2}.  

\end{proof}

\subsection{Convergence rate in distribution of the Fourier phases} \label{sec:proof_prop_1}

\begin{proposition}[Convergence in distribution of the Fourier phases]\label{prop:1}
Fix $d\in\mathbb{N}$. Then, for any $0\leq k\leq d-1$, 
\begin{align}
    \sqrt{M}\cdot\tan\p{\phi_{\s{\hat{X}}}[k] - \phi_{\s{X}}[k]} \xrightarrow[]{\calD} \s{Q}_k,    \label{eqn:phaseConvergenceInDistribution}
\end{align}
as $M\to\infty$, where $\s{Q}_k$ is defined in \eqref{eqn: A51}.
\end{proposition}

\begin{proof}[Proof of Proposition~\ref{prop:1}]
Recall the definition of $\s{A}_{M,k}, \s{B}_{M,k}$ in \eqref{eqn:AMdef}, \eqref{eqn:BMdef}, of $\s{Q}_k$ in \eqref{eqn: A51}, and of $\phi_{e,i}[k]$ in \eqref{eqn:phaseDifferenceTerm}. Then, following \eqref{eqn:estimatorPhase}, the left-hand-side of \eqref{eqn:phaseConvergenceInDistribution} is given by,
\begin{align}
    \sqrt{M}\cdot\tan\pp{\phi_{\s{\hat{X}}}[k] - \phi_{\s{X}}[k]} = \frac{\frac{1}{\sqrt{M}}\sum_{i=0}^{M-1}\abs{\s{N}_i[k]} \sin \left( \phi_{e,i}[k] \right)}{\frac{1}{M}\sum_{i=0}^{M-1}\abs{\s{N}_i[k]} \cos \left( \phi_{e,i}[k] \right)}\triangleq\frac{\s{A}_{M,k}}{\s{B}_{M,k}}.\label{eqn:arctanExpression0}
\end{align}

Since $\{\s{N}_i\}_{i=0}^{M-1}$ is an i.i.d. sequence of random variables, and because each $\phi_{e,i}$ depends on $\s{N}_i$ solely (in particular, independent of $\s{N}_j$, for $j \neq i$), we have that $\{\abs{\s{N}_i[k]} \sin \left( \phi_{e,i}[k] \right)\}_{i=0}^{M-1}$ and $\{\abs{\s{N}_i[k]} \cos \left( \phi_{e,i}[k] \right)\}_{i=0}^{M-1}$ are two sequences of i.i.d. random variables. Recall the definition of $\mu_{\s{A},k}$, and $\sigma_{\s{A},k}^2$, in \eqref{eqn:muA}, \eqref{eqn:sigmaA}, respectively:
\begin{align}
        \mu_{\s{A},k} &\triangleq \mathbb{E}\left[ \abs{\s{N}_1[k]} \sin (\phi_{e,1}[k]) \right],
        \label{eqn:muAdefenition}\\
        \sigma_{\s{A},k}^2 & \triangleq \s{Var}\p{\abs{\s{N}_1[k]} \sin (\phi_{e,1}[k])},
        \label{eqn:muBdefenition}
\end{align}
which are the mean value and variance of $\s{A}_{M,k}$, as defined in \eqref{eqn:AMdef}. Then, by the CLT:
\begin{align}
        \p{\s{A}_{M,k} - \sqrt{M}\mu_{\s{A},k}} \xrightarrow[]{\calD} \s{A}_k, \label{eqn:B21}
\end{align}
where $\s{A}_k\sim \calN (0,\sigma_{\s{A},k}^2)$. In particular, by Lemma \ref{lemma:A1}, $\mu_{\s{A},k} = 0$.

Next, we analyze the denominator in \eqref{eqn:arctanExpression0}. Specifically, we already saw that $\{\abs{\s{N}_i[k]} \cos \left( \phi_{e,i}[k] \right)\}_{i=0}^{M-1}$ form a sequence of i.i.d. random variables, and thus by the SLLN we have $\s{B}_{M,k}\xrightarrow[]{\s{a.s.}} \mu_{\s{B},k}$, where,
    \begin{align}
        \mu_{\s{B},k} \triangleq \mathbb{E}\left[ \abs{\s{N}_1[k]} \cos (\phi_{e,1}[k]) \right].
    \end{align}
By Proposition~\ref{prop:2}, $\mu_{\s{B},k}>0$. Thus, applying Slutsky's Theorem on the ratio $\frac{\s{A}_{M,k}}{\s{B}_{M,k}}$, we obtain,
    \begin{align}
        \frac{\s{A}_{M,k}}{\s{B}_{M,k}} \xrightarrow[]{\calD} \calN \left(0,\frac{\sigma_{\s{A},k}^2}{\mu_{\s{B},k}^2}\right) = \s{Q}_k \label{eqn:A_M_B_M_Convergence_in_distribution},
    \end{align}
which concludes the proof.
\end{proof}

\subsection{Convergence rate in expectation of the Fourier phases}
\label{sec:convergenceInExpectationOfFourierPhases} 

\begin{proposition}[Convergence rate of the Fourier phases]
\label{prop:convergeneInExpectation} 
Recall the definitions of $\mu_{\s{B},k}$, and $\sigma_{\s{A},k}^2$ in \eqref{eqn:muB}, and \eqref{eqn:sigmaB}, respectively. Assume that $\s{X}[k] \neq 0$, for all $0< k \leq d-1$. Then, as $M \to \infty$,
    \begin{align}
       \lim_{M\to\infty} \frac{\mathbb{E} |\phi_{\s{\hat{X}}}[k] - \phi_{\s{X}}[k]|^2}{1/M} = \frac{\sigma_{\s{A},k}^2}{\mu_{\s{B},k}^2}.
        \label{eqn:B19}
    \end{align}
\end{proposition}

\begin{proof}[Proof of Proposition~\ref{prop:convergeneInExpectation}]
Recall the definitions of $\s{A}_{M,k}$ and $\s{B}_{M,k}$ in \eqref{eqn:AMdef} and \eqref{eqn:BMdef}, respectively, and of $\s{Q}_k$ in \eqref{eqn: A51}. Then, using the phase difference expression in \eqref{eqn:estimatorPhase}, it follows that establishing \eqref{eqn:B19} is equivalent to proving the following:
\begin{align}
        \lim_{M\to\infty}\frac{ \mathbb{E} \pp{\arctan^2 {\left(\frac{1}{\sqrt{M}}\frac{\s{A}_{M,k}}{\s{B}_{M,k}}\right)}}}{ \frac{1}{{M}}\mathbb{E} \pp{ {\s{Q}^2_k}}} = 1, \label{eqn:B20}
    \end{align}
    for every $0 \leq k \leq d-1$. Recall by the definition of $\s{Q}_k$ in \eqref{eqn: A51} that $\mathbb{E} \pp{\s{Q}_k^2} = \sigma_{\s{A},k}^2 / \mu_{\s{B},k}^2$, which is equivalent to the right-hand-side of \eqref{eqn:B19}.

For brevity, we fix $k$, and denote $\s{A}_{M} = \s{A}_{M,k}$, $\s{B}_{M} = \s{B}_{M,k}$, $\mu_{\s{B}} = \mu_{\s{B},k}$, $\sigma_{\s{A}}^2 = \sigma_{\s{A},k}^2$.
Using \eqref{eqn:estimatorPhase} it is clear that,
\begin{align}
    \sqrt{M}\cdot\tan\pp{\phi_{\s{\hat{X}}}[k] - \phi_{\s{X}}[k]} = \frac{\frac{1}{\sqrt{M}}\sum_{i=0}^{M-1}\abs{\s{N}_i[k]} \sin \left( \phi_{e,i}[k] \right)}{\frac{1}{M}\sum_{i=0}^{M-1}\abs{\s{N}_i[k]} \cos \left( \phi_{e,i}[k] \right)}\triangleq\frac{\s{A}_M}{\s{B}_M},\label{eqn:arctanExpression}
\end{align}
It is important to note that the denominator $\s{B}_M$ can be zero with positive probability, implying that the expression in \eqref{eqn:arctanExpression} may diverge with non-zero probability. Therefore, it is necessary to control the occurrence of such events. To this end, $\s{B}_M \xrightarrow[]{\s{a.s.}}  \mu_\s{B}$, by SLLN (see Section \ref{sec:convergenceOfEfNestimator}), where $\mu_\s{B}$ is defined in \eqref{eqn:muB}. Fix $0 < \epsilon < \mu_\s{B}$, and proceed by decomposing as follows:
\begin{align}
    \mathbb{E} \pp{\arctan^2 {\left(\frac{1}{\sqrt{M}}\frac{\s{A}_M}{\s{B}_M}\right)}}& = \mathbb{E} \pp{\arctan^2 {\left(\frac{1}{\sqrt{M}}\frac{\s{A}_M}{\s{B}_M}\right) \mathbbm{1}_{\abs{\s{B}_M} > \epsilon}}}  \nonumber\\ &\quad\quad+\mathbb{E} \pp{\arctan^2 {\left(\frac{1}{\sqrt{M}}\frac{\s{A}_M}{\s{B}_M}\right) \mathbbm{1}_{\abs{\s{B}_M} < \epsilon}}} \label{eqn:splittingArgTanIntoTwoTerms}.
\end{align}

The next lemma shows that the second term at the r.h.s. of \eqref{eqn:splittingArgTanIntoTwoTerms} converges to zero with rate $O(1/M^2)$.
\begin{lem} \label{lem:B5}
    The following inequality holds,
    \begin{align}
        \mathbb{E} \pp{\arctan^2 {\left(\frac{1}{\sqrt{M}}\frac{\s{A}_M}{\s{B}_M}\right) \mathbbm{1}_{\abs{\s{B}_M} < \epsilon}}} \leq \frac{D}{M^2}, \label{eqn:B23}
    \end{align}
    for a finite $D > 0$.
\end{lem}

In addition, we have the following asymptotic relation for the last term in \eqref{eqn:splittingArgTanIntoTwoTerms}.
\begin{lem} \label{lem:B6}
    The following asymptotic relation hold,
        \begin{align}
         \lim_{M\to\infty}\frac{ \mathbb{E} \pp{\arctan^2 {\left(\frac{1}{\sqrt{M}}\frac{\s{A}_M}{\s{B}_M}  \right) \mathbbm{1}_{\abs{\s{B}_M} > \epsilon}}}}{ \frac{1}{{M}}\mathbb{E} \pp{ {\s{Q}^2_k}}} = 1 \label{eqn:arctanWithPositiveEventConvergenceToOne1}.
    \end{align}
\end{lem}
We prove these lemmas below. Substituting \eqref{eqn:B23} and \eqref{eqn:arctanWithPositiveEventConvergenceToOne1} in \eqref{eqn:splittingArgTanIntoTwoTerms}, leads to \eqref{eqn:B20}, and completing the proof of the proposition.

\end{proof}

\begin{proof}[Proof of Lemma \ref{lem:B5}]
Since $\arctan(x)\leq \frac{\pi}{2}$, for any $x\in\mathbb{R}$, we have
\begin{align}
    \mathbb{E} \pp{\arctan^2 {\left(\frac{1}{\sqrt{M}}\frac{\s{A}_M}{\s{B}_M}\right) \mathbbm{1}_{\abs{\s{B}_M} < \epsilon}}}&\leq \frac{\pi^2}{4}\cdot\mathbb{E} \pp{\mathbbm{1}_{\abs{\s{B}_M} < \epsilon}}\\
    &\leq \frac{\pi^2}{4}\cdot\mathbb{P}\p{{\s{B}_M} < \epsilon}\\
    &= \frac{\pi^2}{4}\cdot\mathbb{P}\p{{\s{B}_M} - \mu_{\s{B}} < \epsilon - \mu_{\s{B}}}\\
    & \leq \frac{\pi^2}{4}\cdot\mathbb{P}\p{{\abs{\s{B}_M - \mu_{\s{B}}}} > \mu_{\s{B}} - \epsilon}. \label{eqn:indicatorNearZero}
\end{align}
Let us denote the summand in the denominator in \eqref{eqn:arctanExpression} by $V_i \triangleq \abs{\s{N}_i[k]} \cos \left( \phi_{e,i}[k] \right)$, for $0
\leq i\leq M-1$. Then, we note that,
\begin{align}
    \mathbb{E}(V_i^4) = \mathbb{E}\pp{\abs{\s{N}_i[k]} \cos \left( \phi_{e,i}[k] \right)}^4 \leq \mathbb{E}\pp{\abs{\s{N}_i[k]}}^4 < \infty.
\end{align}
Thus, by Chebyshev's inequality,
\begin{align}
    \mathbb{P}\p{{\abs{\s{B}_M - \mu_{\s{B}}}} > \mu_{\s{B}} - \epsilon} \leq \frac{\mathbb{E}\pp{\s{B_M} - \mu_{\s{B}}}^4}{\p{\mu_{\s{B}} - \epsilon}^4} \label{eqn:chebyshevInequalityFourthMoment}.
\end{align}
Now, by the definition of $\s{B}_M$, we have,
\begin{align}
    \mathbb{E}\pp{\s{B}_M - \mu_{\s{B}}}^4  &= \frac{1}{M^4}\sum_{i,j,k,l=0}^{M-1} \mathbb{E}\pp{\p{V_i - \mu_{\s{B}}}\p{V_j - \mu_{\s{B}}}\p{V_k - \mu_{\s{B}}}\p{V_l - \mu_{\s{B}}}} & \\ 
    &=\frac{1}{M^4} \pp{M\cdot\mathbb{E}\pp{V_1 - \mu_{\s{B}}}^4 + 3M(M-1)\p{\mathbb{E}\pp{V_1 - \mu_{\s{B}}}^2}^2}.
\end{align}
Therefore, it is evident that there exists a constant $D_1$, which depends on the second and fourth moments of $V_1$, such that,
\begin{align}
    \frac{\mathbb{E}\pp{\s{B_M} - \mu_{\s{B}}}^4}{\p{\mu_{\s{B}} - \epsilon}^4} \leq \frac{D_1}{\p{\mu_{\s{B}} - \epsilon}^4 M^2} \label{eqn:fourhMomentBound}.
\end{align}
Thus, plugging \eqref{eqn:chebyshevInequalityFourthMoment} and \eqref{eqn:fourhMomentBound} into \eqref{eqn:indicatorNearZero} leads to,
\begin{align}
     \mathbb{E} \pp{\arctan^2 {\left(\frac{1}{\sqrt{M}}\frac{\s{A}_M}{\s{B}_M}\right) \mathbbm{1}_{\abs{\s{B}_M} < \epsilon}}}&\leq \frac{\pi^2}{4}\cdot\frac{D_1}{\p{\mu_{\s{B}} - \epsilon}^4 M^2} \label{eqn:convergenceInMsquare}.
\end{align}
Thus, the second term at the r.h.s. of \eqref{eqn:splittingArgTanIntoTwoTerms} indeed converges to zero as $1/M^2$. 
\end{proof}

\begin{proof}[Proof of Lemma \ref{lem:B6}]

We analyze the first term at the r.h.s. of \eqref{eqn:splittingArgTanIntoTwoTerms}. We will show that,
    \begin{align}
         1 \leq \lim_{M\to\infty}\frac{ \mathbb{E} \pp{\arctan^2 {\left(\frac{1}{\sqrt{M}}\frac{\s{A}_M}{\s{B}_M}  \right) \mathbbm{1}_{\abs{\s{B}_M} > \epsilon}}}}{ \frac{1}{{M}}\mathbb{E} \pp{ {\s{Q}^2_k}}} \leq \frac{\mu_{\s{B}}^2}{\epsilon^2} \label{eqn:lowerAndUpperBoundOfConvergenceRatio}.
    \end{align}
As this is true for every $\epsilon < \mu_{\s{B}}$, it would imply that,
    \begin{align}
         \lim_{M\to\infty}\frac{ \mathbb{E} \pp{\arctan^2 {\left(\frac{1}{\sqrt{M}}\frac{\s{A}_M}{\s{B}_M}  \right) \mathbbm{1}_{\abs{\s{B}_M} > \epsilon}}}}{ \frac{1}{{M}}\mathbb{E} \pp{ {\s{Q}^2_k}}} = 1 \label{eqn:arctanWithPositiveEventConvergenceToOne}.
    \end{align}

First, due to the monotonicity of $\arctan^2 \p{\cdot}$,
\begin{align}
     \mathbb{E} \pp{\arctan^2 {\left(\frac{1}{\sqrt{M}}\frac{\s{A}_M}{\s{B}_M}  \right) \mathbbm{1}_{\abs{\s{B}_M} > \epsilon}}} \leq  \mathbb{E} \pp{\arctan^2 {\left(\frac{\s{A}_M}{\epsilon \sqrt{M}}  \right)}}. \label{eqn:B40}
\end{align}
We decompose the right-hand-side of \eqref{eqn:B40} into two events, as follows,
\begin{align}
    \nonumber \mathbb{E} \pp{\arctan^2 {\left(\frac{\s{A}_M}{\epsilon \sqrt{M}}  \right)}} 
    = \mathbb{E} & \pp{\arctan^2 {\left(\frac{\s{A}_M}{\epsilon \sqrt{M}}  \right)} \mathbbm{1}_{\abs{\s{A}_M} > \epsilon \sqrt{M}}} 
    \\ & + \mathbb{E} \pp{\arctan^2 {\left(\frac{\s{A}_M}{\epsilon \sqrt{M}}  \right)}\mathbbm{1}_{\abs{\s{A}_M} < \epsilon \sqrt{M}}}. 
    \label{eqn:splittingArgTanIntoTwoTerms2}
\end{align}
By the SLLN, $\s{A}_M/\sqrt{M} \xrightarrow[]{\s{a.s.}}  \mu_\s{A}$ (see Section \ref{sec:convergenceOfEfNestimator}), where $\mu_{\s{A}} = 0$ (Lemma \ref{lemma:A1}). In addition, by Proposition \ref{prop:1}, we have,
\begin{align}
        \s{A}_M \xrightarrow[]{\calD} \mathcal{N}\p{0, \sigma_{\s{A}}^2}, \label{eqn:B41}
\end{align}
by the CLT. 
Then, by arguments similar to those used in Lemma \ref{lem:B5}, the first term on the right-hand side of \eqref{eqn:splittingArgTanIntoTwoTerms2} satisfies:
\begin{align}
    \mathbb{E} & \pp{\arctan^2 {\left(\frac{\s{A}_M}{\epsilon \sqrt{M}}  \right)} \mathbbm{1}_{\abs{\s{A}_M} > \epsilon \sqrt{M}}}\leq \tilde{D}/M^2. \label{eqn:B43}
\end{align}
Namely, the first term at the r.h.s. of \eqref{eqn:splittingArgTanIntoTwoTerms2} converges to zero with rate $O(\frac{1}{M^2})$.

For the last term in the right-hand-side of \eqref{eqn:splittingArgTanIntoTwoTerms2}, we prove the following:
    \begin{align}
         \lim_{M\to\infty}\frac{\mathbb{E} \pp{\arctan^2 {\left(\frac{\s{A}_M}{\epsilon \sqrt{M}}  \right)} \mathbbm{1}_{\abs{\s{A}_M} < \epsilon \sqrt{M}}}}{\frac{1}{M} \mathbb{E} \pp{ {\left(\frac{\s{A}_M}{\epsilon}  \right)}^2 \mathbbm{1}_{\abs{\s{A}_M} < \epsilon \sqrt{M}}}} = 1. \label{eqn:B44}
    \end{align}
Since $[\arctan(x)]^2/x^2 \to 1$ as $x \to 0$, it follows that the Taylor expansion of $\arctan(x)$ around $x = 0$, which holds for $\abs{x} < 1$, and is applicable on the event $\ppp{\s{A}_M < \epsilon \sqrt{M}}$:
\begin{align}
    \mathbb{E} \pp{\arctan^2 {\left(\frac{\s{A}_M}{\epsilon \sqrt{M}}  \right)} \mathbbm{1}_{\abs{\s{A}_M} < \epsilon \sqrt{M}}}
    = \mathbb{E} \pp{\sum_{k=0}^{\infty} \p{-1}^k \frac{ \pp{\frac{1}{\sqrt{M}} \frac{\s{A}_M} {\epsilon}  }^{2k+1}}{2k+1} \mathbbm{1}_{\abs{\s{A}_M}<\epsilon \sqrt{M}}}^2  \label{eqn:TaylorSeriesOfArctan}.
\end{align}
The right-hand-side of \eqref{eqn:TaylorSeriesOfArctan} can be decomposed to,
\begin{align}
    \mathbb{E} & \pp{\sum_{k=0}^{\infty} \p{-1}^k \frac{ \pp{\frac{1}{\sqrt{M}} \frac{\s{A}_M} {\epsilon}  }^{2k+1}}{2k+1} \mathbbm{1}_{\abs{\s{A}_M}<\epsilon \sqrt{M}}}^2  = 
    \frac{1}{M} \mathbb{E} \pp{ {\left(\frac{\s{A}_M}{\epsilon}  \right)}^2 \mathbbm{1}_{\abs{\s{A}_M} < \epsilon \sqrt{M}}} \nonumber \\ &
    +  \sum_{(k_1,k_2)\neq(0,0)}\frac{\p{-1}^{k_1+k_2}}{\p{2k_1 + 1}\p{2k_2 + 1}} \mathbb{E} \pp{\p{\frac{\s{A}_M}{\sqrt{M} \epsilon}}^{\p{2k_1+2k_2 + 2}}\mathbbm{1}_{\abs{\s{A}_M} < \epsilon \sqrt{M}}} \label{eqn:TaylorSeriesOfArctan2}.
\end{align}
Now, since the term at the left-hand-side of \eqref{eqn:TaylorSeriesOfArctan2} as well as the first term at the right-hand-side of \eqref{eqn:TaylorSeriesOfArctan2}, are bounded for every $M$ and converges to zero, then also the last term at the right-hand-side of \eqref{eqn:TaylorSeriesOfArctan2} is bounded for every $M$ and converge to zero as $M \to \infty$. Specifically, we note that the last term converges to zero with rate $1/M^2$, while the first term in the right-hand-side converges to zero with rate $1/M$. Thus, \eqref{eqn:B44} is satisfied. Finally, we have,
\begin{align}
     \lim_{M\to\infty}\frac{\frac{1}{M} \mathbb{E} \pp{ {\left(\frac{\s{A}_M}{\epsilon}  \right)}^2 \mathbbm{1}_{\abs{\s{A}_M} < \epsilon \sqrt{M}}}}{\frac{1}{M} \mathbb{E} \pp{ {\left(\frac{\s{A}_M}{\epsilon}  \right)}^2}} = 1 - \lim_{M\to\infty}\frac{\frac{1}{M} \mathbb{E} \pp{ {\left(\frac{\s{A}_M}{\epsilon}  \right)}^2 \mathbbm{1}_{\abs{\s{A}_M} > \epsilon \sqrt{M}}}}{\frac{1}{M} \mathbb{E} \pp{ {\left(\frac{\s{A}_M}{\epsilon}  \right)}^2}}. \label{eqn:B47}
\end{align}
As the probability of the event $\ppp{\abs{\s{A}_M} > \epsilon \sqrt{M}}$  is $O \p{1/M^2}$, it follows that,
\begin{align}
    \lim_{M\to\infty}\frac{\frac{1}{M} \mathbb{E} \pp{ {\left(\frac{\s{A}_M}{\epsilon}  \right)}^2 \mathbbm{1}_{\abs{\s{A}_M} > \epsilon \sqrt{M}}}}{\frac{1}{M} \mathbb{E} \pp{ {\left(\frac{\s{A}_M}{\epsilon}  \right)}^2}} = 0. \label{eqn:B48}
\end{align}
Then, following \eqref{eqn:B47},\eqref{eqn:B48}, we have,
\begin{align}
    \lim_{M\to\infty}\frac{\frac{1}{M} \mathbb{E} \pp{ {\left(\frac{\s{A}_M}{\epsilon}  \right)}^2 \mathbbm{1}_{\abs{\s{A}_M} < \epsilon \sqrt{M}}}}{\frac{1}{M} \mathbb{E} \pp{ {\left(\frac{\s{A}_M}{\epsilon}  \right)}^2}} = 1. \label{eqn:B49}
\end{align}
By definition $\sigma_{\s{A}}^2 = \mathbb{E} \pp{\s{A}_M^2}$. Therefore substitution \eqref{eqn:B49} into \eqref{eqn:B44} leads to
    \begin{align}
         \lim_{M\to\infty}\frac{\mathbb{E} \pp{\arctan^2 {\left(\frac{\s{A}_M}{\epsilon \sqrt{M}}  \right)} \mathbbm{1}_{\abs{\s{A}_M} < \epsilon \sqrt{M}}}}{\frac{1}{M} \frac{\sigma_{\s{A}}^2}{\epsilon^2}} = 1. \label{eqn:B50}
    \end{align}
Substituting \eqref{eqn:B43}, and \eqref{eqn:B50} into \eqref{eqn:splittingArgTanIntoTwoTerms2} results,
\begin{align}
     \lim_{M\to\infty}\frac{ \mathbb{E} \pp{\arctan^2 {\left(\frac{1}{\sqrt{M}}\frac{\s{A}_M}{\epsilon}  \right) \mathbbm{1}_{\abs{\s{B}_M} > \epsilon}}}}{ \frac{1}{{M}}\mathbb{E} \pp{ {\s{Q}^2_k}}} = \frac{\mu_{\s{B}}^2}{\epsilon^2}, \label{eqn:B51}
\end{align}
where $\mathbb{E} \pp{\s{Q}_k^2} = \sigma_{\s{A}}^2 / \mu_{\s{B}}^2$. Then, substituting \eqref{eqn:B51} into \eqref{eqn:B40} results,
    \begin{align}
        \lim_{M\to\infty}\frac{ \mathbb{E} \pp{\arctan^2 {\left(\frac{1}{\sqrt{M}}\frac{\s{A}_M}{\s{B}_M}  \right) \mathbbm{1}_{\abs{\s{B}_M} > \epsilon}}}}{ \frac{1}{{M}}\mathbb{E} \pp{ {\s{Q}^2_k}}} \leq \frac{\mu_{\s{B}}^2}{\epsilon^2} \label{eqn:B52}.
    \end{align}
which proves the upper bound in \eqref{eqn:lowerAndUpperBoundOfConvergenceRatio}.

Similarly, since $\s{B}_M \xrightarrow[]{\s{a.s.}} \mu_{\s{B}}$, for any $\epsilon_2 > 0$, we have,
    \begin{align}
     \lim_{M\to\infty} \mathbb{E} \pp{\left(\frac{\s{A}_M}{\s{B}_M} \right)^2 \mathbbm{1}_{\ppp{{\s{B}_M} > \epsilon}}} &\geq
     \lim_{M\to\infty} \mathbb{E} \pp{\left(\frac{\s{A}_M}{\s{B}_M} \right)^2 \mathbbm{1}_{\ppp{{\s{B}_M} > \epsilon} \wedge \ppp{{\s{B}_M} < \mu_{\s{B}} + \epsilon_2}}}\\
     &\geq \frac{\sigma_A^2}{(\mu_{\s{B}} + \epsilon_2)^2} \label{eqn:lowerBoundOnArctan}.
    \end{align}
Since \eqref{eqn:lowerBoundOnArctan} is true for every $\epsilon_2 > 0$, we get the lower bound in \eqref{eqn:lowerAndUpperBoundOfConvergenceRatio}, which concludes the proof of \eqref{eqn:arctanWithPositiveEventConvergenceToOne}. 
\end{proof}

\subsection{Proof of Theorem~\ref{thm:1}}\label{thm:proofs1}
\paragraph{Convergence of the Fourier magnitudes.}
We start with the convergence of the estimator's magnitudes. Recall the definition of $\phi_{e,i}[k]$ in \eqref{eqn:phaseDifferenceTerm}. According to \eqref{eqn:strongLLN}, we have,
\begin{align}
   \abs{\s{\hat{X}}[k] e^{-j\phi_{\s{X}}[k]}}  
   \xrightarrow[]{\s{a.s.}} 
    \Bigl\lvert\mathbb{E} \pp{ \abs{\s{N}_1[k]} \cos\p{\phi_{e,1}[k]} } + j \mathbb{E} \pp{ \abs{\s{N}_1[k]} \sin\p{\phi_{e,1}[k]} } \Bigr\rvert, \label{eqn:C1}
\end{align}
Clearly, $\abs{e^{-j\phi_{\s{X}}[k]}} = 1$. By Lemma \ref{lemma:A1},
\begin{align}   
    \mu_{\s{A},k} =\mathbb{E} \left[ \abs{\s{N}_1[k]} \sin\p{\phi_{e,1}[k]} \right] = 0. \label{eqn:C2}
\end{align}
By  Proposition \ref{prop:2}, 
\begin{align}
    \mu_{\s{B},k} = \mathbb{E} \pp{ \abs{\s{N}_1[k]} \cos\p{\phi_{e,1}[k]} } > 0. \label{eqn:C3}
\end{align}
Combining \eqref{eqn:C1}, \eqref{eqn:C2}, and \eqref{eqn:C3} proves the convergence of the estimator's magnitudes of \eqref{eqn:magnitudeConvergenceAsymptoticM}.

\paragraph{Convergence of the Fourier phases.}
Next, we prove the Fourier phases convergence of Theorem~\ref{thm:1}, starting with \eqref{eqn:FirstRes}. To this end, recall \eqref{eqn:estimatorPhase} 
\begin{align}
    \phi_{\s{\hat{X}}}[k] - \phi_{\s{X}}[k] =  \arctan \left( \frac{\sum_{i=0}^{M-1}\abs{\s{N}_i[k]} \sin \left( \phi_{e,i}[k] \right)}{\sum_{i=0}^{M-1}\abs{\s{N}_i[k]} \cos \left( \phi_{e,i}[k] \right)} \right ),
\end{align}
Using the continuous mapping theorem, it is evident that it suffices to prove that,
\begin{align}
    \frac{\sum_{i=0}^{M-1}\abs{\s{N}_i[k]} \sin \left( \phi_{e,i}[k] \right)}{\sum_{i=0}^{M-1}\abs{\s{N}_i[k]} \cos \left( \phi_{e,i}[k] \right)}\xrightarrow[]{\s{a.s.}} 0.
\end{align}
This, however, follows by applying the SLLN,
\begin{align}
   \frac{\sum_{i=0}^{M-1}\abs{\s{N}_i[k]} \sin \left( \phi_{e,i}[k] \right)}{\sum_{i=0}^{M-1}\abs{\s{N}_i[k]} \cos \left( \phi_{e,i}[k] \right)}
    \xrightarrow[]{\s{a.s.}} \frac{\mu_{\s{A},k}}{\mu_{\s{B},k}},
    \label{eqn:lawOfLargeNumberPhaseRatio}
\end{align}
where $\mu_{\s{A},k} \triangleq \mathbb{E}\left[ \abs{\s{N}_1[k]} \sin (\phi_{e,1}[k]) \right]$ and $\mu_{\s{B},k} \triangleq \mathbb{E}\left[ \abs{\s{N}_1[k]} \cos (\phi_{e,1}[k]) \right]$, defined in \eqref{eqn:muA}, and \eqref{eqn:muB}, respectively. By Lemma \ref{lemma:A1}, $\mu_{\s{A},k} = 0$, while by Proposition \ref{prop:2}, we have that $\mu_{\s{B},k} > 0$, and thus their ratio converges a.s. to zero by the continuous mapping theorem. Thus, we proved that $\phi_{\s{\hat{X}}}[k] \xrightarrow[]{\s{a.s.}} \phi_{\s{X}}[k]$.

Finally, we prove the convergence rate, given in \eqref{eqn:asymptoticComnvergenceOfPhases}. According to Proposition \ref{prop:convergeneInExpectation}, we have,
    \begin{align}
       \lim_{M\to\infty} \frac{\mathbb{E} |\phi_{\s{\hat{X}}}[k] - \phi_{\s{X}}[k]|^2}{1/M} = \frac{\sigma_{\s{A},k}^2}{\mu_{\s{B},k}^2}, \label{eqn:app_CC7}
    \end{align}
which completes the proof of the Theorem.

\begin{remark}
    Note that the above result implies that $C_k$ in \eqref{eqn:asymptoticComnvergenceOfPhases} is given by,
\begin{align}
    C_k \triangleq \frac{\sigma_{\s{A},k}^2}{\mu_{\s{B},k}^2} = \frac{\mathbb{E}\p{\left[ \abs{\s{N}_1[k]} \sin(\phi_{e,1}[k]) \right]^2}}{\p{\mathbb{E}{\left[ \abs{\s{N}_1[k]} \cos(\phi_{e,1}[k]) \right]}}^2}.
    \label{eqn:constantCk}
\end{align}
\end{remark}

\section{High-dimensional argmax asymptotics} \label{sec:preliminariesToTheorem2}
In this section, we present a key proposition that plays a central role in the proof of Theorem~\ref{thm:2}.
\begin{proposition}[High-dimensional argmax asymptotics] \label{prop:3}
Let $\s{S}\sim\calN(\s{\mu},\s{\Sigma})$ be a $d$-dimensional Gaussian random vector, with mean $\s{\mu}$ and a covariance matrix $\s{\Sigma}$. Assume that $\abs{\s{\Sigma}_{ij}} = \rho_{\abs{{i-j}}}$, where $\{\rho_\ell\}_{\ell\in\mathbb{N}}$ is a sequence of real-valued numbers such that $\rho_0 = 1$, $\rho_\ell < 1$, and $\rho_\ell \log \ell\rightarrow 0$, as $\ell\to\infty$. Assume also that $\sqrt{\log d} \cdot \max_{1\leq i\leq d}{\abs{\mu_i}} \rightarrow 0$, as $d \rightarrow \infty$, and let $\hat{\s{R}} \triangleq \argmax{\left\{\s{S}_0, \s{S}_1,\ldots,\s{S}_{d-1}\right\}}$. Then, for a bounded deterministic function $f: \{0,1,\ldots,d-1\} \rightarrow \mathbb{R}$, we have,
    \begin{align}   
        \lim_{d\to\infty}\mathbb{E} [{f(\hat{\s{R}})}]  - \frac{\sum_{r=0}^{d-1}f(r)e^{\mu_{r} a_d }}{\sum_{r=0}^{d-1}e^{\mu_{r} a_d }}= 0,\label{eqn:Lemma2}
    \end{align}
where $a_d \triangleq \sqrt{2\log d}$.
\end{proposition}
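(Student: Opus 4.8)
The plan is to recast the claim as a statement about the law of the argmax $\hat{\s{R}}$ and to reduce it to a uniform estimate on the point probabilities $\mathbb{P}[\hat{\s{R}}=r]$. Writing $p_r \triangleq e^{\mu_r a_d}/\sum_s e^{\mu_s a_d}$ and using $\mathbb{E}[f(\hat{\s{R}})]=\sum_r f(r)\mathbb{P}[\hat{\s{R}}=r]$ together with $\sum_r p_r=1$, boundedness of $f$ gives
\[
\left| \mathbb{E}[f(\hat{\s{R}})] - \frac{\sum_r f(r)e^{\mu_r a_d}}{\sum_s e^{\mu_s a_d}} \right| \le \norm{f}_\infty \sum_{r=0}^{d-1}\left| \mathbb{P}[\hat{\s{R}}=r] - p_r \right|.
\]
Hence it suffices to prove the uniform relative estimate $\mathbb{P}[\hat{\s{R}}=r]=p_r(1+o(1))$, since then the right-hand side equals $\norm{f}_\infty\sum_r p_r\,o(1)=o(1)$.

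First I would establish this for the reference model $\hat{\s{R}}^\perp$ in which the coordinates of $\s{S}$ are independent (covariance $I$, matching $\rho_0=1$). Conditioning on the pinned value $\s{S}_r=z$,
\[
\mathbb{P}[\hat{\s{R}}^\perp=r]=\int_{\mathbb{R}}\phi(z-\mu_r)\prod_{s\neq r}\Phi(z-\mu_s)\,dz=\int_{\mathbb{R}}\frac{\phi(z-\mu_r)}{\Phi(z-\mu_r)}\,G(z)\,dz,
\]
where $G(z)\triangleq\prod_{s}\Phi(z-\mu_s)$ is the CDF of $\max_s\s{S}_s$, concentrated in a window of width $O(1/a_d)$ about $b_d\triangleq a_d-\tfrac{\log(4\pi\log d)}{2a_d}=a_d(1+o(1))$. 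On this window I would use the Gaussian asymptotics $1-\Phi(u)=\tfrac{\phi(u)}{u}(1+O(u^{-2}))$ and $\phi(u-\mu)=\phi(u)e^{u\mu-\mu^2/2}$, combined with the calibration $a_d\max_s|\mu_s|\to0$ (i.e.\ the hypothesis $\sqrt{\log d}\,\max_i|\mu_i|\to0$), to show that the mean enters only through $e^{z\mu_r}=e^{a_d\mu_r}(1+o(1))$, uniformly in $r$ and in $z$ across the window. Factoring this out yields $\mathbb{P}[\hat{\s{R}}^\perp=r]=e^{a_d\mu_r}\,I_d\,(1+o(1))$ with $I_d$ independent of $r$, and normalizing via $\sum_r\mathbb{P}[\hat{\s{R}}^\perp=r]=1$ gives $\mathbb{P}[\hat{\s{R}}^\perp=r]=p_r(1+o(1))$.

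It then remains to transfer this from the independent reference to the true covariance $\s{\Sigma}$. I would compare the two laws of $\hat{\s{R}}$ by Gaussian interpolation along $\s{\Sigma}(t)=t\s{\Sigma}+(1-t)I$ (equivalently, Berman's normal comparison inequality), noting that $\{\hat{\s{R}}=r\}=\{\s{S}_s\le\s{S}_r\ \forall s\neq r\}$ is an orthant event in the differences $\s{S}_s-\s{S}_r$, so the comparison lemma applies. The resulting error is bounded by a pairwise sum of bivariate Gaussian exceedance factors at levels of order $a_d$, which after grouping by lag $\ell=|s-t|$ is at most $C\sum_{\ell=1}^{d-1}|\rho_\ell|\exp\!\big(-\tfrac{a_d^2}{1+|\rho_\ell|}\big)$; inserting $a_d^2=2\log d$ and splitting the sum at a slowly growing cutoff shows this vanishes precisely under the hypothesis $\rho_\ell\log\ell\to0$ (with $\rho_0=1,\rho_d<1$ fixing the scale and non-degeneracy). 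This gives $\mathbb{P}[\hat{\s{R}}=r]=\mathbb{P}[\hat{\s{R}}^\perp=r]+o(p_r)$ uniformly in $r$, which together with the previous paragraph and the reduction completes the proof.

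The main obstacle is the dependent step: the comparison bound must be made uniform simultaneously in the center $r$ and, if one conditions, in the pinned level $z$, and one must check that conditioning-induced means $\mu_s+\rho_{|s-r|}(z-\mu_r)$ do not spoil the Gaussian factors — this is exactly where $\rho_\ell\log\ell\to0$ is used at full strength, balancing the $O(1/a_d)$ width of the maximum's window against the lag-dependent exponents. A secondary but essential point is that the Laplace analysis of the reference model be uniform in $r$ rather than merely pointwise, which again hinges on $a_d\max_s|\mu_s|\to0$ keeping every tilt $e^{a_d\mu_r}$ within a $1+o(1)$ factor of $1$.
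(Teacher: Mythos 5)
Your opening reduction is sound---indeed, since $f$ is an arbitrary bounded function, the proposition is equivalent to the total-variation statement $\sum_r \left|\mathbb{P}[\hat{\s{R}}=r]-p_r\right|\to 0$---and your Laplace-type analysis of the independent reference model is essentially correct modulo standard tail control. The genuine gap is in the transfer step from the independent reference to the true covariance, which is the heart of the matter. You write $\{\hat{\s{R}}=r\}$ as an orthant event in the differences $\s{S}_s-\s{S}_r$ and invoke the normal comparison inequality, claiming an error bound of the form $C\sum_{\ell}|\rho_\ell|\exp\left(-a_d^2/(1+|\rho_\ell|)\right)$ ``at levels of order $a_d$''. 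But the exponential factors in the comparison inequality are governed by the thresholds of the event being compared, and for the orthant event in the differences those thresholds are $0$ (up to the $o(1)$ means), not $a_d$. What that route actually yields is a bound of order $\sum_{s<t}\left|\Delta\,\mathrm{corr}(\s{S}_s-\s{S}_r,\s{S}_t-\s{S}_r)\right|$ with no exponential gain: the independent reference has all difference-correlations equal to $1/2$, the dependent ones deviate by amounts of order $|\rho_{|s-t|}|+|\rho_{|s-r|}|+|\rho_{|t-r|}|$, and the resulting sum is of order $d\sum_\ell|\rho_\ell|$, which Berman's condition does not make small (it can be as large as $d^2/\log d$). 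So the claimed estimate does not follow from the step as described.

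The level-$a_d$ factors you want only appear if you first pin $\s{S}_r=z$ with $z\approx b_d$ and compare the conditional laws of the event $\{\s{S}_s\le z,\ s\ne r\}$---but then the comparison lemma no longer applies directly, because it requires the two Gaussian vectors to have \emph{identical means}, whereas conditioning shifts the mean of coordinate $s$ to $\mu_s+\rho_{|s-r|}(z-\mu_r)$, a shift of order $\rho_\ell a_d$ which is \emph{large} (order $a_d$) at small lags; the conditional variances $1-\rho_\ell^2$ and covariances $\rho_{|s-t|}-\rho_{|s-r|}\rho_{|t-r|}$ are likewise perturbed. You flag this as a point to ``check'', but it is not a check: controlling these order-$a_d$ shifts uniformly in $r$ and in $z$ over the window requires redoing Berman's multi-scale lag-splitting argument in the conditional setting, i.e.\ essentially proving an argmax analogue of \cite[Theorem 6.2.1]{leadbetter2012extremes} from scratch. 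This is exactly what the paper's proof avoids: it never estimates the point probabilities $\mathbb{P}[\hat{\s{R}}=r]$ at all, but instead uses the exact envelope identity $\mathbb{E}[f(\hat{\s{R}})]=\frac{\mathrm{d}}{\mathrm{d}\alpha}\mathbb{E}[\s{M}_d(\alpha)]\big|_{\alpha=0}$ for the tilted maximum $\s{M}_d(\alpha)=\max_r\{\s{S}(r)+\alpha f(r)\}$ (Lemma~\ref{lemma:4}), and then reads the tilted average $\sum_r f(r)e^{a_d\mu_r}/\sum_r e^{a_d\mu_r}$ off the location term $m_d^\star(\alpha)$ in the Gumbel limit, which is available as a black box under exactly your hypotheses. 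To salvage your route you would need to carry out the conditional Berman argument in full, or find another mechanism that produces level-$a_d$ exponential factors for argmax point probabilities; as written, the dependent step fails.
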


The proof of Proposition~\ref{prop:3} is based on an auxiliary result, which we prove in Section~\ref{sec:proof_lem_4}. To state this result, we introduce some additional notation. Let $\s{S}(r)$, for $r \in \{0,1,\ldots,d-1\}$, be a discrete stochastic process. We define the function $h^{(\alpha)}(r)$ as follows,
   \begin{align}
        h^{(\alpha)}(r) \triangleq \s{S}(r) + \alpha f(r), \label{eqn: 41}
    \end{align}
where $f(r)$ is a bounded deterministic function, and $\alpha \in \mathbb{R}$. We further define,
    \begin{align}
        \s{M}_d(\alpha) \triangleq \max_{r} h^{(\alpha)}(r)  ,\label{eqn: 49}
    \end{align}
and
    \begin{align}
        \hat{\s{R}}(\alpha) \triangleq \argmax_{r}h^{(\alpha)}(r).   \label{eqn: 50}
    \end{align}
Note that $\s{M}_d(a)$ and $\hat{\s{R}}(a)$ are random variables. Finally, we denote $\hat{\s{R}}\triangleq\hat{\s{R}}(0)$. We have the following result, which is proved in Appendix \ref{sec:proof_lem_4}.

\begin{lem}\label{lemma:4}
The following holds,
    \begin{align}
        \mathbb{E}[f(\hat{\s{R}})] = \left.\frac{\mathrm{d}}{\mathrm{d}\alpha}\mathbb{E} [\s{M}_d(\alpha)]\right|_{\alpha=0} . 
        \label{eqn:expectedValueAndDerivativeOfMaximum}
    \end{align} 
\end{lem}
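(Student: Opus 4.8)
The plan is to prove the identity through a pointwise envelope (Danskin-type) argument for each fixed realization of the process, followed by an interchange of the $\alpha$-derivative with the expectation that is justified by dominated convergence.

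First I would fix a realization $\omega$ of the process $\s{S}$ for which the maximizer $\hat{\s{R}}$ is unique, and study the scalar map $\alpha \mapsto \s{M}_d(\alpha) = \max_r\{\s{S}(r) + \alpha f(r)\}$. As a finite maximum of affine functions of $\alpha$, this map is convex and piecewise linear. Let $\delta \triangleq \min_{r \neq \hat{\s{R}}}\{\s{S}(\hat{\s{R}}) - \s{S}(r)\} > 0$ denote the strictly positive gap at the maximizer, and let $B \triangleq \max_r |f(r)| < \infty$, which is finite because $f$ is bounded on a finite index set. Whenever $2|\alpha| B < \delta$, for every $r \neq \hat{\s{R}}$ one has $h^{(\alpha)}(\hat{\s{R}}) - h^{(\alpha)}(r) \geq \delta - |\alpha|\,|f(\hat{\s{R}}) - f(r)| \geq \delta - 2|\alpha| B > 0$, so the perturbation cannot dislodge the maximizer; hence $\hat{\s{R}}(\alpha) = \hat{\s{R}}$ and $\s{M}_d(\alpha) = \s{M}_d(0) + \alpha f(\hat{\s{R}})$. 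Thus $\s{M}_d(\cdot)$ is affine near $\alpha = 0$ with slope $f(\hat{\s{R}})$, giving the pointwise derivative $\frac{\mathrm{d}}{\mathrm{d}\alpha}\s{M}_d(\alpha)\big|_{\alpha=0} = f(\hat{\s{R}})$. This holds for almost every $\omega$, since in the regime where the lemma is applied the conditional law of $\s{S}$ is Gaussian with covariance of rank at least $d/2$, so the maximizer is unique with probability one (equivalently, we work under the a.s.\ uniqueness of the maximizer, as throughout the paper).

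Next I would establish a uniform Lipschitz bound that supplies the dominating function. Using the elementary inequality $|\max_r a_r - \max_r b_r| \leq \max_r |a_r - b_r|$, applied with $a_r = \s{S}(r) + \alpha f(r)$ and $b_r = \s{S}(r)$, I get $|\s{M}_d(\alpha) - \s{M}_d(0)| \leq \max_r |\alpha f(r)| = |\alpha| B$. Consequently the difference quotient $D_\alpha \triangleq (\s{M}_d(\alpha) - \s{M}_d(0))/\alpha$ satisfies $|D_\alpha| \leq B$ for every $\alpha \neq 0$ and every realization, and by the previous step $D_\alpha \xrightarrow[]{\s{a.s.}} f(\hat{\s{R}})$ as $\alpha \to 0$.

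Finally I would invoke dominated convergence along any sequence $\alpha_n \to 0$: the integrands $D_{\alpha_n}$ are dominated by the integrable constant $B$ and converge a.s.\ to $f(\hat{\s{R}})$, so $\lim_n \mathbb{E}[D_{\alpha_n}] = \mathbb{E}[f(\hat{\s{R}})]$. Since $\mathbb{E}[D_{\alpha_n}] = (\mathbb{E}[\s{M}_d(\alpha_n)] - \mathbb{E}[\s{M}_d(0)])/\alpha_n$ and the limit is independent of the chosen sequence, the two-sided derivative of $\alpha \mapsto \mathbb{E}[\s{M}_d(\alpha)]$ exists at $\alpha = 0$ and equals $\mathbb{E}[f(\hat{\s{R}})]$, which is precisely \eqref{eqn:expectedValueAndDerivativeOfMaximum}. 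The main obstacle I anticipate is the justification that differentiation commutes with the expectation; the uniform Lipschitz bound is the crucial ingredient that makes dominated convergence applicable, while ensuring a.s.\ uniqueness of the maximizer (so that the pointwise derivative genuinely exists as a two-sided limit rather than only one-sided) is the secondary point that must be addressed.
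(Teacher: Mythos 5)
Your proposal is correct and takes essentially the same route as the paper's own proof: a pointwise envelope argument showing that the discrete maximizer is locally stable under the $\alpha$-perturbation (the paper packages this as a deterministic lemma using a.s.\ injectivity of the realization, while you use the positive gap at the a.s.\ unique maximizer), followed by the uniform bound $\left|\frac{\s{M}_d(\alpha)-\s{M}_d(0)}{\alpha}\right|\leq \max_r|f(r)|$ and dominated convergence to exchange the derivative with the expectation. Your explicit treatment of two-sidedness via arbitrary sequences $\alpha_n\to 0$ is a minor tightening of the same argument, not a different approach.
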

Lemma \ref{lemma:4} implies that finding the expected value of $f(\hat{\s{R}})$ is related directly to the derivative of the expected value of the maximum around zero. Thus, the problem of finding the expected value of $f(\hat{\s{R}})$ is related to finding the expected value of the maximum of the stochastic process. In our case, $\s{S}$ will be a Gaussian vector with mean given by \eqref{eqn: 14} and a covariance matrix given by \eqref{eqn:covarainceMatrix}. Thus, our goal now is to find the expected value of the maximum of $\s{S}$. For this purpose, we will recall some well-known results on the maximum of Gaussian processes. 

It is known that for an i.i.d. sequence of normally distributed random variables $\{\xi_n\}$, the asymptotic distribution of the maximum $\s{M}_n \triangleq \max\{\xi_1, \xi_2, ..., \xi_n\}$ is the Gumbel distribution, i.e., for any $x\in\mathbb{R}$,
    \begin{align}
         \mathbb{P}\pp{a_n(\s{M}_n - b_n) \leq x} \to e^{-e^{(-x)}},
    \end{align}
as $n\to\infty$, where,
    \begin{align}
         a_n \triangleq \sqrt{2\log n} \label{eqn:a_n}
    \end{align}
and,
    \begin{align}
         b_n \triangleq \sqrt{2\log n} - \frac{1}{2} \frac{\log{\log{n}} + \log {4\pi}}{\sqrt{2\log n}}.
         \label{eqn:b_n}
    \end{align}
It turns out that the above convergence result remains valid even if the sequence $\ppp{{\xi_n}}$ is not independent and normally distributed. Specifically, as shown in \cite[Theorem 6.2.1]{leadbetter2012extremes}, a similar result holds for Gaussian random variables $\ppp{{\xi_n}}$ with a covariance matrix that decays such that $\lim_{n\to\infty} \rho_n \cdot \log{n} = 0$, and with a mean vector whose maximum value decays faster than $\lim_{n\to\infty} \max_{0\leq m \leq n-1} \abs{\mu_m} \cdot \sqrt{\log{n}} = 0$. These conditions precisely match those specified in Theorem~\ref{thm:2}. 

\subsection{Proof of Lemma~\ref{lemma:4}} \label{sec:proof_lem_4}

The proof technique of Lemma~\ref{lemma:4} is similar to the technique used in \cite{pimentel2014location, lopez2016location}, but with a non-trivial adaption to the discrete case. To prove this lemma, we will first establish a deterministic counterpart of \eqref{eqn:expectedValueAndDerivativeOfMaximum}. Specifically, we define,
    \begin{align}
        h^{(\alpha)}(r) \triangleq X(r) + \alpha f(r),\label{eqn: 65}
    \end{align}
where $r \in \{0,1,\ldots, d-1\}$. The functions $X: \{0,1,\ldots, d-1\} \rightarrow \mathbb{R}$, and $f: \{0,1,\ldots, d-1\} \rightarrow \mathbb{R}$ are assumed bounded and deterministic. We further assume that $X$ is injective, i.e., for $z_i \neq z_j$, we have $X(z_i) \neq X(z_j)$. Define,
    \begin{align}
        s(\alpha) \triangleq \max_{r}\{h^{(\alpha)}(r)\}, \label{eqn: 66}
    \end{align}
and note that $s(\alpha)$ is well-defined over the supports of $X$ and $f$, and it is a continuous function of $\alpha$ around $\alpha=0$. Finally, we let,
    \begin{align}
        Z^{(\alpha)}_{\max} \triangleq \argmax_{r}\{h^{(\alpha)}(r)\}.\label{eqn: 67}
    \end{align}
We have the following result.
\begin{lem}\label{corollary:2}
The following relation holds,
    \begin{align}
        \left.\frac{\mathrm{d}}{\mathrm{d}\alpha}s(\alpha) \right|_{\alpha=0} = f(Z^{(0)}_{\max}).
    \end{align}
\end{lem}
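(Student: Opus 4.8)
The plan is to exploit the injectivity of $X$ together with the finiteness of the index set $\{0,1,\ldots,d-1\}$ to show that the maximizer $Z^{(\alpha)}_{\max}$ is \emph{constant} on a neighbourhood of $\alpha=0$. Once this is established, $s(\alpha)$ reduces to a single affine function of $\alpha$ on that neighbourhood, and the derivative is read off immediately.

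First I would fix $r^\star \triangleq Z^{(0)}_{\max}$, the maximizer of $X(\cdot)=h^{(0)}(\cdot)$ over the finite set. Because $X$ is injective, this maximizer is unique, and there is a strict gap to the runner-up,
\begin{align}
    \delta \triangleq X(r^\star) - \max_{r \neq r^\star} X(r) > 0, \nonumber
\end{align}
where the maximum is over finitely many indices and hence attained. The positivity of $\delta$ is exactly where injectivity is used.

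Next I would show local constancy of the argmax. For any $r \neq r^\star$ and any $\alpha$,
\begin{align}
    h^{(\alpha)}(r^\star) - h^{(\alpha)}(r) = \big(X(r^\star) - X(r)\big) + \alpha\big(f(r^\star) - f(r)\big) \geq \delta - 2|\alpha|\max_{0\leq s\leq d-1} \abs{f(s)}. \nonumber
\end{align}
Since $f$ is bounded, choosing $|\alpha|$ small enough that $2|\alpha|\max_s\abs{f(s)} < \delta$ makes the right-hand side strictly positive \emph{simultaneously} for all $r \neq r^\star$; here finiteness of the domain is what lets a single neighbourhood of $\alpha=0$ work uniformly over all competitors. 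Consequently $h^{(\alpha)}(r^\star) > h^{(\alpha)}(r)$ for every $r\neq r^\star$, so $Z^{(\alpha)}_{\max}=r^\star$ throughout this neighbourhood. On it, $s(\alpha) = h^{(\alpha)}(r^\star) = X(r^\star) + \alpha f(r^\star)$ is affine, so $\left.\tfrac{\mathrm{d}}{\mathrm{d}\alpha}s(\alpha)\right|_{\alpha=0} = f(r^\star) = f(Z^{(0)}_{\max})$, as claimed.

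The main obstacle is precisely the uniform local-constancy step: one must argue that a single $\alpha$-interval keeps \emph{all} suboptimal indices strictly below the maximizer at once. Injectivity guarantees the strict gap $\delta>0$ (otherwise ties could be broken differently for different $\alpha$ and $s$ could exhibit a genuine kink at $\alpha=0$), while finiteness of the domain lets the finite minimum of the admissible $\alpha$-thresholds remain positive. Everything else is a one-line differentiation of an affine function.
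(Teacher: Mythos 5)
Your proof is correct and follows essentially the same route as the paper's: both use injectivity of $X$ to get a unique maximizer with a strict gap, show the argmax $Z^{(\alpha)}_{\max}$ is constant for $|\alpha|$ small (the paper's threshold $\alpha \cdot \max_r|f(r)| < \min_{i\neq j}|X(z_i)-X(z_j)|$ plays the role of your $2|\alpha|\max_s|f(s)| < \delta$), and then differentiate the resulting affine function $X(r^\star)+\alpha f(r^\star)$. Your version quantifies the gap to the runner-up slightly more explicitly, but this is a presentational refinement, not a different argument.
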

\begin{proof} [Proof of Lemma~\ref{corollary:2}]
Note that,
    \begin{align}
        \left.\frac{\mathrm{d}}{\mathrm{d}\alpha}s(\alpha) \right|_{\alpha=0}&=\lim_{\alpha \rightarrow 0} \frac{s(\alpha) - s(0)}{\alpha}\nonumber\\
        &= \lim_{\alpha \to 0} \frac{\max_r [X(r) + \alpha f(r) ] - \max_r X(r) } {\alpha}.\label{eqn: 69}
    \end{align}
By the definition of $Z^{(\alpha)}_{\max}$, we have,
    \begin{align}
        \max_r[X(r) + \alpha f(r)] =  X(Z^{(\alpha)}_{\max}) + \alpha f(Z^{(\alpha)}_{\max}),
        \label{eqn: 70}
    \end{align}
and
    \begin{align}
        \max_r X(r) =  X(Z^{(0)}_{\max}).\label{eqn: 71}
    \end{align}
Now, the main observation here is that for a sufficiently small value of $\alpha$ around zero, we must have that ${Z^{(\alpha)}_{\max}}$ and ${Z^{(0)}_{\max}}$ equal because $Z^{(0)}_{\max}$ can take discrete values only, and it is unique. Thus, for $ \alpha \cdot\max_r{\abs{f(r)}} < \min_{i\neq j}{\abs{X(z_i) - X(z_j)}}$, we have,
    \begin{align}
        {Z^{(\alpha)}_{\max}} = {Z^{(0)}_{\max}} .\label{eqn: 72} 
    \end{align}
Combining \eqref{eqn: 69}--\eqref{eqn: 72} yields,
    \begin{align}
        \left.\frac{\mathrm{d}}{\mathrm{d}\alpha}s(\alpha) \right|_{\alpha=0} &= 
        \lim_{\alpha \rightarrow 0} \frac{X(Z^{(\alpha)}_{\max}) + \alpha f(Z^{(\alpha)}_{\max}) - X(Z^{(0)}_{\max})}{\alpha} \\
        &= f{(Z^{(0)}_{\max})},
    \end{align}
which concludes the proof.
\end{proof}

We are now in a position to prove Lemma~\ref{lemma:4}. Similarly to the deterministic case, we define the random function,
    \begin{align}
        h^{(\alpha)}(r) = \s{S}(r) + \alpha f(r),\label{eqn: 74}
    \end{align}
where $\s{S}: \{0,1,\ldots, d-1\} \to \mathbb{R}$ is a discrete stochastic process, and $f$ is a deterministic function. We assume that $\s{S}$ has a continuous probability distribution without any single point with a measure greater than 0. Using Lemma~\ref{corollary:2}, for each realization of $\s{S}(r)$, such that $\s{S}(r)$ is injective, we have,
     \begin{align}
        f(\hat{\s{R}}) = \left.\frac{\mathrm{d}}{\mathrm{d}\alpha}\ \s{M}_d(\alpha) \, \right|_{\alpha=0} \label{eqn:expeted_derivative_probabilistic_case}.
    \end{align}
Under the assumption above of $\s{S}(r)$, the measure of the set of events that $\s{S}$ is not injective is zero. Therefore, the fact that $\frac{\s{M}_d(\alpha) - \s{M}_d(0)}{\alpha}$ is bounded (see, \eqref{eqn:M_d_derivative_bound}) and \eqref{eqn:expeted_derivative_probabilistic_case}, imply that,
     \begin{align}
        \mathbb{E}[f(\hat{\s{R}})] &= \int{f(\hat{\s{R}}) \mathrm{d}\mu}\nonumber\\
        &= \int{\left.\frac{\mathrm{d}}{\mathrm{d}\alpha}\ \s{M}_d(\alpha) \, \right|_{\alpha=0} \mathrm{d}\mu}\nonumber\\
        &= \int{\lim_{\alpha\to 0} \pp{\frac{\s{M}_d(\alpha) - \s{M}_d(0)}{\alpha}} \mathrm{d}\mu} 
        \nonumber\\ & = \lim_{\alpha\to 0} \int{ \pp{\frac{\s{M}_d(\alpha) - \s{M}_d(0)}{\alpha}} \mathrm{d}\mu} \nonumber\\
        &= \left.\frac{\mathrm{d}}{\mathrm{d}\alpha}\ \mathbb{E} \s{M}_d(\alpha) \, \right|_{\alpha=0}, \label{eqn:96}
    \end{align}
which concludes the proof.

\subsection{Proof of Proposition~\ref{prop:3}}
Conditioned on $\s{N}[k]$, the Gaussian vector $\s{S}$  (see, \eqref{eqn: 14} and \eqref{eqn:covarainceMatrix}) can be represented as,
    \begin{align}
        \s{S}\vert\s{N}[k] = \s{Z} + \s{\mu},
    \end{align} 
where $\s{Z}$ is a zero mean Gaussian random vector with covariance matrix given by \eqref{eqn:covarainceMatrix} and $\s{\mu}$ is given by \eqref{eqn: 14}. Define,
    \begin{align}
        h^{(\alpha)}(r) \triangleq \s{Z}(r) + \s{\mu}(r) + \alpha f(r),
    \end{align} 
where we use the same notations as in Lemma \ref{lemma:4}. 
Then, using Lemma \ref{lemma:4},
    \begin{align}
        \mathbb{E}[f(\hat{\s{R}})] = \left.\frac{\mathrm{d}}{\mathrm{d}\alpha}\mathbb{E} \s{M}_d(\alpha)\right|_{\alpha=0},
    \end{align} 
where $\s{M}_d(\alpha) = \max_r\left\{\s{Z}(r) + \mu(r) + \alpha f(r) \right\}$. Therefore, our goal is now to find the derivative of $\mathbb{E} \s{M}_d(\alpha)$.

Using \cite[Theorem 6.2.1]{leadbetter2012extremes}, under the assumptions of Proposition~\ref{prop:3}, for a sufficiently small value of $\alpha$ such that $ \lim_{d\to\infty} \abs{\alpha} \max_r{\abs{f(r)}} \cdot \sqrt{\log d} = 0$, we have for any $x\geq0$,
    \begin{align}
         \lim_{d\to\infty}\mathbb{P}\pp{a_d(\s{M}_d(\alpha) - b_d - m_d^\star(\alpha)) \leq x} = e^{-e^{(-x)}}, \label{eqn:asymptoticGumbel}
    \end{align}
where $a_d$ and $b_d$ are given in \eqref{eqn:a_n} and \eqref{eqn:b_n}, respectively, and 
    \begin{align}
         m_d^\star(\alpha) \triangleq a_d^{-1} \log {\left( d^{-1} \sum_{i=0}^{d-1} {e^{a_d (\mu_i+\alpha f(i))}} \right)}.
    \end{align}
For brevity, we denote,
    \begin{align}
         \s{T}_d(\alpha) \triangleq a_d\cdot[\s{M}_d(\alpha) - b_d - m_d^\star(\alpha)],\label{eqn:T_d_defenition}
    \end{align}
and we note that,
    \begin{align}
         \s{T}_d(\alpha) - \s{T}_d(0) = a_d[(\s{M}_d(\alpha) - m_d^\star(\alpha)) - (\s{M}_d(0) - m_d^\star(0))] \label{eqn:T_d},
    \end{align}
and so,
    \begin{align}
         \Delta_d(\alpha)&\triangleq\frac{1}{a_d}\frac{\s{T}_d(\alpha) - \s{T}_d(0)}{\alpha} = \frac{\s{M}_d(\alpha) - \s{M}_d(0)}{\alpha} - \frac{m_d^\star(\alpha) - m_d^\star(0)}{\alpha} \label{eqn:T_d_derivative},
    \end{align}
for any $\alpha\neq0$. The following result shows $\Delta_d(\alpha)$ converges zero in the $\mathcal{L}^1$ sense.
\begin{lem}\label{corollary:1} For any $\alpha\neq0$,
    \begin{align}
         \lim_{d\to\infty}\abs{\Delta_d(\alpha)}=0\label{eqn:T_d_L1_convergence},
    \end{align}    
    i.e., $\Delta_d(\alpha)\xrightarrow[]{\mathcal{L}^1} 0$, as $d\to\infty$.
\end{lem}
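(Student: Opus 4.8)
The plan is to read off from \eqref{eqn:T_d_derivative} the identity
\begin{align}
\Delta_d(\alpha)=\frac{1}{a_d\alpha}\big(\s{T}_d(\alpha)-\s{T}_d(0)\big),
\end{align}
and to prove the lemma in two stages: first that $\Delta_d(\alpha)$ is uniformly bounded, and second that it tends to $0$ in probability. A uniformly bounded sequence that converges to $0$ in probability also converges to $0$ in $\mathcal{L}^1$ by bounded convergence, so these two facts give $\Delta_d(\alpha)\xrightarrow[]{\mathcal{L}^1}0$, which is \eqref{eqn:T_d_L1_convergence}.

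For the uniform bound I would use an envelope (sandwich) argument. Writing $\s{S}(r)=\s{Z}(r)+\mu(r)$ and letting $\hat{\s{R}}=\hat{\s{R}}(0)$ and $\hat{\s{R}}(\alpha)$ be the maximizers of $\s{S}(r)$ and of $\s{S}(r)+\alpha f(r)$, evaluating each maximum at the other problem's maximizer gives
\begin{align}
\alpha f(\hat{\s{R}})\leq \s{M}_d(\alpha)-\s{M}_d(0)\leq \alpha f\big(\hat{\s{R}}(\alpha)\big),
\end{align}
for $\alpha>0$ (the chain reverses for $\alpha<0$), so $\big|(\s{M}_d(\alpha)-\s{M}_d(0))/\alpha\big|\leq \max_r\abs{f(r)}$. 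For the deterministic term, setting $w_r\triangleq e^{a_d\mu_r}/\sum_s e^{a_d\mu_s}$ we have $\frac{m_d^\star(\alpha)-m_d^\star(0)}{\alpha}=\frac{1}{a_d\alpha}\log\big(\sum_r w_r e^{a_d\alpha f(r)}\big)$, which is a log of a weighted average of the $e^{a_d\alpha f(r)}$ and therefore lies between $\min_r f(r)$ and $\max_r f(r)$. Combining the two estimates yields $\abs{\Delta_d(\alpha)}\leq 2\max_r\abs{f(r)}$ almost surely, for every $d$.

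For the convergence in probability the key observation is that both centered maxima are tight. By the extreme-value limit \eqref{eqn:asymptoticGumbel}, applied at the perturbation $\alpha$ and at $0$, each of $\s{T}_d(\alpha)$ and $\s{T}_d(0)$ converges in distribution to a standard Gumbel variable and is hence bounded in probability; so is their difference $\s{T}_d(\alpha)-\s{T}_d(0)$. Dividing by $a_d\alpha$ and using $a_d=\sqrt{2\log d}\to\infty$ for fixed $\alpha\neq0$ gives $\Delta_d(\alpha)=\big(\s{T}_d(\alpha)-\s{T}_d(0)\big)/(a_d\alpha)\xrightarrow[]{\calP}0$. Together with the uniform bound of the previous paragraph and bounded convergence, this establishes \eqref{eqn:T_d_L1_convergence}.

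The main obstacle is the tightness of the \emph{perturbed} centered maximum $\s{T}_d(\alpha)$, i.e. the legitimacy of \eqref{eqn:asymptoticGumbel} with the centering $m_d^\star(\alpha)$ that absorbs the trend $\mu_r+\alpha f(r)$. This is exactly where the smallness requirement $\abs{\alpha}\max_r\abs{f(r)}\sqrt{\log d}\to0$ is used: it ensures that the added bounded trend $\alpha f$ neither violates the mean-decay hypothesis of \cite[Theorem 6.2.1]{leadbetter2012extremes} nor concentrates the Laplace sum $\sum_r e^{a_d(\mu_r+\alpha f(r))}$ on a vanishing fraction of indices, so that the Gumbel limit (and hence tightness) survives the perturbation. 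Since Lemma~\ref{corollary:1} feeds only into the derivative of $\mathbb{E}\,\s{M}_d(\alpha)$ at $\alpha=0$, restricting attention to such small $\alpha$ is harmless. Once tightness is granted, dividing the $\mathcal{O}_{\mathbb{P}}(1)$ quantity by the diverging $a_d$ makes the rest of the argument routine.
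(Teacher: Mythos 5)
Your proof is correct and follows essentially the same route as the paper's: the same decomposition $\Delta_d(\alpha)=\frac{1}{a_d\alpha}\left(\s{T}_d(\alpha)-\s{T}_d(0)\right)$, convergence in probability deduced from the Gumbel limit \eqref{eqn:asymptoticGumbel} together with $a_d\to\infty$, a uniform almost-sure bound $\abs{\Delta_d(\alpha)}\leq 2\max_r\abs{f(r)}$ on both difference quotients, and the bounded-convergence/uniform-integrability step to upgrade to $\mathcal{L}^1$. Your sandwich argument for $\abs{(\s{M}_d(\alpha)-\s{M}_d(0))/\alpha}\leq\max_r\abs{f(r)}$ and your explicit flagging of the smallness condition $\abs{\alpha}\max_r\abs{f(r)}\sqrt{\log d}\to 0$ needed for the perturbed Gumbel limit are in fact more careful than the paper's terse treatment of those same points.
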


\begin{proof} [Proof of Lemma~\ref{corollary:1}]
To prove \eqref{eqn:T_d_L1_convergence}, we will first show that $\Delta_d(\alpha)$ converges to zero in probability. Because $\Delta_d(\alpha)$ is uniformly integrable, this is sufficient for the desired $\mathcal{L}^1$ convergence above. Specifically, recall from \eqref{eqn:asymptoticGumbel} that $\s{T}_d(\alpha)$ converges in distribution to the Gumbel random variable $\s{Gum}$ with location zero and unit scale, i.e., $\s{T}_d(\alpha) \xrightarrow[]{\calD} \s{Gum}$, as $d\to\infty$. Furthermore, it is clear that $\frac{1}{a_d} = \frac{1}{\sqrt{2 \log d}} \to 0$, as $d \to \infty$. Thus, Slutsky's theorem \cite{slutsky1925stochastische} implies that,
     \begin{align}
         \frac{\s{T}_d(\alpha)}{a_d} \xrightarrow[]{\calD}0.
    \end{align}
It is known that convergence in distribution to a constant implies also convergence in probability to the same constant \cite{durrett2019probability}, and thus,    
    \begin{align}
         \frac{\s{T}_d(\alpha)}{a_d} \xrightarrow[]{\calP} 0 \label{eqn:T_d_convergence_in_probability}.
    \end{align}
Therefore, the above result together with the continuous mapping theorem \cite{durrett2019probability} implies that,
    \begin{align}
         \Delta_d(\alpha) \xrightarrow[]{\calP} 0 \label{eqn:T_d_derivative_convergence_in_probability},
    \end{align}
for every $\alpha\neq0$.

Next, we show that $\Delta_d(\alpha)$ is bounded with probability one. Indeed, by the definition of $\s{M}_d(\alpha)$ in \eqref{eqn: 49}, we have,
    \begin{align}
         \abs{\frac{\s{M}_d(\alpha) - \s{M}_d(0)}{\alpha}} \leq \max_{0 \leq r \leq d-1} \abs{f(r)}<\mathsf{C} < \infty, \label{eqn:M_d_derivative_bound}
    \end{align}
for some $\s{C}>0$, where we have used the fact that $f$ is bounded. Furthermore, note that,
    \begin{align}
        \frac{\mathrm{d}}{\mathrm{d}\alpha} m_d^\star(\alpha) = \frac{\sum_{i=0}^{d-1}{f(i)\exp\{a_d (\mu_{i}+\alpha f(i))}}{\sum_{i=0}^{d-1}{\exp\{a_d (\mu_{i}+\alpha f(i))}\}}
        \label{eqn: 113},
    \end{align}   
which is bounded because,
    \begin{align}
        \abs{\frac{\sum_{i=0}^{d-1}{f(i)\exp\{a_d (\mu_{i}+\alpha f(i))}}{\sum_{i=0}^{d-1}{\exp\{a_d (\mu_{i}+\alpha f(i))}\}}} \leq \max_{0 \leq r \leq d-1} \abs{f(r)}<\mathsf{C} < \infty.
        \label{eqn: 106}
    \end{align}   
Combining  \eqref{eqn:T_d_derivative}, \eqref{eqn:M_d_derivative_bound} and \eqref{eqn: 106}, leads to,
    \begin{align}
         \abs{\Delta_d(\alpha)} &\leq \abs{\frac{\s{M}_d(\alpha) - \s{M}_d(0)}{\alpha}} + \abs{\frac{m_d^\star(\alpha) - m_d^\star(0)}{\alpha}} \leq 2\max_{0 \leq r \leq d-1} \abs{f(r)} < \infty  \label{eqn:T_d_derivative_bounded}.
    \end{align}
Now, since $\Delta_d(\alpha)$ is bounded, it is also uniformly integrable, and thus when combined with \eqref{eqn:T_d_derivative_convergence_in_probability} we may conclude that,
\begin{align}
    \Delta_d(\alpha) \xrightarrow[]{\mathcal{L}^1} 0,
\end{align}
as claimed.
\end{proof}

We continue with the proof of Proposition~\ref{prop:3}. First, we show that,
    \begin{align}
         \lim_{d\to \infty}\lim_{\alpha\to 0} \mathbb{E}\pp{\Delta_d(\alpha)} = \lim_{\alpha\to 0}\lim_{d\to\infty} \mathbb{E}\pp{\Delta_d(\alpha)}\label{eqn:T_d_derivatives_order_exchange}.
    \end{align}
Indeed, note that,
    \begin{align}
         \lim_{d\to \infty}\lim_{\alpha\to 0} \mathbb{E}\pp{\Delta_d(\alpha)} = \lim_{d\to \infty} \lim_{\alpha\to 0}\int{ \pp{\frac{\s{T}_d(\alpha) - \s{T}_d(0)}{\alpha}} \mathrm{d}\mu},
    \end{align}
where $\mathrm{d}\mu$ is the probability measure associated with $\s{T}_d$. From \eqref{eqn:T_d_derivative_bounded} we know that $\Delta_d(\alpha)$ is bounded. Thus, applying the dominated convergence theorem, we obtain,
    \begin{align}
          \lim_{d\to \infty} \lim_{\alpha\to 0} \int{ \pp{\frac{\s{T}_d(\alpha) - \s{T}_d(0)}{\alpha}} \mathrm{d}\mu} = \lim_{d\to \infty} \int{\lim_{\alpha\to 0} \pp{\frac{\s{T}_d(\alpha) - \s{T}_d(0)}{\alpha}} \mathrm{d}\mu}.\label{eqn:afterDCT}
    \end{align}
Since the integral at the right-hand-side of \eqref{eqn:afterDCT} is finite and bounded for each value of $\alpha$, and for each value of $d$, the order of the limits can be exchanged, thus leading to \eqref{eqn:T_d_derivatives_order_exchange}. Therefore, from \eqref{eqn:T_d_derivative} and \eqref{eqn:T_d_derivatives_order_exchange}, we have,
    \begin{align}
         \lim_{\alpha\to 0}\lim_{d\to\infty} \mathbb{E}\pp{\Delta_d(\alpha)} &= \lim_{d\to\infty}\lim_{\alpha\to 0} \pp{\frac{\mathbb{E}[\s{M}_d(\alpha) - \s{M}_d(0)]}{\alpha} - \frac{[m_d^\star(\alpha) - m_d^\star(0)]}{\alpha}}\label{eqn:lfd} \\
         &=\lim_{d\to\infty}\left. \pp{\frac{\mathrm{d}}{\mathrm{d}\alpha}\mathbb{E} \s{M}_d(\alpha)
         - \frac{\mathrm{d}}{\mathrm{d}\alpha} m_d^\star(\alpha)}\right|_{\alpha=0}\label{eqn:T_d_derivatives}.
    \end{align}
Now, Lemma~\ref{corollary:1} implies that the left-hand-side of \eqref{eqn:lfd} nullifies, and thus,
    \begin{align}
        \lim_{d\to\infty}\left. \pp{\frac{\mathrm{d}}{\mathrm{d}\alpha}\mathbb{E} \s{M}_d(\alpha)
         - \frac{\mathrm{d}}{\mathrm{d}\alpha} m_d^\star(\alpha)}\right|_{\alpha=0} = 0 \label{eqn:maximumDifference}.
    \end{align}
Finally, combining \eqref{eqn: 113} and \eqref{eqn:maximumDifference}, we obtain \eqref{eqn:Lemma2}, which concludes the proof.

\section{Proof of Theorem~\ref{thm:2}}\label{sec:proof-of-thm-2}
\textit{Remark on notation}. In this section, we omit the dependence on $0 \leq i \leq M-1$, where this is clear from the context, e.g., $\s{N}_i = \s{N}$ and $\s{\hat{R}}_i = \s{\hat{R}}$. In addition, to streamline notation we often omit the explicit dependence on the signal length and write $x$ (and the associated quantities) in place of $x^{(d)}$ whenever the dimension $d$ is understood from the context.

\subsection{Notations and auxiliary results}
First, we introduce notation and present several auxiliary results that support Theorem~\ref{thm:2}.
Recall the definition of $\widetilde{\s{X}}_k[\ell]$ from \eqref{eqn:XtildeDefenition}. We define:
\begin{align}
    \sigma_{k}^2 = \frac{\sigma^2}{2} \cdot \sum_{\ell = 0}^{d-1} |\widetilde{\s{X}}_k[\ell]|^2, \label{eqn:E1}
\end{align}
which corresponds to the diagonal entries of $\s{\Sigma}_k[r,s]$ in \eqref{eqn:covarainceMatrix}.
In addition, recall the vector $\s{S}$ as defined in \eqref{equ:maxGaussDef}, and introduce the normalized vector:
\begin{align}
    \Tilde{\s{S}}_k = \s{S}/\sigma_k, \label{eqn:E12}
\end{align}
where $\sigma_k$ is given in \eqref{eqn:E1}. Then, by Lemma~\ref{lemma:conditioning}, we have: 
    \begin{align}
        \s{\widetilde{S}}_k\vert\s{N}[k] \sim \calN (\s{\widetilde{\mu}}_{k}, \s{\widetilde{\Sigma}}_{k}), \label{eqn:conditionalGaussianNormalized}
    \end{align}
where the mean and covariance are given by:
    \begin{align}
        \widetilde{\mu}_{k}[r] &\triangleq 2 \sigma_k^{-1}\abs{\s{X}[k]} \abs{\s{N}_i[k]} \cos \left( \frac{2\pi kr}{d} + \phi_{\s{N}_i}[k] - \phi_{\s{X}}[k] \right)
        \label{eqn:E4}
    \end{align}
    for $0\leq r \leq d-1$, and
    \begin{align}
        \s{\widetilde{\Sigma}}_{k}[r,s] & \triangleq \frac{\sum_{\ell = 0}^{d-1} |\widetilde{\s{X}}_k[\ell]|^2 \cos \left( \frac{2\pi \ell}{d}(r-s) \right)}{\sum_{\ell = 0}^{d-1} |\widetilde{\s{X}}_k[\ell]|^2}, \label{eqn:E5}
    \end{align}
    for $0\leq r,s\leq d-1$. Note in particular that normalizing $\s{S}$ by $\sigma_k$ ensures that the diagonal entries of $\s{\widetilde{\Sigma}}_k$ are equal to one.

Recall that we assume the template vector is normalized, i.e.,
$\sum_{\ell=0}^{d-1} |\s{X}[\ell]|^2 = 1$. Under this assumption, we have the following lemma.
\begin{lem} \label{lemma:E1}
    Suppose the conditions of Assumption~\ref{assump:1} hold. Then, for all $k \in \mathbb{N}$, the following limits hold:
    \begin{align}   
        \lim_{d\to\infty}\sum_{\ell=0}^{d-1} \abs{2\abs{\s{X}[\ell]}^2 - \abs{\s{\widetilde{X}}_k[\ell]}^2} = 0, \label{eqn:xTildeConvergenceToX0}
    \end{align}
    and,
    \begin{align}
        \lim_{d\to\infty} \sigma_{k}^2 = \sigma^2. \label{eqn:E7}
    \end{align}
\end{lem}

\begin{proof}[Proof of Lemma~\ref{lemma:E1}]

From the definition of $\widetilde{\s{X}}_k[\ell]$ in \eqref{eqn:XtildeDefenition}, we have:
\begin{align}
    2\abs{\s{X}[\ell]}^2 - \abs{\s{\widetilde{X}}_k[\ell]}^2 = \begin{cases}
                 2\abs{\s{X}[\ell]}^2  & \s{if}  \ell = k, d-k, \\
             \abs{\s{X}[\ell]}^2  & \s{if}  \ell = 0, d/2, \\
                0  & \s{otherwise}.
          \end{cases} \label{eqn:E14}
\end{align}
According to Assumption~\ref{assump:1}, we have,
\begin{align}
        \lim_{d\to\infty} \ppp{{\underset{0 < k \leq d-1} \max\ppp{ \abs{\s{X}[k]}}} \cdot \sqrt{\log d}}  = 0, \label{eqn:E15}
    \end{align} 
and $\s{X}[0] = 0$,
Recall that the template vector is normalized, i.e., $\sum_{\ell=0}^{d-1} |\s{X}[\ell]|^2 = 1$, which implies $\abs{\s{X}[\ell]}^2 \leq \abs{\s{X}[\ell]}$ for all $0 \leq \ell \leq d-1$.

Combining \eqref{eqn:E14}--\eqref{eqn:E15}, results,
\begin{align}
    \lim_{d \to \infty} \sum_{\ell=0}^{d-1} \pp{2\abs{\s{X}[\ell]}^2 - \abs{\s{\widetilde{X}}_k[\ell]}^2} = \lim_{d \to \infty} \pp{\abs{\s{X}[0]}^2 + \abs{\s{X}[d/2]}^2 + 4\abs{\s{X}[k]}^2} = 0,
\end{align}
which proves \eqref{eqn:xTildeConvergenceToX0}. 

To establish \eqref{eqn:E7}, observe that
\begin{align}
    \lim_{d \to \infty} \sigma_{k}^2 = \frac{\sigma^2}{2} \cdot \lim_{d \to \infty} \sum_{\ell = 0}^{d-1} |\widetilde{\s{X}}_k[\ell]|^2 = \sigma^2,
\end{align}
where the final equality follows from \eqref{eqn:xTildeConvergenceToX0}. This completes the proof of the lemma. 
\end{proof}

We now state a lemma showing that the entries of the covariance matrix $\s{\widetilde{\Sigma}}_k[r,s]$ satisfy the conditions of Proposition~\ref{prop:3}.
\begin{lem} \label{lemma:E3}
    Suppose the conditions of Assumption~\ref{assump:1} hold. Define,
\begin{align}
    \rho_{\abs{{r-s}}} \triangleq |\s{\widetilde{\Sigma}}_{k}\pp{r,s}|, \label{eqn:E8}
\end{align}
for $\s{\widetilde{\Sigma}}_{k}\pp{r,s}$ defined in \eqref{eqn:E5}. Then, $\rho_0 = 1$, and
    \begin{align}
        \rho_n \log \p{n} \to 0.
    \end{align}
    That is, the covariance matrix $\s{\widetilde{\Sigma}}_k[r,s]$ satisfies the conditions required by Proposition~\ref{prop:3}. 
\end{lem}

\begin{proof}[Proof of Lemma~\ref{lemma:E3}]
From the definition of the covariance matrix of $\s{\widetilde{S}}_k \vert \s{N}[k]$ in \eqref{eqn:E5}, we observe that it is circulant and fully characterized by its eigenvalues $|\widetilde{\s{X}}_k[\ell]|^2$ (see \eqref{eqn:XtildeDefenition}) for $0 \leq \ell \leq d-1$. Due to the normalization by $\sigma_k$, the covariance matrix is normalized such that its diagonal entries equal one, i.e., $\rho_0 = 1$.

It remains to show that the off-diagonal elements decay sufficiently fast, namely, $\rho_m \log \p{m} \to 0$, for $m \to \infty$.
    
Using the definition of $\rho$ from \eqref{eqn:E8}, we can write:
\begin{align}
    \rho_m = \s{\widetilde{\Sigma}}_{k}\pp{r,r-m} = \frac{\sum_{\ell = 0}^{d-1} |\widetilde{\s{X}}_k[\ell]|^2 \cos \left( \frac{2\pi \ell}{d}m \right)}{\sum_{\ell = 0}^{d-1} |\widetilde{\s{X}}_k[\ell]|^2}. \label{eqn:E20}
\end{align}
As $d \to \infty$, the denominator in \eqref{eqn:E20} converges to 2 by Lemma~\ref{lemma:E1}. The numerator corresponds to the DFT of the sequence $|\widetilde{\s{X}}_k[\ell]|^2$,
\begin{align}
    \sum_{\ell = 0}^{d-1} |\widetilde{\s{X}}_k[\ell]|^2 \cos \left( \frac{2\pi \ell}{d}m \right) = \mathcal{F}\ppp{|{\widetilde{\s{X}}_k}|^2} [m]. \label{eqn:E21}
\end{align}
By Lemma~\ref{lemma:E1}, for each fixed $k$ and for all $m\in\{1,\ldots,d-1\}$,
\begin{align}
    \mathcal{F}\!\big(|\widetilde{\s{X}}_k|^2\big)[m] &= 2\,\mathcal{F}\!\big(|\s{X}|^2\big)[m] + \Delta_{k,d}[m],
\label{eqn:E22_updated_decomp}
\end{align}
where the error term $\Delta_{k,d}[m]$ is supported only on the modified frequencies
$\ell\in\{0,d/2,k,d-k\}$ and satisfies the uniform bound
\begin{align}
    \sup_{1\le m\le d-1}\,|\Delta_{k,d}[m]| \le |\s{X}[0]|^2 + |\s{X}[d/2]|^2 + 4|\s{X}[k]|^2 \xrightarrow[d\to\infty]{}0,
\label{eqn:E22_updated_err}
\end{align}
using Assumption~\ref{assump:1}(2) and $|\s{X}[0]|=0$ from Assumption~\ref{assump:1}(3).

Moreover, recalling that $\s{R}_{\s{X}\s{X}} = \calF\!\big(|\s{X}|^2\big)$, Assumption~\ref{assump:1}(1) gives
\begin{align}
    \max_{1\le m\le d-1}\, \log(d)\,\big|\mathcal{F}\!\big(|\s{X}|^2\big)[m]\big| = \max_{1\le m\le d-1}\, \log(d)\,|\s{R}_{\s{X}\s{X}}[m]| \xrightarrow[d\to\infty]{}0.
\label{eqn:E23_updated_X}
\end{align}
Combining \eqref{eqn:E22_updated_decomp}--\eqref{eqn:E22_updated_err} with \eqref{eqn:E23_updated_X} yields
\begin{align}
    \max_{1\le m\le d-1}\, \log(d)\,\big|\mathcal{F}\!\big(|\widetilde{\s{X}}_k|^2\big)[m]\big| \xrightarrow[d\to\infty]{}0,
    \label{eqn:E23_updated_Xtilde}
\end{align}
and together with \eqref{eqn:E21} this implies the desired decay condition for the off-diagonal correlations $\rho_m$, completing the proof.
\end{proof}

\subsection{High-dimensional limits} \label{sec:proofOfLemmaE3}

We now present a central result, which plays a key role in the proof of Theorem~\ref{thm:2}. To state the result, we first define the functions:
\begin{align}
   f_1(r) \triangleq \abs{\s{N}[k]} \cos\left(\frac{2\pi k}{d}r + \phi_{\s{N}}[k] - \phi_{\s{X}}[k]\right),
   \label{eqn:42}
\end{align}
and
\begin{align}
   f_2(r) \triangleq\abs{\s{N}[k]}^2 \sin^2\left(\frac{2\pi k}{d}r + \phi_{\s{N}}[k] - \phi_{\s{X}}[k]\right),
   \label{eqn:43}
\end{align}
for $0\leq r\leq d-1$. Note that $f_1$ and $f_2$ correspond to the terms appearing in the expectation in the denominator and numerator of \eqref{eqn:constantCk}, respectively.

\begin{proposition}[High-dimensional limits for ${f_1(\hat{\s{R}})}$ and ${f_2(\hat{\s{R}})}$ ]\label{prop:high-dimentional-limits}
Assume the template signal $x$  satisfies Assumption~\ref{assump:1}, and that its DFT coefficients are non-vanishing, i.e., $\s{X}[k] \neq 0$, for all $0< k \leq d-1$. Let ${f_1(\hat{\s{R}})}$ and ${f_2(\hat{\s{R}})}$ be defined as in \eqref{eqn:42} and \eqref{eqn:43}, with $\hat{\s{R}}$ defined in 
\eqref{eqn:OptShiftRealSpace}. Then, as $d \to \infty$, their expected values satisfy:
    \begin{align}
        \lim_{d\to\infty} \ \frac{1}{ a_d  \cdot \abs{\s{X}[k]}}\mathbb{E} [{f_1(\hat{\s{R}})}] = \sigma,        \label{eqn:targetFunctionAsymptotic}
    \end{align}
and
    \begin{align}
        \lim_{d\to\infty} \mathbb{E} [{f_2(\hat{\s{R}})}] = \frac{\sigma^2}{2},        \label{eqn:targetFunctionAsymptotic2}
    \end{align}
where $a_d \triangleq \sqrt{2\log d}$. 
\end{proposition}

The proof of Proposition~\ref{prop:high-dimentional-limits} builds on Proposition~\ref{prop:3} and the auxiliary Lemmas~\ref{lemma:E1}--\ref{lemma:E3}.

\begin{proof}[Proof of Proposition~\ref{prop:high-dimentional-limits}]
Our goal is to prove \eqref{eqn:targetFunctionAsymptotic} and \eqref{eqn:targetFunctionAsymptotic2}. By the law of total expectation, we have,
    \begin{align}
         \frac{1}{a_d}\mathbb{E}[{f_1(\hat{\s{R}})}] = \frac{1}{a_d}\mathbb{E}\pp{\mathbb{E}\pp{\left. {f_1(\hat{\s{R}})} \right| {\s{N}[k]}}},
    \end{align}
and
    \begin{align}
         \mathbb{E}[{f_2(\hat{\s{R}})}] = \mathbb{E}\pp{\mathbb{E}\pp{\left. {f_2(\hat{\s{R}})} \right| {\s{N}[k]}}}.
    \end{align}
Accordingly, we will prove
    \begin{align}   
        \frac{\sigma_k}{a_d \cdot \abs{\s{X}[k]}}{\mathbb{E} \pp{\left.{f_1(\hat{\s{R}})} \right| \s{N}[k]}} \xrightarrow[]{\mathcal{L}^1} \abs{\s{N}[k]}^2, \label{eqn:E26}
    \end{align}
and,
    \begin{align}   
        {\mathbb{E} \pp{\left.{f_2(\hat{\s{R}})} \right| \s{N}[k]}} \xrightarrow[]{\mathcal{L}^1} \frac{1}{2} \abs{\s{N}[k]}^2, \label{eqn:E27}
    \end{align}
which would yield the desired result. 

To proceed, we apply Proposition~\ref{prop:3}.
Recall the definition of the vector $\Tilde{\s{S}}_k$ given in \eqref{eqn:E12}.
Conditioned on $\s{N}[k]$, this vector follows a Gaussian distribution with mean $\widetilde{\mu}_k[r]$ (as defined in \eqref{eqn:E4}) and covariance matrix $\s{\widetilde{\Sigma}}_{k}[r,s]$ (defined in \eqref{eqn:E5}).
We assume that both the mean and the covariance satisfy Assumption~\ref{assump:1}, and we assert that they also meet the criteria of Proposition~\ref{prop:3}. Indeed, observe the following:
\begin{enumerate}
        \item \textbf{The mean term.} By Assumption~\ref{assump:1}, we have $\abs{\s{X}[k]} \sqrt{\log d} \to 0$, as $d\to\infty$, for every $0 \leq k \leq d-1$, implying that $\sqrt{\log d}\max{\abs{\widetilde{\mu}_{k}[r]}} \to 0$, where the $\abs{\s{N}[k]}$ term in $\widetilde{\mu}_k[r]$ is finite and independent of $d$.
    \item \textbf{The covariance term.} By Lemma \ref{lemma:E3}, the covariance matrix $\s{\widetilde{\Sigma}}_{k}[r,s]$ satisfies the conditions of Proposition \ref{prop:3}.
\end{enumerate}

We apply Proposition~\ref{prop:3} and the result in \eqref{eqn:Lemma2} to the functions $f_1(\hat{\s{R}})$ and $f_2(\hat{\s{R}})$, with respect to the Gaussian vector ${\s{\widetilde{S}}_k \vert }\s{N}[k]$ \eqref{eqn:conditionalGaussianNormalized}. Observe that
    \begin{align}   
        \max_{0 \leq r \leq d-1}{\abs{\widetilde{\mu}_k[r]}} = 2\sigma_k^{-1}\abs{\s{X}[k]}\abs{\s{N}[k]}, \label{eqn: 31}
    \end{align}
and,
    \begin{align}   
        {f_1(\hat{\s{R}})}& = {\abs{\s{N}[k]}\cos\left(\frac{2\pi k}{d} \hat{\s{R}} + \phi_\s{N}[k] - \phi_\s{X}[k] \right)}= \frac{\sigma_k}{2\abs{\s{X}[k]}}{\widetilde{\mu}_k[\hat{\s{R}}]}. \label{eqn:E25}
    \end{align}

Given that the assumptions of Proposition~\ref{prop:3} are satisfied, it follows that
\begin{align}   
        \mathbb{E} \left[\left.{f_1(\hat{\s{R}})} \right| \s{N}[k] \right]  -\frac{\sum_{r=0}^{d-1}f_1(r)e^{\widetilde{\mu}_{k}[r] a_d }}{\sum_{r=0}^{d-1}e^{\widetilde{\mu}_{k}[r] a_d }} \xrightarrow[]{\s{a.s.}} 0,
        \label{eqn:f1Expectation}
    \end{align}
and,
    \begin{align}   
        \mathbb{E} \left[\left.{f_2(\hat{\s{R}})} \right| \s{N}[k] \right] - \frac{\sum_{r=0}^{d-1}f_2(r)e^{\widetilde{\mu}_{k}[r] a_d }}{\sum_{r=0}^{d-1}e^{\widetilde{\mu}_{k}[r] a_d }} \xrightarrow[]{\s{a.s.}} 0.
        \label{eqn:f2Expectation}
    \end{align}
Next, we evaluate the terms at the left-hand-side of \eqref{eqn:f1Expectation} and \eqref{eqn:f2Expectation}. 

\paragraph{Proof of \eqref{eqn:E26} and \eqref{eqn:targetFunctionAsymptotic}.} We begin by proving that
 \begin{align}   
        \frac{1}{2a_d} \frac{\sum_{r=0}^{d-1}{\widetilde{\mu}_{k}[r]\exp\{\widetilde{\mu}_{k}[r] a_d }\}}{\sum_{r=0}^{d-1}{\exp\{\widetilde{\mu}_{k}[r] a_d }\}} \frac{\sigma_k^2}{\abs{\s{X}[k]}^2} \xrightarrow[]{\s{a.s.}} \abs{\s{N}[k]}^2.\label{eqn: 30}
    \end{align}
From the definition of $f_1(r)$, it follows that
    \begin{align}   
        \sum_{r=0}^{d-1}{f_1(r)} = 0, \label{eqn:E30}
    \end{align}
almost surely. Additionally, from the definition of $\widetilde{\mu}_k[r]$,
    \begin{align}   
        {\widetilde{\mu}_{k}[r] a_d} = 2 \sigma_k^{-1} a_d \abs{\s{X}[k]}\abs{\s{N}[k]} \cos\left( {\frac{2\pi k}{d}r + \phi_{\s{N}}[k] - \phi_{\s{X}}[k]} \right). \label{eqn: 85}
    \end{align}
By Assumption \ref{assump:1}, we have $a_d \abs{\s{X}[k]} \to 0$ as $d \to \infty$. Thus, from \eqref{eqn: 85} and the continuous mapping theorem, we obtain
    \begin{align}   
        {\widetilde{\mu}_{k}[r] a_d} \xrightarrow[]{\s{a.s.}} 0.
        \label{eqn:convergenceOfMuRaDtoZero}
    \end{align}
Since $\sum_{r=0}^{d-1} \widetilde{\mu}_{k}[r] = 0$ almost surely (from \eqref{eqn:E30}), and applying the continuous mapping theorem along with \eqref{eqn:convergenceOfMuRaDtoZero}, we deduce that
    \begin{align}   
        \frac{{\sum_{r=0}^{d-1}{\widetilde{\mu}_{k}[r]\exp\{\widetilde{\mu}_{k}[r] a_d }\}}}{a_d \sum_{r=0}^{d-1} [{\widetilde{\mu}_{k}[r]}]^2} \xrightarrow[]{\s{a.s.}} 1.
        \label{eqn: 86}
    \end{align}
Similarly, applying the continuous mapping theorem using \eqref{eqn:convergenceOfMuRaDtoZero}, we get
    \begin{align}   
        \frac{{\sum_{r=0}^{d-1}{\exp\{\widetilde{\mu}_{k}[r] a_d }\}}}{d} \xrightarrow[]{\s{a.s.}} 1.
        \label{eqn: 87}
    \end{align}
Next, from the definition of $\widetilde{\mu}_k[r]$, 
    \begin{align}   
       \sum_{r=0}^{d-1} [{\widetilde{\mu}_{k}[r]}]^2 = 4 \sigma_k^{-2} \left(\abs{\s{X}[k]} \abs{\s{N}[k]}\right)^2 \sum_{r=0}^{d-1} \cos^2\left( {\frac{2\pi k}{d}r + \phi_{\s{N}}[k] - \phi_{\s{X}}[k]} \right).
        \label{eqn: 88}
    \end{align}
As $d \to \infty$, 
    \begin{align}   
       \frac{1}{d} \sum_{r=0}^{d-1} \cos^2\left( {\frac{2\pi k}{d}r + \phi_{\s{N}}[k] - \phi_{\s{X}}[k]} \right) \xrightarrow[]{\s{a.s.}} \frac{1}{2}.
        \label{eqn: 89}
    \end{align}
Combining \eqref{eqn: 88} and \eqref{eqn: 89}, we obtain
\begin{align}   
       \frac{\sigma_k^2}{2\abs{\s{X}[k]}^2}  \sum_{r=0}^{d-1} [{\widetilde{\mu}_{k}[r]}]^2
        \xrightarrow[]{\s{a.s.}} \abs{\s{N}[k]}^2 ,
        \label{eqn:E37}
    \end{align}
for $d \to \infty$.
Thus, combining \eqref{eqn: 86}--\eqref{eqn:E37}, we conclude that
\begin{align}
    \frac{\sigma_k^2}{2\abs{\s{X}[k]}^2a_d} \frac{\sum_{r=0}^{d-1} \widetilde{\mu}_k[r]e^{\widetilde{\mu}_{k}[r] a_d }}{\sum_{r=0}^{d-1}e^{\widetilde{\mu}_{k}[r] a_d }}  \xrightarrow[]{\s{a.s.}} \abs{\s{N}[k]}^2. \label{eqn:E38}
\end{align}
This proves \eqref{eqn: 30}. 
Finally, combining \eqref{eqn:E25}, \eqref{eqn:f1Expectation}, and \eqref{eqn:E38}, we arrive at
\begin{align}   
        \frac{\sigma_k^2}{2\abs{\s{X}[k]}^2a_d}  \mathbb{E} \pp{\left.{\widetilde{\mu}_k[\hat{\s{R}}]} \right| \s{N}[k]} \xrightarrow[]{\s{a.s.}} \abs{\s{N}[k]}^2. \label{eqn: 117}
    \end{align}

Let the term on the left-hand side of \eqref{eqn: 117} be denoted by $G_d$ and the term on the right-hand side by $G$, so that $G_d \xrightarrow[]{\text{a.s.}} G$. By definition, observe that $\abs{G_d} \leq \abs{\s{N}[k]}^2$, and it is evident that $\mathbb{E}[\abs{\s{N}[k]}^2] = \sigma^2 < \infty$. Thus, by the dominated convergence theorem, we conclude that $G_d \xrightarrow[]{\mathcal{L}^1} G$, and specifically,
\begin{align}   
        \mathbb{E} \left[ \frac{\sigma_k^2}{2 a_d}\frac{\mathbb{E} [\mu(\hat{\s{R}}) \vert \s{N}[k]]}{\abs{\s{X}[k]}^2} \right] \rightarrow \mathbb{E} \left[  \abs{\s{N}[k]}^2 \right]. \label{eqn:E40}
    \end{align}
By combining \eqref{eqn:E40} with \eqref{eqn:E25}, we get,
\begin{align}   
        \frac{\sigma_k}{a_d \cdot \abs{\s{X}[k]}}{\mathbb{E} \pp{\left.{f_1(\hat{\s{R}})} \right| \s{N}[k]}} \xrightarrow[]{\mathcal{L}^1} \abs{\s{N}[k]}^2, 
    \end{align}
or equivalently,
    \begin{align}
    \mathbb{E} \pp{\frac{\sigma_k}{a_d \cdot \abs{\s{X}[k]}}  \mathbb{E} \pp{\left.{f_1(\hat{\s{R}})} \right| \s{N}[k]}} \to \mathbb{E} \pp{\abs{\s{N}[k]}^2}, \label{eqn:E41}
\end{align}
which proves \eqref{eqn:E26}. 

By the law of total expectation, we then obtain
    \begin{align}   
        \frac{\sigma_k}{a_d \cdot  {\abs{\s{X}[k]}}} \mathbb{E}\pp{\abs{\s{N}[k]}\cos\left(\frac{2\pi k}{d} \hat{\s{R}} + \phi_\s{N}[k] - \phi_\s{X}[k] \right)}  \to \mathbb{E} \left[ {\abs{\s{N}[k]}^2} \right] = \sigma^2,        \label{eqn:asymptoticConvergenceOfMagnitude}
    \end{align}
as $d\to\infty$. By Lemma \ref{lemma:E1}, we know that $\sigma_k^2 \to \sigma^2$ as $d \to \infty$, so from \eqref{eqn:asymptoticConvergenceOfMagnitude}, we conclude that 
    \begin{align}   
        \frac{1}{a_d \cdot  {\abs{\s{X}[k]}}} \mathbb{E}\pp{\abs{\s{N}[k]}\cos\left(\frac{2\pi k}{d} \hat{\s{R}} + \phi_\s{N}[k] - \phi_\s{X}[k] \right)}  \to \sigma,
        \label{eqn:E46}
    \end{align}
which proves \eqref{eqn:targetFunctionAsymptotic}. 

\paragraph{Proof of \eqref{eqn:E27} and \eqref{eqn:targetFunctionAsymptotic2}.} The numerator in \eqref{eqn:constantCk} converges to,
    \begin{align}   
        \mathbb{E}{\left[ \abs{\s{N}[k]} \sin(\phi_{e}[k]) \right]}^2 = \frac{1}{2} \pp{\mathbb{E}\abs{\s{N}[k]}^2 - \mathbb{E}[\abs{\s{N}[k]}^2\cos(2\phi_{e}[k])]}  \rightarrow \frac{1}{2}\mathbb{E}\abs{\s{N}[k]}^2,
    \end{align}
as $d\to\infty$, where the last transition is because $\mathbb{E}[\cos(2\phi_{e}[k])\vert\s{N}[k]]  \xrightarrow[]{\s{a.s.}} 0$, as $d \to \infty$. Thus,
    \begin{align}   
        \frac{1}{\sigma^2}\mathbb{E}{\left[ \abs{\s{N}[k]} \sin(\phi_{e}[k]) \right]}^2 \rightarrow \frac{1}{2 \sigma^2}\mathbb{E}\abs{\s{N}[k]}^2 = \frac{1}{2},        \label{eqn:expectedValueOfSinSquare}
    \end{align}
which concludes the proof.
\end{proof}

\subsection{Proof of Theorem~\ref{thm:2}}\label{thm:proofs2}

To begin, let us summarize the notation and results from the previous sections as the foundation for the proof. Recall the definition of $\phi_{e}[k]$ in \eqref{eqn:phaseDifferenceTerm}, as well as the definitions of $f_1\p{r}$ and $f_2\p{r}$ in \eqref{eqn:42}-\eqref{eqn:43}, and let $a_d \triangleq \sqrt{2\log d}$. According to Theorem \ref{thm:1}, the convergence of the Fourier phases is given by:
    \begin{align}
       \lim_{M\to\infty} \frac{\mathbb{E} |\phi_{\s{\hat{X}}}[k] - \phi_{\s{X}}[k]|^2}{1/M} = C_k.
        \label{eqn:F1}
    \end{align}
where $C_k$ is given by \eqref{eqn:constantCk}:
\begin{align}
    C_k = \frac{{\mathbb{E}\pp{\left[ \abs{\s{N}[k]} \sin(\phi_{e}[k]) \right]^2}}}{{\p{\mathbb{E}{\left[ \abs{\s{N}[k]} \cos(\phi_{e}[k]) \right]}}}^2}. \label{eqn:F1_2}
\end{align}
The constant $C_k$ can be rewritten as: 
\begin{align}   
    a^2_{d}\cdot C_k = a_d^2 \cdot \frac{{\mathbb{E}\pp{\left[ \abs{\s{N}[k]} \sin(\phi_{e}[k]) \right]^2}}}{{\p{\mathbb{E}{\left[ \abs{\s{N}[k]} \cos(\phi_{e}[k]) \right]}}}^2} 
    = a_d^2 \cdot \frac{\mathbb{E}{[{f_2(\hat{\s{R}})}]}}{\p{\mathbb{E}{[{f_1(\hat{\s{R}})}]}}^2},
    \label{eqn:CkInF1F2representation}
\end{align}
where $f_1\p{r}$ and $f_2\p{r}$ are defined in \eqref{eqn:42}-\eqref{eqn:43}.

The signal $x$, satisfying the conditions of Theorem~\ref{thm:2}, also satisfies the assumptions of Proposition~\ref{prop:high-dimentional-limits}. By Proposition \ref{prop:high-dimentional-limits}, we have:
\begin{align}
        \lim_{d\to\infty} \ \frac{1}{ a_d  \cdot \abs{\s{X}[k]}}\mathbb{E} [{f_1(\hat{\s{R}})}] = \sigma, \label{eqn:F2}
    \end{align}
and,
    \begin{align}
        \lim_{d\to\infty} \mathbb{E} [{f_2(\hat{\s{R}})}] = \frac{\sigma^2}{2}. \label{eqn:F3}
    \end{align}
Now, we are ready to prove the results of the Theorem.

\paragraph{Convergence of the Fourier phases.} 
From \eqref{eqn:CkInF1F2representation}--\eqref{eqn:F3}, we obtain: 
    \begin{align}   
        \lim_{d\to\infty} a^2_{d} \cdot \abs{\s{X}[k]}^2 \cdot C_k  = \frac{1}{2}.\label{eqn:F6}
    \end{align}
Combining \eqref{eqn:F1}, \eqref{eqn:F6}, results,
 \begin{align}
     \lim_{d\to\infty} \lim_{M\to\infty} & \frac{\mathbb{E} |\phi_{\s{\hat{X}}}[k] - \phi_{\s{X}}[k]|^2}{1/M} \frac{1}{1/(4 \log \p{d} \abs{\s{X}[k]}^2)} = \lim_{d \to \infty} \frac{C_k}{1/(4 \log \p{d} \abs{\s{X}[k]}^2)} \label{eqn:F8}
    \\ & = \lim_{d \to \infty} \frac{a_d^2 \cdot \abs{\s{X}[k]}^2 \cdot C_k}{1/2} = 1, \label{eqn:F9}
\end{align}
where \eqref{eqn:F8} follows from \eqref{eqn:F1}, and \eqref{eqn:F9} follows from \eqref{eqn:F6}, proving \eqref{eqn:phaseConvergeneRateForAsymptoticD}.

\paragraph{Convergence of the Fourier magnitudes.} Finally, we prove \eqref{eqn:magnitudeConvergenceAsymptoticMandD}. By Theorem \ref{thm:1} and \eqref{eqn:magnitudeConvergenceAsymptoticM}, we have:
    \begin{align}
        |\s{\hat{X}}[k]| \xrightarrow[]{\s{a.s.}} 
        \mathbb{E} \left[ \abs{\s{N}[k]} \cos\left(\frac{2\pi k}{d}\hat{\s{R}} + \phi_{\s{N}}[k] - \phi_{\s{X}}[k]\right) \right] = \mathbb{E}{[{f_1(\hat{\s{R}})}]}. \label{eqn:F10}
    \end{align}
Combining \eqref{eqn:F2}, \eqref{eqn:F10} yields,
    \begin{align}   
        \frac{1}{ a_d \sigma } \frac{|\s{\hat{X}}[k]|}{\abs{\s{X}[k]}} \xrightarrow[]{\s{a.s.}} \frac{1}{a_d \sigma } \frac{\mathbb{E} [{f_1(\hat{\s{R}})}]}{\abs{\s{X}[k]}} \rightarrow 1,
    \end{align}
as $M,d\to\infty$, where the second passage follows from \eqref{eqn:targetFunctionAsymptotic}. As $a_d = \sqrt{2 \log \p{d}}$, this completes the proof of the Theorem.

\section{Proof of Proposition \ref{prop:positiveCorrelation}}\label{sec:proofOfPositiveCorrelation}

Before proving Proposition~\ref{prop:positiveCorrelation}, we first establish the following auxiliary lemma.

\begin{lem}\label{lem:GG1}
    Let $\s{A} = (A_0, A_1, \ldots, A_{d-1})$ be a $d$-dimensional random vector with $\mathbb{E}[\s{A}] = 0$. Then,
    \begin{align}
        \mathbb{E} \left[\max \{A_0, A_1, \ldots, A_{d-1}\} \right] \geq \max_{0 \leq r_1, r_2 \leq d-1} \frac{1}{2} \mathbb{E} \left[|A_{r_1} - A_{r_2}|\right].
    \end{align}
\end{lem}

\begin{proof}[Proof of Lemma~\ref{lem:GG1}]
    For any two real numbers $x$ and $y$, we have:
    \begin{align}
        \max(x,y) = \frac{1}{2}(x + y + |x - y|).
    \end{align}
    Applying this to any pair $A_{r_1}, A_{r_2}$ yields:
    \begin{align}
        \mathbb{E}[\max\{A_{r_1}, A_{r_2}\}] &= \frac{1}{2} \mathbb{E}[A_{r_1} + A_{r_2} + |A_{r_1} - A_{r_2}|] \\
        &= \frac{1}{2} \mathbb{E}[|A_{r_1} - A_{r_2}|], \label{eqn:expectedValueOfMaximum}
    \end{align}
    where we used the assumption that $\mathbb{E}[A_r] = 0$ for all $r$.

    By the convexity of the max function, it holds that:
    \begin{align}
        \mathbb{E}[\max \{A_0, A_1, \ldots, A_{d-1}\}] \geq \mathbb{E}[\max \{A_{r_1}, A_{r_2}\}], \label{eqn:app_GG5}
    \end{align}
    for every $r_1, r_2 \in \{0, 1, \ldots, d-1\}$. Combining \eqref{eqn:expectedValueOfMaximum} and \eqref{eqn:app_GG5}, we conclude:
    \begin{align}
        \mathbb{E}[\max \{A_0, A_1, \ldots, A_{d-1}\}] \geq \max_{0 \leq r_1, r_2 \leq d-1} \frac{1}{2} \mathbb{E}[|A_{r_1} - A_{r_2}|],
    \end{align}
    completing the proof.
\end{proof}

Let $n_0, n_1, \ldots, n_{M-1}$ be an i.i.d. sequence of zero-mean random vectors with covariance $\mathbb{E}[n_i n_i^\top] = \Sigma$, which by assumption $\Sigma$ is positive-definite. Recall the definition of the EfN estimator in \eqref{eqn:efnEstimatorRealSpace}:
\begin{align}
    \hat{x} \triangleq \frac{1}{M} \sum_{i=0}^{M-1} \mathcal{T}_{-\s{\hat{R}}_i} n_i,
\end{align}
where the estimated shift $\s{\hat{R}}_i$ is given by:
\begin{align}
    \s{\hat{R}}_i \triangleq \argmax_{0 \leq \ell \leq d-1} \langle n_i, \mathcal{T}_\ell x \rangle. \label{eqn:app_GG_2}
\end{align}
Using linearity of the inner product:
\begin{align}
    \langle \hat{x}, x \rangle 
    &= \left\langle \frac{1}{M} \sum_{i=0}^{M-1} \mathcal{T}_{-\s{\hat{R}}_i} n_i, x \right\rangle 
    = \frac{1}{M} \sum_{i=0}^{M-1} \langle \mathcal{T}_{-\s{\hat{R}}_i} n_i, x \rangle.
\end{align}

By SLLN, as $M \to \infty$, we have almost surely:
\begin{align}
    \frac{1}{M} \sum_{i=0}^{M-1} \langle \mathcal{T}_{-\s{\hat{R}}_i} n_i, x \rangle \xrightarrow[]{\text{a.s.}} \mathbb{E}[\langle \mathcal{T}_{-\s{\hat{R}}_1} n_1, x \rangle]. \label{eqn:app_GG4}
\end{align}
Define for $r \in \{0, 1, \ldots, d-1\}$ the random variables:
\begin{align}
    A_r \triangleq \langle n_1, \mathcal{T}_r x \rangle.
\end{align}
Then, the right-hand side of \eqref{eqn:app_GG4} becomes:
\begin{align}
    \mathbb{E}[\langle \mathcal{T}_{-\s{\hat{R}}_1} n_1, x \rangle] = \mathbb{E}[\max \{A_0, A_1, \ldots, A_{d-1}\}].
\end{align}

Applying Lemma~\ref{lem:GG1}, we get:
\begin{align}
    \mathbb{E}[\max \{A_0, A_1, \ldots, A_{d-1}\}] 
    &\geq \max_{0 \leq r_1, r_2 \leq d-1} \frac{1}{2} \mathbb{E}[|A_{r_1} - A_{r_2}|] \\
    &= \max_{0 \leq r_1, r_2 \leq d-1} \frac{1}{2} \mathbb{E}[|\langle n_1, \mathcal{T}_{r_1} x - \mathcal{T}_{r_2} x \rangle|]. \label{eqn:lowerBoundByInnerProd}
\end{align}

To complete the proof, we show that the lower bound in \eqref{eqn:lowerBoundByInnerProd} is strictly positive. Since $x \in \mathbb{R}^d$ is nonzero with non-vanishing Fourier components $\s{X}[k] \neq 0$, for every $1 \leq k \leq d-1$, and $\mathcal{T}_r$ is a cyclic shift operator, the set $\{\mathcal{T}_r x : 0 \leq r < d\}$ contains at least $d-1$ distinct vectors. Thus, there exist $r_1, r_2 \in \{0, \ldots, d-1\}$ such that
\begin{align}
    v \triangleq \mathcal{T}_{r_1} x - \mathcal{T}_{r_2} x \neq 0.
\end{align}
Then the inner product $\langle n_1, v \rangle$ is a real-valued random variable with
\begin{align}
    \mathrm{Var}(\langle n_1, v \rangle) = \mathbb{E}[\langle n_1, v \rangle^2] = v^\top \Sigma v > 0,    
\end{align}
because $v \neq 0$ and $\Sigma$ is positive definite. Hence, $\langle n_1, v \rangle$ is not almost surely zero, and
\begin{align}
    \mathbb{E}[|\langle n_1, v \rangle|] > 0.
\end{align}
This implies that
\begin{align}
    \max_{0 \leq r_1, r_2 < d} \frac{1}{2} \mathbb{E}[|\langle n_1, \mathcal{T}_{r_1} x - \mathcal{T}_{r_2} x \rangle|] > 0,
\end{align}
and consequently,
\begin{align}
    \lim_{M \to \infty} \langle \hat{x}, x \rangle = \mathbb{E}[\langle \mathcal{T}_{-\hat{\s{R}}_1} n_1, x \rangle] > 0,
\end{align}
almost surely. This completes the proof.

\section{Proof of Theorem \ref{thm:highDimentionalNoiseExtention}: High-dimensional i.i.d. noise} \label{sec:proofOfHighDimetnionalNoiseExtention}

In this section, we prove Theorem~\ref{thm:highDimentionalNoiseExtention}. The proof relies on the functional central limit theorem for the discrete Fourier transform \cite{peligrad2010central, cerovecki2017clt, brillinger2001time}, which we review in Appendix~\ref{sec:functionalCLTforDFT}. In Appendices~\ref{sec:notationsForIIDgeneral} and~\ref{sec:convergenceOfRealAndImaginaryParts}, we apply this result to analyze the real and imaginary parts of the \EfN estimator under a general i.i.d. noise model, and compare the outcome to the white Gaussian case. Finally, the proof of Theorem \ref{thm:highDimentionalNoiseExtention} is deduced in Appendix~\ref{sec:proofHighDimentionalSub}.

\subsection{The functional CLT for DFT} \label{sec:functionalCLTforDFT}
We begin by presenting a functional central limit theorem (CLT) for the DFT, which establishes that the DFT of an i.i.d. real-valued sequence converges in distribution as the dimension $d \to \infty$. This result has been studied in the literature; see, for example, \cite{peligrad2010central, cerovecki2017clt, brillinger2001time}.
To formalize this, we state the following functional CLT for DFTs of i.i.d. sequences.

\begin{thm}[Functional CLT for the DFT]\label{thm:functionalCLT}
Let $\{z_n\}_{n \in \mathbb{N}}$ be a sequence of i.i.d. real-valued random variables with zero mean $\mathbb{E}[z_0] = 0$ and finite variance $\mathbb{E}[z_0^2] = \sigma^2 < \infty$.
For each integer $d \geq 1$, define the DFT of the finite segment $\{z_0, \ldots, z_{d-1}\}$ as
\begin{align}
    \s{Z}^{(d)}[k] \triangleq \frac{1}{\sqrt{d}} \sum_{\ell=0}^{d-1} z_\ell e^{-2\pi j k \ell / d}, \quad 0 \leq k < d.
\end{align}
Extend $\s{Z}^{(d)}$ to an infinite sequence by zero-padding outside the index set $\{0, \ldots, d-1\}$:
\begin{align}
    \s{Z}^{(d)} = \big(\s{Z}^{(d)}[0], \ldots, \s{Z}^{(d)}[d-1], 0, 0, \ldots \big) \in \mathbb{C}^\mathbb{N}.
\end{align}
Then, for any fixed finite index set $\{k_1, k_2, \ldots, k_m\} \subset \mathbb{N}$, the finite-dimensional vectors $\big(\s{Z}^{(d)}[k_1], \ldots, \s{Z}^{(d)}[k_m]\big)$ converge in distribution, as $d \to \infty$,
\begin{align}
    \big(\s{Z}^{(d)}[k_1], \ldots, \s{Z}^{(d)}[k_m]\big) \xrightarrow[d \to \infty]{\mathcal{D}} \big(\s{W}_{k_1}, \ldots, \s{W}_{k_m}\big),
\end{align}
where $\s{W} = (\s{W}_k)_{k \in \mathbb{N}}$ is a sequence of i.i.d. circularly symmetric complex Gaussian random variables with $\s{W}_k \sim \mathcal{CN}(0, \sigma^2)$.
\end{thm}

This result can be obtained from the  multivariate Lindeberg--Feller CLT. This convergence holds jointly over any finite collection of indices, meaning that for every finite subset $I \subset \mathbb{N}$, the finite-dimensional vector $(\s{Z}^{(d)}_k)_{k \in I}$ converges in distribution to $(\s{W}_k)_{k \in I}$, where the $\s{W}_k$ are i.i.d. circularly symmetric complex Gaussian random variables. The collection of these finite-dimensional distributions is consistent and satisfies the compatibility conditions of Kolmogorov’s extension theorem, thereby uniquely determining a probability law on the infinite product space $\mathbb{C}^\mathbb{N}$. In this sense, the convergence of $\s{Z}^{(d)}$ to $\s{W}$ is fully characterized by convergence of finite-dimensional distributions.

\subsection{Notations} \label{sec:notationsForIIDgeneral}

\paragraph{General i.i.d noise.}
Let $z_0, z_1, \ldots$ be a sequence of i.i.d random variables with $\mathbb{E}[z_0] = 0$, and $\mathbb{E}[z_0^2] = \sigma^2 < \infty$, and $\mathbb{E}\pp{z_0^4} < \infty$. Define the (zero-padded) DFT transform of the finite segment ${z}^{(d)} = (z_0, z_1, \ldots, z_{d-1})$ as
\begin{align}
    \s{Z}^{(d)}[k] \triangleq 
    \begin{cases}
    \displaystyle \frac{1}{\sqrt{d}} \sum_{\ell=0}^{d-1} z_\ell e^{-2\pi j k \ell / d}, & 0 \leq k < d, \\
    0, & \text{otherwise}.
    \end{cases} \label{eqn:app_G1}
\end{align}
Let $x = (x_0, \ldots, x_{d-1})$ be the deterministic template signal, and denote its DFT by $\s{X}[k]$. Define the maximal correlation shift between $x$ and $z^{(d)}$ in the Fourier domain as
\begin{align}
    \hat{\s{R}}_{\s{Z}}^{(d)} \triangleq  \argmax_{0 \leq r < d} \sum_{k=0}^{d-1} \abs{\s{X}[k]} \abs{\s{Z}[k]} \cos\left( \frac{2\pi k r}{d} + \phi_{\s{Z}}[k] - \phi_{\s{X}}[k] \right),
    \label{eqn:app_G2}
\end{align}
where $\phi_{\s{Z}}[k]$ and $\phi_{\s{X}}[k]$ are the phases of $\s{Z}^{(d)}[k]$ and $\s{X}[k]$, respectively. Define the phase difference after alignment as
\begin{align}
    \phi_{e,\s{Z}}^{(d)}[k] \triangleq  \frac{2\pi k \hat{\s{R}}_{\s{Z}}^{(d)}}{d} + \phi_{\s{Z}}[k] - \phi_{\s{X}}[k].
    \label{eqn:app_G3}
\end{align}

\paragraph{Gaussian i.i.d noise.}
Let $n_0, n_1, \ldots$ be an i.i.d sequence of \textit{Gaussian random variables} with $n_\ell \sim \mathcal{N}(0, \sigma^2)$. Define the DFT of the segment $n^{(d)} = (n_0, \ldots, n_{d-1})$ as
\begin{align}
    \s{N}^{(d)}[k] \triangleq  
    \begin{cases}
    \displaystyle \frac{1}{\sqrt{d}} \sum_{\ell=0}^{d-1} n_\ell e^{-2\pi j k \ell / d}, & 0 \leq k < d, \\
    0, & \text{otherwise}.
    \end{cases}
\end{align}
Define the corresponding maximal correlation shift:
\begin{align}
    \hat{\s{R}}_{\s{N}}^{(d)} \triangleq  \argmax_{0 \leq r < d} \sum_{k=0}^{d-1} \abs{\s{X}[k]} \abs{\s{N}[k]} \cos\left( \frac{2\pi k r}{d} + \phi_{\s{N}}[k] - \phi_{\s{X}}[k] \right),
    \label{eqn:app_G5}
\end{align}
and define the aligned phase difference as
\begin{align}
    \phi_{e,\s{N}}^{(d)}[k] \triangleq  \frac{2\pi k \hat{\s{R}}_{\s{N}}^{(d)}}{d} + \phi_{\s{N}}[k] - \phi_{\s{X}}[k].
    \label{eqn:app_G6}
\end{align}

\subsection{Convergence of the real and imaginary parts of the EfN estimator} \label{sec:convergenceOfRealAndImaginaryParts}
We now present an auxiliary result that relates the real and imaginary parts of the EfN estimator under both Gaussian i.i.d. and general i.i.d. noise models. This result is a consequence of the functional central limit theorem for the DFT (Theorem~\ref{thm:functionalCLT}).

\begin{proposition} \label{prop:G1}
Let $z_0, z_1, \ldots$ be a sequence of i.i.d. real-valued random variables with zero mean, finite variance, $\mathbb{E}[z_0^2] = \sigma^2 < \infty$, and finite fourth moment.
Let $n_0, n_1, \ldots$ be an i.i.d. sequence of Gaussian random variables with $n_\ell \sim \mathcal{N}(0, \sigma^2)$.  
Let $\s{Z}^{(d)}$ and $\s{N}^{(d)}$ denote the DFTs of the sequences $\{z_i\}_{i=0}^{d-1}$ and $\{n_i\}_{i=0}^{d-1}$, respectively. Let $\phi_{e,\s{Z}}^{(d)}[k]$ and $\phi_{e,\s{N}}^{(d)}[k]$ denote the aligned phase differences defined in~\eqref{eqn:app_G3} and~\eqref{eqn:app_G6}, respectively.
Then, for each fixed frequency index $k \in \mathbb{N}$, 
\begin{align}
    \lim_{d \to \infty} \left( 
    \mathbb{E}\left[ \left| \s{Z}^{(d)}[k] \right| \sin\left( \phi_{e,\s{Z}}^{(d)}[k] \right) \right] 
    - \mathbb{E}\left[ \left| \s{N}^{(d)}[k] \right| \sin\left( \phi_{e,\s{N}}^{(d)}[k] \right) \right] 
    \right) = 0, \label{eqn:app_G7a}
\end{align}
\begin{align}
    \lim_{d \to \infty} \left( 
    \mathbb{E}\left[ \left| \s{Z}^{(d)}[k] \right| \cos\left( \phi_{e,\s{Z}}^{(d)}[k] \right) \right] 
    - \mathbb{E}\left[ \left| \s{N}^{(d)}[k] \right| \cos\left( \phi_{e,\s{N}}^{(d)}[k] \right) \right] 
    \right) = 0, \label{eqn:app_G7b}
\end{align}
and
\begin{align}
    \lim_{d \to \infty} \left( 
    \s{Var}\left[ \left| \s{Z}^{(d)}[k] \right| \sin\left( \phi_{e,\s{Z}}^{(d)}[k] \right) \right] 
    - \s{Var}\left[ \left| \s{N}^{(d)}[k] \right| \sin\left( \phi_{e,\s{N}}^{(d)}[k] \right) \right] 
    \right) = 0, \label{eqn:app_G9a}
\end{align}
\begin{align}
    \lim_{d \to \infty} \left( 
    \s{Var}\left[ \left| \s{Z}^{(d)}[k] \right| \cos\left( \phi_{e,\s{Z}}^{(d)}[k] \right) \right] 
    - \s{Var}\left[ \left| \s{N}^{(d)}[k] \right| \cos\left( \phi_{e,\s{N}}^{(d)}[k] \right) \right] 
    \right) = 0. \label{eqn:app_G9b}
\end{align}
\end{proposition}

\begin{proof}[Proof of Proposition~\ref{prop:G1}]

We begin by applying Theorem~\ref{thm:functionalCLT} to the sequence $\{z_i\}$, which satisfies its assumptions. This in turn implies that,
\begin{align}
    \s{Z}^{(d)} \xrightarrow{\mathcal{D}} \s{W},
\end{align}
where $\s{W}_k \sim \mathcal{CN}(0, \sigma^2)$ are i.i.d. complex Gaussian variables.
Next, consider the Gaussian noise sequence $n_i \sim \mathcal{N}(0, \sigma^2)$. Its discrete Fourier transform satisfies
\begin{align}
    \s{N}^{(d)} \overset{\mathcal{D}}{=} (\s{W}[0], \ldots, \s{W}[d-1]),
\end{align}
for every $d$, since the DFT of an i.i.d. Gaussian sequence remains i.i.d. in distribution with the same complex Gaussian law. Since $\mathbb{E}[z_i^2] = \sigma^2 < \infty$, we have
\begin{align}
    \mathbb{E}[|\s{Z}^{(d)}[k]|^2] < \infty,
\end{align}
for each $d$ and $k$. It follows that the sequences of random variables,
\begin{align}
    \left\{ |\s{Z}^{(d)}[k]| \sin(\phi_{e,\s{Z}}^{(d)}[k]) \right\}_{d \in \mathbb{N}^+}, \quad
    \left\{ |\s{Z}^{(d)}[k]| \cos(\phi_{e,\s{Z}}^{(d)}[k]) \right\}_{d \in \mathbb{N}^+},
\end{align}
are uniformly integrable. By Vitali’s convergence theorem, we may pass the limit inside the expectation,
\begin{align}
    \lim_{d \to \infty} \biggl( 
    \mathbb{E}\left[ |\s{Z}^{(d)}[k]| \sin(\phi_{e,\s{Z}}^{(d)}[k]) \right] 
    - \mathbb{E}\left[ |\s{W}[k]| \sin(\phi_{e,\s{W}}[k]) \right]
    \biggr) = 0. \label{eqn:app_HH16}
\end{align}
Furthermore, since $\s{N}^{(d)} \overset{\mathcal{D}}{=} \s{W}$ for all $d$, and the aligned phase differences $\phi_{e,\s{N}}^{(d)}[k]$ and $\phi_{e,\s{W}}[k]$ are identically distributed, we conclude that
\begin{align}
    \mathbb{E}\left[ |\s{N}^{(d)}[k]| \sin(\phi_{e,\s{N}}^{(d)}[k]) \right] 
    = \mathbb{E}\left[ |\s{W}[k]| \sin(\phi_{e,\s{W}}[k]) \right], \quad \forall d \in \mathbb{N}^+. \label{eqn:app_HH17}
\end{align}
Combining \eqref{eqn:app_HH16} and \eqref{eqn:app_HH17} yields
\begin{align}
    \lim_{d \to \infty} \biggl( 
    \mathbb{E}\left[ |\s{Z}^{(d)}[k]| \sin(\phi_{e,\s{Z}}^{(d)}[k]) \right] 
    - \mathbb{E}\left[ |\s{N}^{(d)}[k]| \sin(\phi_{e,\s{N}}^{(d)}[k]) \right]
    \biggr) = 0,
\end{align}
which establishes \eqref{eqn:app_G7a}. An identical argument with sine replaced by cosine proves \eqref{eqn:app_G7b}.

For the variance convergence, by assumption $\mathbb{E}[z_i^4] < \infty$, and so,
\begin{align}
    \mathbb{E}[|\s{Z}^{(d)}[k]|^4] < \infty,
\end{align}
for all $d$. Therefore, the sequences
\begin{align}
    \left\{ |\s{Z}^{(d)}[k]|^2 \sin^2(\phi_{e,\s{Z}}^{(d)}[k]) \right\}_{d \in \mathbb{N}^+}, \quad
    \left\{ |\s{Z}^{(d)}[k]|^2 \cos^2(\phi_{e,\s{Z}}^{(d)}[k]) \right\}_{d \in \mathbb{N}^+},
\end{align}
are uniformly integrable. Applying Vitali's theorem once again, we obtain,
\begin{align}
    \lim_{d \to \infty} \biggl(
    \s{Var}\left[ |\s{Z}^{(d)}[k]| \sin(\phi_{e,\s{Z}}^{(d)}[k]) \right] 
    - \s{Var}\left[ |\s{W}[k]| \sin(\phi_{e,\s{W}}[k]) \right]
    \biggr) = 0.\label{eqn:app_HH21}
\end{align}
As before, since $\s{N}^{(d)} \overset{\mathcal{D}}{=} \s{W}$ and their aligned phase differences are identically distributed, we conclude that,
\begin{align}
    \s{Var}\left[ |\s{N}^{(d)}[k]| \sin(\phi_{e,\s{N}}^{(d)}[k]) \right] 
    = \s{Var}\left[ |\s{W}[k]| \sin(\phi_{e,\s{W}}[k]) \right], \quad \forall d \in \mathbb{N}^+. \label{eqn:app_HH22}
\end{align}
Combining \eqref{eqn:app_HH21} and \eqref{eqn:app_HH22} yields,
\begin{align}
    \lim_{d \to \infty} \biggl( 
    \s{Var}\left[ |\s{Z}^{(d)}[k]| \sin(\phi_{e,\s{Z}}^{(d)}[k]) \right] 
    - \s{Var}\left[ |\s{N}^{(d)}[k]| \sin(\phi_{e,\s{N}}^{(d)}[k]) \right]
    \biggr) = 0,
\end{align}
establishing \eqref{eqn:app_G9a}. Repeating the same steps with cosine in place of sine proves \eqref{eqn:app_G9b}.

\end{proof}

\subsection{Proof of Theorem \ref{thm:highDimentionalNoiseExtention}} \label{sec:proofHighDimentionalSub}

We are now ready to prove Theorem \ref{thm:highDimentionalNoiseExtention}. As before, let $\{z_i\}_{i=0}^{M-1}$ be i.i.d. observations, where each $z_i \in \mathbb{R}^d$ has i.i.d. entries with zero mean, finite variance, and bounded fourth moment
$\mathbb{E}[(z_i[\ell])^4] < \infty$, for all $\ell \in \{0, 1, \ldots, d-1\}$.

Similarly to \eqref{eqn:strongLLN}, we analyze the EfN estimator under the noise statistics of $\{z_i\}_{i=0}^{M-1}$. Applying the SLLN, as $M \to \infty$, we obtain:
\begin{align}
   \s{\hat{X}}[k] e^{-j\phi_{\s{X}}[k]}  = & \frac{1}{M} { \sum_{i=0}^{M-1}  \abs{\s{Z}_i[k]} e^{j\phi_{e,\s{Z}_i}[k]}} 
   \\ & \xrightarrow[]{\s{a.s.}} 
    \mathbb{E} \left[ \abs{\s{Z}_1[k]} \cos\p{\phi_{e,\s{Z}_1}[k]} \right] + j \mathbb{E} \left[ \abs{\s{Z}_1[k]} \sin\p{\phi_{e,\s{Z}_1}[k]} \right], \label{eqn:app_HH25}
\end{align}
where the phase difference term is given by
\begin{align}
    \phi_{e,\s{Z}_i}^{(d)}[k] \triangleq  \frac{2\pi k \hat{\s{R}}_{\s{Z}_i}^{(d)}}{d} + \phi_{\s{Z}_i}[k] - \phi_{\s{X}}[k].
    \label{eqn:app_HH27}
\end{align}
and the corresponding maximal correlation shift $\hat{\s{R}}_{\s{Z}_i}^{(d)}$ is defined by
\begin{align}
    \hat{\s{R}}_{\s{Z}_i}^{(d)} \triangleq  \argmax_{0 \leq r < d} \sum_{k=0}^{d-1} \abs{\s{X}[k]} \abs{\s{Z}_i[k]} \cos\left( \frac{2\pi k r}{d} + \phi_{\s{Z}_i}[k] - \phi_{\s{X}}[k] \right).
    \label{eqn:app_HH26}
\end{align}

Next, we invoke Proposition \ref{prop:G1}, whose assumptions are satisfied in this setting. Let $\s{N}_1^{(d)}$ denotes a noise vector with i.i.d. Gaussian entries that match the first and second moments of the entries of $\s{Z}_1^{(d)}$ (as defined in Proposition \ref{prop:G1}). We note that the results of Theorems \ref{thm:1} and \ref{thm:2} apply to the case of i.i.d. Gaussian entries $\s{N}_1^{(d)}$.  Then, by Proposition \ref{prop:G1}, for each fixed frequency index $k \in \mathbb{N}$, the following convergence results hold,
\begin{align}
    \lim_{d \to \infty} \left( 
    \mathbb{E}\left[ \left| \s{Z}_1^{(d)}[k] \right| \sin\left( \phi_{e,\s{Z}_1}^{(d)}[k] \right) \right] 
    - \mathbb{E}\left[ \left| \s{N}_1^{(d)}[k] \right| \sin\left( \phi_{e,\s{N}_1}^{(d)}[k] \right) \right] 
    \right) = 0, \label{eqn:app_HH28}
\end{align}
\begin{align}
    \lim_{d \to \infty} \left( 
    \mathbb{E}\left[ \left| \s{Z}_1^{(d)}[k] \right| \cos\left( \phi_{e,\s{Z}_1}^{(d)}[k] \right) \right] 
    - \mathbb{E}\left[ \left| \s{N}_1^{(d)}[k] \right| \cos\left( \phi_{e,\s{N}_1}^{(d)}[k] \right) \right] 
    \right) = 0. \label{eqn:app_HH29}
\end{align}
Moreover, the variances of the corresponding expressions also converge,
\begin{align}
    \lim_{d \to \infty} \left( 
    \s{Var}\left[ \left| \s{Z}_1^{(d)}[k] \right| \sin\left( \phi_{e,\s{Z}_1}^{(d)}[k] \right) \right] 
    - \s{Var}\left[ \left| \s{N}_1^{(d)}[k] \right| \sin\left( \phi_{e,\s{N}_1}^{(d)}[k] \right) \right] 
    \right) = 0, \label{eqn:app_HH30}
\end{align}
\begin{align}
    \lim_{d \to \infty} \left( 
    \s{Var}\left[ \left| \s{Z}_1^{(d)}[k] \right| \cos\left( \phi_{e,\s{Z}_1}^{(d)}[k] \right) \right] 
    - \s{Var}\left[ \left| \s{N}^{(d)}[k] \right| \cos\left( \phi_{e,\s{N}_1}^{(d)}[k] \right) \right] 
    \right) = 0. \label{eqn:app_HH31}
\end{align}
By Theorems \ref{thm:1} and \ref{thm:2}, the convergence behavior of the estimator is governed by the variances in \eqref{eqn:app_HH30} and \eqref{eqn:app_HH31}. Therefore, the asymptotic behavior of the estimator for general i.i.d. noise $\{z_i\}$ matches that of the Gaussian i.i.d. case $\{n_i\}$. In particular for \eqref{eqn:FirstRes2}, we have,
\begin{align}
    \phi_{\s{\hat{X}}}[k] - \phi_{\s{X}}[k] & = \arctan \left( \frac{\sum_{i=0}^{M-1}\abs{\s{Z}_i^{(d)}[k]} \sin \left( \phi_{e,\s{Z}_i}[k] \right)}{\sum_{i=0}^{M-1}\abs{\s{Z}_i^{(d)}[k]} \cos \left( \phi_{e,\s{Z}_i}[k] \right)} \right )
    \\ & \xrightarrow[]{\text{a.s.}} \arctan \left( \frac{\mathbb{E} \pp{\abs{\s{Z}_1^{(d)}[k]} \sin \left( \phi_{e,\s{Z}_1}[k] \right)}}{\mathbb{E} \pp{\abs{\s{Z}_1^{(d)}[k]} \cos \left( \phi_{e,\s{Z}_1}[k] \right)}} \right ), \label{eqn:app_H33}
\end{align}
where \eqref{eqn:app_H33} follows from the SLLN as $M \to \infty$. Applying \eqref{eqn:app_HH28} and \eqref{eqn:app_HH29} into \eqref{eqn:app_H33}, yields,
\begin{align}
    \lim_{d \to \infty } \lim_{M \to \infty} \phi_{\s{\hat{X}}}[k] - \phi_{\s{X}}[k] = \lim_{d \to \infty} \arctan \left( \frac{\mathbb{E} \pp{\abs{\s{N}_1^{(d)}[k]} \sin \left( \phi_{e,\s{N}_1}[k] \right)}}{\mathbb{E} \pp{\abs{\s{N}_1^{(d)}[k]} \cos \left( \phi_{e,\s{N}_1}[k] \right)}} \right ) \label{eqn:app_HH34}.
\end{align}
By \eqref{eqn:lawOfLargeNumberPhaseRatio}, the r.h.s. of \eqref{eqn:app_HH34} vanishes for every $d$, and therefore,
\begin{align}
    \lim_{d \to \infty } \lim_{M \to \infty} \phi_{\s{\hat{X}}}[k] - \phi_{\s{X}}[k]  = 0,
\end{align}
almost surely, which proves \eqref{eqn:FirstRes2}. Similarly, for \eqref{eqn:asymptoticComnvergenceOfPhasesCirculant}, we have,
\begin{align}
   \lim_{M\to\infty} \frac{\mathbb{E} |\phi_{\s{\hat{X}}}[k] - \phi_{\s{X}}[k]|^2}{1/M} = \frac{\mathbb{E}\p{\left[ \abs{\s{Z}_1[k]} \sin(\phi_{e,\s{Z}_1}[k]) \right]^2}}{\p{\mathbb{E}{\left[ \abs{\s{Z}_1[k]} \cos(\phi_{e,\s{Z}_1}[k]) \right]}}^2}.
    \label{eqn:app_HH36}
\end{align}
which is similar to \eqref{eqn:B19}. Applying \eqref{eqn:app_HH29} and \eqref{eqn:app_HH30} into \eqref{eqn:app_HH36} yields,
\begin{align}
   \lim_{d\to\infty}\lim_{M\to\infty} \frac{\mathbb{E} |\phi_{\s{\hat{X}}}[k] - \phi_{\s{X}}[k]|^2}{1/M} = \lim_{d \to \infty} \frac{\mathbb{E}\p{\left[ \abs{\s{N}_1[k]} \sin(\phi_{e,\s{N}_1}[k]) \right]^2}}{\p{\mathbb{E}{\left[ \abs{\s{N}_1[k]} \cos(\phi_{e,\s{N}_1}[k]) \right]}}^2} = C_k < \infty.
    \label{eqn:app_HH37}
\end{align}
Finally, for \eqref{eqn:asymptoticComnvergenceOfPhases3}, under the assumption that $x$ satisfies Assumption \ref{assump:1}, then by \eqref{eqn:phaseConvergeneRateForAsymptoticD}, the r.h.s. of \eqref{eqn:app_HH37} converges to,
\begin{align}   
    \lim_{d\to\infty} a^2_{d} \cdot \abs{\s{X}[k]}^2 \cdot C_k  = \frac{1}{2},\label{eqn:app_H38}
\end{align}
where $a_d = \sqrt{2 \log (d)}$. Thus, substituting \eqref{eqn:app_H38} into the r.h.s. of \eqref{eqn:app_HH37} yields,
\begin{align}
    \lim_{d\to\infty} \lim_{M\to\infty} \frac{\mathbb{E} \left[|\phi_{\hat{\s{X}}}[k] - \phi_{\s{X}}[k]|^2\right]}{1/(M \log d)} \cdot \frac{1}{1/(4|\s{X}[k]|^2)} = 1, \label{eqn:app_H39}
\end{align}
which proves \eqref{eqn:asymptoticComnvergenceOfPhases3}.

\section{Proof of Proposition \ref{prop:circulantGauusianNoise}: Circulant Gaussian noise} \label{sec:circulantGauusianNoise}
The proof strategy for Proposition~\ref{prop:circulantGauusianNoise} closely follows that of the i.i.d. Gaussian case (Theorem \ref{thm:1}), with appropriate modifications to handle circulant noise. The necessary assumptions and notations are introduced in Appendix~\ref{subsec:prelimiaries-App-G}. Appendix~\ref{sec:convergenceOfEfNCirculant} establishes the asymptotic convergence of the EfN estimator as $M \to \infty$ under circulant Gaussian noise statistics. In Appendix~\ref{sec:conditionsOnFourierIsCirculant}, we show that conditioning the EfN process on a single Fourier noise coefficient results in a cyclo-stationary process with a cosine trend. Appendix~\ref{sec:imaginartPartVanishingCirculant} extends the vanishing imaginary part result from Appendix~\ref{sec:imaginartPartVanishing} to the setting of circulant Gaussian noise. Similarly, Appendix~\ref{sec:realPartGreaterThan0Circulant} extends the result of Appendix~\ref{sec:realPartGreaterThan0}, showing that the real part remains strictly positive in the circulant case. Finally, Appendix~\ref{sec:proofOfcirculantGauusianNoise} combines the results of the preceding sections to complete the proof of Proposition~\ref{prop:circulantGauusianNoise}.

\subsection{Preliminaries}
\label{subsec:prelimiaries-App-G}
Let $\{y_i\}_{i=0}^{M-1} \sim \mathcal{N}(0, \Sigma)$, where $\Sigma \in \mathbb{R}^{d \times d}$ is a real, symmetric, and circulant covariance matrix with strictly positive eigenvalues (i.e., $\Sigma$ is positive-definite). Let $\s{Y}_i = \mathcal{F} \ppp{y_i} \in \mathbb{C}^d$ denote the DFT of $y_i$. The random vector $\s{Y}_i$ satisfies the following properties:

\begin{enumerate}
    \item \textit{Diagonalization by the DFT.} Since $\Sigma$ is circulant, it is diagonalized by the DFT:
    $\Sigma = F^* \Lambda F$, where $F$ is the DFT matrix, $\Lambda = \mathrm{diag}(\lambda_0, \dots, \lambda_{d-1})$ contains the eigenvalues of $\Sigma$, given by the DFT of its first row. As $\Sigma$ positive-definite, all eigenvalues $\lambda_k \in \mathbb{R}$ and $\lambda_k > 0$ for all $k \in \ppp{0,1, \ldots d-1}$.
    \item \textit{Distribution of Fourier coefficients.} The vector $\s{Y}_i$ is complex Gaussian with distribution $\mathcal{CN}(0, \Lambda)$. Its entries are independent (but not identically distributed) complex Gaussian random variables, satisfying,
    $\mathbb{E}[\s{Y}_i[k]] = 0$, and $\mathbb{E}[\s{Y}_i[k] \overline{\s{Y}_i[\ell]}] = \lambda_k \delta_{k,\ell}$, for every $k, \ell \in \ppp{0,1, \ldots d/2}$.
    \item \textit{Fourier phases distribution.} For any $k$ such that $\lambda_k > 0$, the Fourier coefficient $\s{Y}_i[k]$ is a zero-mean, circularly symmetric complex Gaussian random variable. Hence, the phases $\{\phi_{\s{Y}_i}[k]\}_{k=0}^{d/2}$ are i.i.d. and uniformly distributed on $[-\pi, \pi)$ and independent of the magnitude $\{|\s{Y}_i[k]|\}_{k=0}^{d/2}$.
    \item \textit{Conjugate symmetry.} Since $y_i \in \mathbb{R}^d$, the DFT satisfies the Hermitian symmetry:
    \begin{align}
        \s{Y}_i[d - k] = \overline{\s{Y}_i[k]}, \quad \text{for } 1 \leq k \leq d - 1.
    \end{align}
    Thus, only the first $d/2 + 1$ Fourier coefficients are independent; the rest are determined by conjugate symmetry.
\end{enumerate}

\begin{remark}
    To avoid confusion with the i.i.d. Gaussian case, we use the notation $y_i$ and $\s{Y}_i$, rather than $n_i$ and $\s{N}_i$, to denote Gaussian noise with a symmetric circulant covariance matrix.
\end{remark}

\subsection{The convergence of the Einstein from Noise estimator}  \label{sec:convergenceOfEfNCirculant}
Similar to the derivation in Appendix~\ref{sec:convergenceOfEfNestimator}, the EfN estimator in the setting of circulant Gaussian noise, $\{y_i\}_{i=0}^{M-1} \sim \mathcal{N} \p{0, \Sigma}$, can be expressed explicitly as:
\begin{align}
    \hat{\s{X}}[k] &= \frac{1}{M} \sum_{i=0}^{M-1} \abs{\s{Y}_i[k]} \, e^{j\phi_{\s{Y}_i}[k]} \, e^{j\frac{2\pi k}{d} \s{\hat{R}}_i} \\
    &= \frac{e^{j\phi_{\s{X}}[k]}}{M} \sum_{i=0}^{M-1} \abs{\s{Y}_i[k]} \, e^{j\phi_{\s{Y}_i}[k]} \, e^{j\frac{2\pi k}{d} \s{\hat{R}}_i} \, e^{-j\phi_{\s{X}}[k]} \\
    &= \frac{e^{j\phi_{\s{X}}[k]}}{M} \sum_{i=0}^{M-1} \abs{\s{Y}_i[k]} \, e^{j\phi_{e,i}[k]}, \label{eqn:app_H2}
\end{align}
where the shifts $\s{\hat{R}}_i$ are given by
\begin{align}
    \s{\hat{R}}_i &\triangleq \underset{0 \leq r \leq d-1}{\argmax} {\, \langle{y_i}, \mathcal{T}_r x\rangle}
    \\ & = \underset{0\leq r\leq d-1}{\argmax} \ { \langle \mathcal{F}\ppp{y_i},  {\mathcal{F}\ppp{\mathcal{T}_r x} }\rangle}
    \\ & = \underset{0\leq r\leq d-1}{\argmax} \sum_{k=0}^{d-1} {\abs{\s{X}[k]} \abs{\s{Y}_i[k]} \, \cos \left( \frac{2\pi kr}{d} +  \phi_{\s{Y}_i}[k] - \phi_{\s{X}}[k] \right)}.\label{eqn:app_H5}
\end{align}
and the phase difference is defined as,
\begin{align}
    \phi_{e,i}[k] \triangleq \frac{2\pi k \s{\hat{R}}_i}{d} + \phi_{\s{Y}_i}[k] - \phi_{\s{X}}[k]. \label{eqn:app_H6}
\end{align}

To simplify notation, define for each $r \in \{0, 1, \ldots, d-1\}$,
\begin{align}
    \s{S}_{i}[r] \triangleq \sum_{k=0}^{d-1} \abs{\s{X}[k]} \abs{\s{Y}_i[k]} \, \cos \left( \frac{2\pi kr}{d} + \phi_{\s{Y}_i}[k] - \phi_{\s{X}}[k] \right), \label{equ:maxGaussDefCirculant}
\end{align}
so that $\s{\hat{R}}_i = \argmax_{0 \leq r \leq d-1} \s{S}_{i}[r]$.
We note that for any $0 \leq i \leq M-1$, the random vector $\s{S}_i \triangleq (\s{S}_i[0], \s{S}_i[1], \ldots, \s{S}_i[d-1])^\top$ is jointly Gaussian with zero mean and a circulant covariance matrix (as it is a Fourier transform of the convolution between $y_i$ and the template $x$). Hence, $\s{S}_i$ forms a cyclo-stationary process.
Applying the strong law of large numbers (SLLN), as $ M \to \infty $, we have,
\begin{align}
    \s{\hat{X}}[k] \, e^{-j\phi_{\s{X}}[k]} &= \frac{1}{M} \sum_{i=0}^{M-1} \abs{\s{Y}_i[k]} \, e^{j\phi_{e,i}[k]} \\
    &\xrightarrow[]{\text{a.s.}} \mathbb{E} \left[ \abs{\s{Y}_1[k]} \cos \left( \phi_{e,1}[k] \right) \right] 
    + j \, \mathbb{E} \left[ \abs{\s{Y}_1[k]} \sin \left( \phi_{e,1}[k] \right) \right], \label{eqn:app_H7}
\end{align}
where we have used the fact that the sequences of random variables $\{ \abs{\s{Y}_i[k]} \cos(\phi_{e,i}[k]) \}_{i=0}^{M-1}$ and $\{ \abs{\s{Y}_i[k]} \sin(\phi_{e,i}[k])\}_{i=0}^{M-1}$ are i.i.d. with finite means and variances.
Finally, we define for each $k$,
\begin{align}
    \mu_{\s{A},k} &\triangleq \mathbb{E} \left[ \abs{\s{Y}_1[k]} \sin \left( \phi_{e,1}[k] \right) \right], \label{eqn:app_H10} \\
    \mu_{\s{B},k} &\triangleq \mathbb{E} \left[ \abs{\s{Y}_1[k]} \cos \left( \phi_{e,1}[k] \right) \right], \label{eqn:app_H11}
\end{align}
as the asymptotic imaginary and real parts of $\s{\hat{X}}[k] e^{-j\phi_{\s{X}}[k]}$, respectively.

\subsection{Conditioning on the Fourier frequency noise component}  \label{sec:conditionsOnFourierIsCirculant}
We now extend the result of Lemma~\ref{lemma:conditioning} to the case where the noise follows a general Gaussian distribution with a real, symmetric, and circulant covariance matrix. That is, we consider observations $\{y_i\}_{i=0}^{M-1} \sim \mathcal{N}(0, \Sigma)$, where $\Sigma \in \mathbb{R}^{d \times d}$ is circulant and symmetric.
In this setting, we establish the following result.
\begin{lem} \label{lemma:H1}
    Let $\s{S}_i$ be defined as in~\eqref{equ:maxGaussDefCirculant}, and denote $\mathbb{E}\left[|\s{Y}_i[k]|^2\right] = \lambda_k > 0$ for each $k \in \{0, 1, \ldots, d-1\}$. Then, for every $k \in \left\{ 1, 2, \ldots, \tfrac{d}{2}-1, \tfrac{d}{2}+1, \ldots, d-1 \right\}$, the random vector $\s{S}_i$ conditioned on $\s{Y}_i[k]$ is Gaussian:
    \begin{align}
        \s{S}_i\vert\s{Y}_i[k] \sim \mathcal{N} (\s{\mu}_{k,i}, \s{\Sigma}_{k,i}), \label{eqn:conditionalGaussianCirculant}
    \end{align}
    with mean and covariance given by,
    \begin{align}
        \mu_{k,i}[r] &\triangleq \mathbb{E}\left[ \s{S}_i[r]\vert\s{Y}_i[k] \right]= 2\abs{\s{X}[k]} \abs{\s{Y}_i[k]} \cos \left( \frac{2\pi kr}{d} + \phi_{\s{Y}_i}[k] - \phi_{\s{X}}[k] \right),
        \label{eqn:app_H13_2}
    \end{align}
    for $0\leq r \leq d-1$, and
    \begin{align}
        \s{\Sigma}_{k,i}[r,s] & \triangleq \mathbb{E}\left[ \left(\s{S}_i[r] - \mathbb{E}\s{S}_i[r] \right) \left( \s{S}_i[s] - \mathbb{E}\s{S}_i[s] \right)\vert\s{Y}_i[k] \right] \nonumber\\ 
        & = \sum_{\ell = 0}^{d-1} \frac{\lambda_\ell}{2} \cdot |\widetilde{\s{X}}_k[\ell]|^2 \cos \left( \frac{2\pi \ell}{d}(r-s) \right), \label{eqn:covarainceMatrixCirculant}
    \end{align}
    for $0\leq r,s\leq d-1$, where $\widetilde{\s{X}}_k$ is defined by:
    \begin{align}
        \widetilde{\s{X}}_k[\ell] \triangleq  \begin{cases}
                0  & \s{if}  \ell = k, d-k, \\
             \s{X}[\ell]  & \s{if}  \ell = 0, d/2, \\
                \sqrt{2}\cdot\s{X}[\ell]  & \s{otherwise}.
          \end{cases}
           \label{eqn:XtildeDefenitionCirculant}
    \end{align}
\end{lem}

Note that the conditional process $\s{S}_i\vert\s{Y}_i[k]$ is Gaussian because it is given by a linear transform of i.i.d. Gaussian variables. Also, since its covariance matrix is circulant and depends only on the difference between the two indices, i.e., $\s{\Sigma}_{k,i}[r,s] = \sigma_{k,i}[\abs{r-s}]$, it is cycle-stationary with a cosine trend. The eigenvalues of this circulant matrix are given by the DFT of its first row, and thus its $\ell$-th eigenvalue equals $\lambda_\ell \cdot |\widetilde{\s{X}}_k[\ell]|^2$, for $0 \leq \ell\leq d-1$.

\begin{remark}
    When the noise is i.i.d. Gaussian, that is, $y_i \sim \mathcal{N}(0, \sigma^2 I_{d \times d})$, the eigenvalues of the covariance matrix satisfy $\lambda_\ell = \sigma^2$ for all $\ell \in \{0, 1, \ldots, d-1\}$. In this case, the general setting reduces to the one considered in Lemma~\ref{lemma:conditioning}, thereby recovering its result.
\end{remark}

\begin{proof}[Proof of Lemma~\ref{lemma:H1}]
We recall that if $\ppp{y_i}_{i=0}^{M-1} \sim \mathcal{N}(0, \Sigma)$, for symmetric circulant matrix $\Sigma$, then their DFT coefficients satisfy ${\{|\s{Y}_i[k]|}\}_{k=0}^{d/2}$, and $\{\phi_{\s{Y}_i}[k]\}_{k=0}^{d/2}$ are independent and $\{\phi_{\s{Y}_i}[k]\}_{k=0}^{d-1} \sim \s{Unif}[-\pi, \pi)$.
    By definition of $\s{S}_i$ \eqref{equ:maxGaussDefCirculant}, we have for every $k \neq 0, d/2$,
    \begin{align}
        \nonumber \s{S}_i\pp{r}\vert \s{Y}_i[k] = & 2\abs{\s{X}[k]} \abs{\s{Y}_i[k]} \cos \left( \frac{2\pi kr}{d} + \phi_{\s{Y}_i}[k] - \phi_{\s{X}}[k] \right) 
        \\ & +\sum_{\ell\neq k, d-k} {\abs{\s{X}[\ell]} \abs{\s{Y}_i[\ell]} \, \cos \left( \frac{2\pi \ell r}{d} +  \phi_{\s{Y}_i}[\ell] - \phi_{\s{X}}[\ell] \right)}, \label{eqn:app_H16_1}
    \end{align}
    where we have used the property of $\s{X}[k]=\overline{\s{X}[d-k]}, \  
    \s{Y}_i[k]=\overline{\s{Y}_i[d-k]}$. 
    Clearly, as $\mathbb{E} \pp{\s{Y}_i \pp{\ell}} = 0$, for every $0 \leq \ell \leq d-1$, we have,
    \begin{align}
        \mathbb{E} \pp{{\abs{\s{X}[\ell]} \abs{\s{Y}_i[\ell]} \, \cos \left( \frac{2\pi \ell r}{d} +  \phi_{\s{Y}_i}[\ell] - \phi_{\s{X}}[\ell] \right)}} = 0, \label{eqn:app_H16_2}
    \end{align}
    for every $0 \leq \ell \leq d-1$. Combining \eqref{eqn:app_H16_1} and \eqref{eqn:app_H16_2} results,
    \begin{align}
        \mu_{k,i}[r] & = \mathbb{E}\left[ \s{S}_i[r]\vert\s{Y}_i[k] \right]= 2\abs{\s{X}[k]} \abs{\s{Y}_i[k]} \cos \left( \frac{2\pi kr}{d} + \phi_{\s{Y}_i}[k] - \phi_{\s{X}}[k] \right),
    \end{align}
    proving the first result concerning the means.

    \paragraph{The covariance term.}
    In the following, we derive the covariance term,
    \begin{align}
        \s{\Sigma}_{k,i}[r,s] & \triangleq \mathbb{E}\left[ \left(\s{S}_i[r] - \mathbb{E}\s{S}_i[r] \right) \left( \s{S}_i[s] - \mathbb{E}\s{S}_i[s] \right)\vert\s{Y}_i[k] \right].
    \end{align}
    Let,
    \begin{align}
        \nonumber \rho_{k,i} \pp{r} &  \triangleq \s{S}_i[r] - \mathbb{E}\s{S}_i[r]  
        \\ & = \sum_{\ell\neq k, d-k} {\abs{\s{X}[\ell]} \abs{\s{Y}_i[\ell]} \, \cos \left( \frac{2\pi \ell r}{d} +  \phi_{\s{Y}_i}[\ell] - \phi_{\s{X}}[\ell] \right)},
        \label{eqn:app_H20}
    \end{align}
    and denote, 
    \begin{align}
        \mathcal{I} = \ppp{1,2, \dots k-1, k+1, \dots, d/2-1},
    \end{align}
    which is the indices of the Fourier coefficients, excluding $\ppp{0, k, d/2}$.  
    As the sequences $\ppp{|\s{Y}_i[\ell]|}_{\ell=0}^{d/2}$ and $\ppp{\phi_{\s{Y}_i}[\ell]}_{\ell=0}^{d/2}$ are statistically independent, and satisfy $\s{Y}_i[\ell]=\overline{\s{Y}_i[d-\ell]}$ and $\s{X}[\ell]=\overline{\s{X}[d-\ell]}$, we have,
    \begin{align}
        \nonumber  \rho_{k,i} \pp{r}= \sum_{\ell\neq k, d-k} & {\abs{\s{X}[\ell]} \abs{\s{Y}_i[\ell]} \, \cos \left( \frac{2\pi \ell r}{d} +  \phi_{\s{Y}_i}[\ell] - \phi_{\s{X}}[\ell] \right)} 
         = \\ = & \nonumber \sum_{\ell \in \ppp{0,d/2}} \abs{\s{X}[\ell]}\abs{\s{Y}_i[\ell]} \, \cos \left( \frac{2\pi \ell r}{d} +  \phi_{\s{Y}_i}[\ell] - \phi_{\s{X}}[\ell] \right) \\ & + 2 \cdot  \sum_{\ell \in \mathcal{I}} \abs{\s{X}[\ell]}\abs{\s{Y}_i[\ell]} \, \cos \left( \frac{2\pi \ell r}{d} +  \phi_{\s{Y}_i}[\ell] - \phi_{\s{X}}[\ell] \right),
        \label{eqn:app_H25}
    \end{align}
    where each one of the terms in the sum is independent.    
    Since the terms in the sum on the r.h.s. of \eqref{eqn:app_H25} are independent (i.e., $\mathbb{E} \pp{\s{Y}_i\pp{\ell_1} \overline{\s{Y}_i\pp{\ell_2}}} = \mathbb{E}\pp{\abs{\s{Y}_i\pp{\ell_1}}^2} \delta_{\ell_1,\ell_2}$), it follows that,    
    \begin{align}
        \nonumber\s{\Sigma}_{k,i}&[r,s] =  \mathbb{E} \pp{\rho_{k,i} \pp{r}\rho_{k,i} \pp{s} \vert \s{Y}_i[k] } 
        \\ & = \nonumber \mathbb{E} \pp{{\sum_{\ell \in \ppp{0, d/2}} {\abs{\s{X}[\ell]}^2 \abs{\s{Y}_i[\ell]}^2 \, \cos \left( \frac{2\pi \ell r}{d} +  \phi_{\s{Y}_i}[\ell] - \phi_{\s{X}}[\ell] \right) \cos \left( \frac{2\pi \ell s}{d} +  \phi_{\s{Y}_i}[\ell] - \phi_{\s{X}}[\ell] \right)}}}
        \\ & + 4 \cdot \mathbb{E} \pp{{\sum_{\ell \in \mathcal{I}} {\abs{\s{X}[\ell]}^2 \abs{\s{Y}_i[\ell]}^2 \, \cos \left( \frac{2\pi \ell r}{d} +  \phi_{\s{Y}_i}[\ell] - \phi_{\s{X}}[\ell] \right) \cos \left( \frac{2\pi \ell s}{d} +  \phi_{\s{Y}_i}[\ell] - \phi_{\s{X}}[\ell] \right)}}}. \label{eqn:app_H23}
    \end{align}
    The expectation value in \eqref{eqn:app_H23} is composed of the multiplications of cosines. Applying trigonometric identities, we obtain,
    \begin{align}
        \nonumber \cos & \left( \frac{2\pi \ell r}{d} + \phi_{\s{Y}_i}[\ell] - \phi_{\s{X}}[\ell] \right)\cos \left( \frac{2\pi \ell s}{d} + \phi_{\s{Y}_i}[\ell] - \phi_{\s{X}}[\ell] \right) 
        \\ & = \frac{1}{2}\cos \left( \frac{2\pi \ell (r-s)}{d} \right) + \frac{1}{2}\cos \left( \frac{2\pi \ell (r+s)}{d} + 2 \p{\phi_{\s{Y}_i}[\ell] - \phi_{\s{X}}[\ell]} \right), \label{eqn:trigonometricIdentity2}
    \end{align}
    for every $0 \leq r, s \leq d-1$. Now, since the sequences $\ppp{|\s{Y}_i[\ell]|}_{\ell=0}^{d/2}$ and $\ppp{\phi_{\s{Y}_i}[\ell]}_{\ell=0}^{d/2}$ are independent random variables, with $\mathbb{E} \pp{\abs{\s{Y}_i[k]}^2} = \lambda_k$ and phases $\phi_{\s{Y}_i}[k]$ uniformly distributed over $[-\pi, \pi)$, and by applying the trigonometric identity \eqref{eqn:trigonometricIdentity2}, it follows that,
    \begin{align}
        \nonumber \mathbb{E} & \pp{\abs{\s{Y}_i[\ell]}^2 \cos \left( \frac{2\pi \ell r}{d} + \phi_{\s{Y}_i}[\ell] - \phi_{\s{X}}[\ell] \right) \cos \left( \frac{2\pi \ell s}{d} + \phi_{\s{Y}_i}[\ell] - \phi_{\s{X}}[\ell] \right)}
        \\ & = \frac{1}{2}\mathbb{E} \pp{\abs{\s{Y}_i[\ell]}^2} \cos \left( \frac{2\pi \ell (r-s)}{d} \right) = \frac{\lambda_\ell}{2} \cos \left( \frac{2\pi \ell (r-s)}{d} \right). \label{eqn:app_H25_2}
    \end{align}
    Substituting \eqref{eqn:app_H25_2} into \eqref{eqn:app_H23} leads to,
    \begin{align}
         \nonumber \mathbb{E} & \pp{\rho_{k,i} \pp{r}\rho_{k,i} \pp{s} \vert \s{Y}_i[k] } =  {{\sum_{\ell \in \ppp{0, d/2}} \frac{\lambda_\ell}{2} \cdot  {\abs{\s{X}[\ell]}^2   \, \cos \left( \frac{2\pi \ell}{d}(r-s) \right)}}} 
        \\ & + 4 \cdot {{\sum_{\ell \in \mathcal{I}} \frac{\lambda_\ell}{2} \cdot {\abs{\s{X}[\ell]}^2  \, \cos \left( \frac{2\pi \ell}{d}(r-s) \right)}}}. \label{eqn:app_H26}
    \end{align}
    As for every $\ell \in \mathcal{I}$, $\abs{\s{X}[\ell]} = \abs{\s{X}[d-\ell]}$, we have,
    \begin{align}
        4 \cdot {{\sum_{\ell \in \mathcal{I}} { \frac{\lambda_\ell}{2} \abs{\s{X}[\ell]}^2 \, \cos \left( \frac{2\pi \ell}{d}(r-s) \right)}}} = 2 {{\sum_{\ell \neq \ppp{0,k,d/2,d-k}} { \frac{\lambda_\ell}{2} \abs{\s{X}[\ell]}^2  \, \cos \left( \frac{2\pi \ell}{d}(r-s) \right)}}}. \label{eqn:app_H27}
    \end{align}
    Substituting \eqref{eqn:app_H27} into \eqref{eqn:app_H26}, we get,
    \begin{align}
        \mathbb{E} & \pp{\rho_{k,i} \pp{r}\rho_{k,i} \pp{s} \vert \s{Y}_i[k] } =  \sum_{\ell = 0}^{d-1} \frac{\lambda_\ell}{2} \cdot |\widetilde{\s{X}}_k[\ell]|^2 \cos \left( \frac{2\pi \ell}{d}(r-s) \right),
    \end{align}
    for $\widetilde{\s{X}}_k[\ell]$ as defined in \eqref{eqn:XtildeDefenitionCirculant}, which completes the proof.
\end{proof}

\subsection{Convergence of the Fourier phases} \label{sec:imaginartPartVanishingCirculant}

Similarly to Appendix \ref{sec:imaginartPartVanishing} and Lemma \ref{lemma:A1}, we show here that the imaginary part in \eqref{eqn:app_H7} vanishes. The key observation is that ${\{|\s{Y}_i[k]|}\}_{k=0}^{d-1}$, and $\{\phi_{\s{Y}_i}[k]\}_{k=0}^{d-1}$ are statistically independent and $\{\phi_{\s{Y}_i}[k]\}_{k=0}^{d-1} \sim \s{Unif}[-\pi, \pi)$.

\begin{lem} \label{lemma:H3}
    Recall the definition of $\phi_{e,i}[k]$ in \eqref{eqn:app_H6}. Then,
    \begin{align}
        \mu_{\s{A},k} =\mathbb{E}\left[ \abs{\s{Y}_1[k]} \sin (\phi_{e,1}[k]) \right] = 0,
    \end{align}
    for every $0 \leq k \leq d-1$.
\end{lem}
\begin{proof}[Proof of Lemma~\ref{lemma:H3}]
Let $\s{D}[k]\triangleq\phi_{\s{X}}[k] - \phi_{\s{Y}_1}[k]$, and recall the definition of $\s{\hat{R}}_i$ in \eqref{eqn:OptShiftFourier}, i.e.,
  \begin{align}
    \s{\hat{R}}_i = \underset{0\leq r\leq d-1}{\argmax} \ \sum_{k=0}^{d-1} {\abs{\s{X}[k]} \abs{\s{Y}_i[k]} \, \cos \left( \frac{2\pi kr}{d} +  \phi_{\s{Y}_i}[k] - \phi_{\s{X}}[k] \right)}.
\end{align}
Note that $\s{\hat{R}}_i$ is a function of
\begin{align}
    \s{\hat{R}}_i = \s{\hat{R}}_i \p{\ppp{|\s{Y}_i[k]|}_{k=0}^{d-1}, \ppp{|\s{X}[k]|}_{k=0}^{d-1}, \ppp{\phi_{\s{Y}_i}[k]}_{k=0}^{d-1}, \ppp{\phi_{\s{X}}[k]}_{k=0}^{d-1}},
\end{align}
and it depends on $\phi_{\s{Y}_i}[k]$ and $\phi_{\s{X}}[k]$ only through $\s{D}[k]$. Accordingly, viewing $\s{\hat{R}}_1$ as a function of $\s{D}[k]$, for fixed $\ppp{|\s{Y}_i[k]|}_{k=0}^{d-1}, \ppp{|\s{X}[k]|}_{k=0}^{d-1}$,  we have,
    \begin{align}
        \s{\hat{R}}_1 \left(-\s{D}[0],-\s{D}[1],\ldots,-\s{D}[d-1]\right)  = -\s{\hat{R}}_1 \left(\s{D}[0],\s{D}[1],\ldots,\s{D}[d-1]\right).
        \label{eqn:app_H14}
    \end{align}
Namely, from symmetry arguments, by flipping the signs of all the phases, the location of the maximum flips its sign as well. Then, by the law of total expectation,
\begin{align}
         \mu_{\s{A},k} &= \mathbb{E} \left[ \abs{\s{Y}_1[k]} \sin \left(\frac{2\pi k}{d}\s{\hat{R}}_1 + \phi_{\s{Y}_1}[k] - \phi_{\s{X}}[k]  \right)  \right] \nonumber \\ 
         & = \mathbb{E} \left\{\abs{\s{Y}_1[k]}\cdot\left. \mathbb{E}\left[\sin \left(\frac{2\pi k}{d}\s{\hat{R}}_1 + \phi_{\s{Y}_1}[k] - \phi_{\s{X}}[k]  \right)\right|\{\abs{\s{Y}_1[k]}\}_{k=0}^{d-1}  \right] \right\}.
         \label{eqn:app_H15}
\end{align}
The inner expectation in \eqref{eqn:app_H15} is taken w.r.t. the uniform randomness of the phases $\{\phi_{\s{Y}_1}[k]\}_{k=0}^{d-1} \in [-\pi, \pi)$. However, due to \eqref{eqn:app_H14}, and since the sine function is odd around zero, the integration in~\eqref{eqn:app_H15} nullifies. Therefore,
    \begin{align}
        \left. \mathbb{E}\left[\sin \left(\frac{2\pi k}{d}\s{\hat{R}}_1 + \phi_{\s{Y}_1}[k] - \phi_{\s{X}}[k]  \right)\right|\{\abs{\s{Y}_1[k]}\}_{k=0}^{d-1}  \right] = 0,
        \label{eqn:muAisZeroCirculant}
    \end{align}
and thus $\mu_{\s{A},k}=0$. 
\end{proof}

\subsection{Convergence to non-vanishing signal} \label{sec:realPartGreaterThan0Circulant}

In analogy with Appendix~\ref{sec:realPartGreaterThan0} and Proposition~\ref{prop:2}, we now establish that the real part of~\eqref{eqn:app_H7} does not vanish.

\begin{proposition}\label{prop:H2}
Recall the definition of $\phi_{e,i}[k]$ in \eqref{eqn:app_H6}. Fix $d\in\mathbb{N}$, and assume that $\s{X}[k] \neq 0$ for all $0< k \leq d-1$. Then, for any $0\leq k\leq d-1$,
    \begin{align}   
        \mu_{\s{B},k} \triangleq \mathbb{E}{\left[ \abs{\s{Y}_1[k]} \cos(\phi_{e,1}[k]) \right]}>0.
        \label{eqn:app_H17}
    \end{align}
\end{proposition}

\begin{proof}[Proof of Proposition~\ref{prop:H2}]

By the law of total expectation, we have,
\begin{align}
    \mathbb{E}{\left[ \abs{\s{Y}_1[k]} \cos(\phi_{e,1}[k]) \right]} &= \mathbb{E}\pp{\left.\abs{\s{Y}_1[k]} \cdot\mathbb{E}\left( \cos(\phi_{e,1}[k])\right|\s{Y}_1[k] \right)}\nonumber\\
    & = \mathbb{E}\pp{\left.\abs{\s{Y}_1[k]} \cdot\mathbb{E}\left( \cos\p{\frac{2\pi k\s{\hat{R}}_1}{d} +  \phi_{\s{Y}_1}[k] - \phi_{\s{X}}[k]}\right|\s{Y}_1[k] \right)}.
\end{align}
More explicitly, we can write,
    \begin{align}
         &\mathbb{E}{\left[ \abs{\s{Y}_1[k]} \cos(\phi_{e,1}[k]) \right]} = \nonumber \\ 
         & \frac{1}{2\pi} \int_{0}^{\infty} \mathrm{d}y \ y f_{\abs{{\s{Y}_1[k]}}}(y)  \int_{-\pi}^{\pi} \mathrm{d}\varphi \mathbb{E} \left[\left.\cos \left( \frac{2\pi k}{d}\hat{\s{R}}_1 + \varphi \right)\right|\abs{\s{Y}_1[k]} = y, \phi_{\s{Y}_1}[k] = \phi_{\s{X}}[k] + \varphi \right].\label{eqn:app_H19}
\end{align}
Now, note that the inner integral can be written as,
\begin{align}
         &\int_{-\pi}^{\pi} \mathrm{d}\varphi \mathbb{E} \left[\left.\cos \left( \frac{2\pi k}{d}\hat{\s{R}}_1 + \varphi \right)\right|\abs{\s{Y}_1[k]} = y, \phi_{\s{Y}_1}[k] = \phi_{\s{X}}[k] + \varphi \right]\nonumber\\
         &\qquad=\int_{0}^{\pi} \mathrm{d}\varphi \ \mathbb{E} \left[\left.\cos \left( \frac{2\pi k}{d}\hat{\s{R}}_1 + \varphi \right)\right|\abs{\s{Y}_1[k]}=y, \phi_{\s{Y}_1}[k] = \phi_{\s{X}}[k] + \varphi \right] + \nonumber \\ 
         &\qquad\quad +\int_{0}^{\pi} \mathrm{d}\varphi \  \mathbb{E} \left[\left.\cos \left( \frac{2\pi k}{d}\hat{\s{R}}_1 + \varphi + \pi \right)\right|\abs{\s{Y}_1[k]}=y, \phi_{\s{Y}_1}[k] = \phi_{\s{X}}[k] + \varphi + \pi \right].     \label{eqn:lawOfTotalExpectationInnerIntegralCirculant}
    \end{align}

Next, we apply Proposition \ref{lemma:1}. Using its notation, we define the Gaussian process,
\begin{align}  
    \s{S}^{(+)} = \s{S}_1 \vert \s{Y}_1 [k],  
\end{align}  
where the r.h.s. follows from \eqref{eqn:conditionalGaussianCirculant}. By \eqref{eqn:app_H13_2}, the mean vector of $\s{S}_1 \vert \s{Y}_1 [k]$ has a cosine trend, as assumed in Proposition \ref{lemma:1} in \eqref{eqn:35}. Additionally, $\s{S}_1 \vert \s{Y}_1 [k]$ is a Gaussian cyclo-stationary process, as described in \eqref{eqn:covarainceMatrixCirculant}. The final condition to verify is~\eqref{eq:pos_prob_Cr_assumption}, which is satisfied by Lemma~\ref{lem:positive_prob_each_argmax_conditional} (and applies also to circulant Gaussian noise statistics as well).

Since the conditional distribution of $\hat{\s{R}}_1$ given $\{\abs{\s{Y}_1[k]}=y, \phi_{\s{Y}_1}[k] = \phi_{\s{X}}[k] + \varphi\}$ matches that of $\hat{\s{R}}^{(+)}$ in \eqref{eqn:27}, and similarly, given $\{\abs{\s{Y}_1[k]}=y, \phi_{\s{Y}_1}[k] = \phi_{\s{X}}[k] + \varphi+\pi\}$, it matches $\hat{\s{R}}^{(-)}$ in \eqref{eqn:28}, the sum of the integrands on the right-hand side of \eqref{eqn:lawOfTotalExpectationInnerIntegralCirculant} equals the left-hand side of \eqref{eqn:lemma1conc}. By Proposition \ref{lemma:1}, this sum is positive for all $\varphi \in [0, \pi]$. Together with \eqref{eqn:app_H19}, this completes the proof of Proposition~\ref{prop:H2}.  

\end{proof}

\subsection{Proof of Proposition \ref{prop:circulantGauusianNoise}} \label{sec:proofOfcirculantGauusianNoise}
We are now ready to prove Proposition \ref{prop:circulantGauusianNoise}.
By the definition of the phase difference between the template $x$ and the EfN estimator $\hat{x}$ (as in \eqref{eqn:estimatorPhase}), we have,
\begin{align}
    \phi_{\s{\hat{X}}}[k] - \phi_{\s{X}}[k] =  \arctan \left( \frac{\sum_{i=0}^{M-1}\abs{\s{Y}_i[k]} \sin \left( \phi_{e,i}[k] \right)}{\sum_{i=0}^{M-1}\abs{\s{Y}_i[k]} \cos \left( \phi_{e,i}[k] \right)} \right ),
\end{align}
Using the continuous mapping theorem, it is evident that it suffices to prove that,
\begin{align}
    \frac{\sum_{i=0}^{M-1}\abs{\s{Y}_i[k]} \sin \left( \phi_{e,i}[k] \right)}{\sum_{i=0}^{M-1}\abs{\s{Y}_i[k]} \cos \left( \phi_{e,i}[k] \right)}\xrightarrow[]{\s{a.s.}} 0.
\end{align}
This, however, follows by applying the SLLN,
\begin{align}
   \frac{\sum_{i=0}^{M-1}\abs{\s{Y}_i[k]} \sin \left( \phi_{e,i}[k] \right)}{\sum_{i=0}^{M-1}\abs{\s{Y}_i[k]} \cos \left( \phi_{e,i}[k] \right)}
    \xrightarrow[]{\s{a.s.}} \frac{\mu_{\s{A},k}}{\mu_{\s{B},k}},
    \label{eqn:lawOfLargeNumberPhaseRatioCirc}
\end{align}
where $\mu_{\s{A},k} \triangleq \mathbb{E}\left[ \abs{\s{Y}_1[k]} \sin (\phi_{e,1}[k]) \right]$ and $\mu_{\s{B},k} \triangleq \mathbb{E}\left[ \abs{\s{Y}_1[k]} \cos (\phi_{e,1}[k]) \right]$, defined in \eqref{eqn:app_H10}, and \eqref{eqn:app_H11}, respectively. By Lemma \ref{lemma:H1}, $\mu_{\s{A},k} = 0$, while by Proposition \ref{prop:H2}, we have that $\mu_{\s{B},k} > 0$, and thus their ratio converges a.s. to zero by the continuous mapping theorem. Thus, we proved that $\phi_{\s{\hat{X}}}[k] \xrightarrow[]{\s{a.s.}} \phi_{\s{X}}[k]$. Finally, we prove the convergence rate, given in \eqref{eqn:asymptoticComnvergenceOfPhases3}. According to Proposition \ref{prop:convergeneInExpectation}, whose assumptions apply for the case of circulant Gaussian noise statistics as well, we have,
    \begin{align}
       \lim_{M\to\infty} \frac{\mathbb{E} |\phi_{\s{\hat{X}}}[k] - \phi_{\s{X}}[k]|^2}{1/M} = \frac{\mathbb{E}\p{\left[ \abs{\s{Y}_1[k]} \sin(\phi_{e,1}[k]) \right]^2}}{{\mathbb{E}{\left[ \abs{\s{Y}_1[k]} \cos(\phi_{e,1}[k]) \right]}}^2} < \infty,
    \end{align}
which completes the proof of the Proposition.

\end{appendices}
\end{document}